\documentclass[pra,
aps,
onecolumn,
superscriptaddress,
nofootinbib,
floatfix,
10pt
]{revtex4-2}

\pdfoutput=1

\usepackage{soul}
\usepackage[labelfont=bf,font=small,singlelinecheck=false,justification=justified,format=plain]{caption}
\usepackage{physics}
\usepackage{amsmath,amsthm,mathtools,amssymb}
\usepackage{dsfont}
\usepackage{graphicx}
\usepackage{bm}
\usepackage{bbm}
\usepackage{multirow}
\usepackage{cancel}
\usepackage{graphicx}
\usepackage[table,xcdraw]{xcolor}
\usepackage{enumitem}
\usepackage{tensor}
\usepackage{tikz}
\usepackage{pgfplots}
\pgfplotsset{compat=1.18}
\usetikzlibrary{quantikz2}
\usepgfplotslibrary{fillbetween}
\pgfmathdeclarefunction{h}{1}{\pgfmathparse{0.8*sqrt(#1)}}
\usepackage[colorlinks=true,allcolors=teal]{hyperref}
\usepackage[capitalize]{cleveref}
\usepackage{blindtext}
\usepackage{titlesec}

\DeclareFontFamily{U}{MnSymbolA}{}
\DeclareFontShape{U}{MnSymbolA}{m}{n}{
    <-6>  MnSymbolA5
   <6-7>  MnSymbolA6
   <7-8>  MnSymbolA7
   <8-9>  MnSymbolA8
   <9-10> MnSymbolA9
  <10-12> MnSymbolA10
  <12->   MnSymbolA12}{}
\newcommand{\MnArrow}[1]{{\usefont{U}{MnSymbolA}{m}{n}\symbol{#1}}}

\newcommand{\Ish}{\!\!\rotatebox[origin=c]{135}{\MnArrow{194}}} %
\newcommand{\Dsh}{\!\!\rotatebox[origin=c]{45}{\MnArrow{184}}}  %

\crefname{table}{Table}{Tables}
\crefname{equation}{Eq.}{Eqs.}
\crefname{figure}{Figure}{Figures}
\crefname{section}{Sec.}{Secs.}
\crefname{appendix}{Appendix}{Appendices}
\crefname{proposition}{Prop.}{Props.}
\crefname{theorem}{Theorem}{Theorems}
\crefname{lemma}{Lemma}{Lemmas}
\crefname{corollary}{Corollary}{Corollaries}

\def\id{ {\mathbbm 1} }

\def\E{ {\mathcal E} }

\def\H{ {\mathcal H} }

\def\N{ {\mathcal N} }
\def\O{ {\mathcal O}_{q,q'} }

\def\S{ {\mathcal S} }

\def\U{ {\mathcal U} }

\def\bmo{ {\bm{0}} }

\def\bmd{ {\bm{d}} }

\newcommand{\fmax}{f^\text{max}_i}
\newcommand{\fmin}{f^\text{min}_i}
\newcommand{\fu}{F^U_i}
\newcommand{\fl}{F^L_i}

\newcommand{\qmax}{q^\text{max}_i}
\newcommand{\qmin}{q^\text{min}_i}
\newcommand{\qu}{Q^U_i}
\newcommand{\ql}{Q^L_i}

\newcommand{\amaxp}{A_1^\text{max}}
\newcommand{\amaxn}{A_2^\text{max}}

\newcommand{\concat}{\overset{\frown}{=}}

\newcommand{\declor}[1]{L^{\Dsh}_{#1}(s)}
\newcommand{\inclor}[1]{L^{\Ish}_{#1}(s)}

\newcommand{\qdeclor}[1]{L^{\Dsh}_{#1|q}(s)}
\newcommand{\qinclor}[1]{L^{\Ish}_{#1|q}(s)}
\newcommand{\qpdeclor}[1]{L^{\Dsh}_{#1|q'}(s)}
\newcommand{\qpinclor}[1]{L^{\Ish}_{#1|q'}(s)}

\newcommand{\qqdeclor}[2]{L^{\Dsh}_{#1|#2}(s)}

\newcommand{\qqqdeclor}[3]{L^{\Dsh}_{#1|#2}(#3)}
\newcommand{\qqqinclor}[3]{L^{\Ish}_{#1|#2}(#3)}

\newcommand{\weak}{\succ\!\succ}

\newtheorem{theorem}{Theorem}
\newtheorem{lemma}[theorem]{Lemma}
\newtheorem{corollary}{Corollary}

\newtheorem{definition}{Definition}

\newenvironment{mypsmallmatrix}
  {\left(\begin{smallmatrix}}
  {\end{smallmatrix}\right)}

\begin{document}

\title{Fundamental limits on photonic state preparation from phase-space majorization}

\author{Twesh Upadhyaya}
\email{tweshu@umd.edu}
\affiliation{Joint Center for Quantum Information and Computer Science, University of Maryland and NIST, College Park, MD 20742, USA}
\affiliation{Department of Physics, University of Maryland, College Park, MD 20742, USA}

\author{Oliver Hahn}
\affiliation{Department of Basic Science, The University of Tokyo, 3-8-1 Komaba, Meguro-ku, Tokyo 153-8902, Japan}

\author{Ulysse Chabaud}
\affiliation{QAT team, DIENS, \'Ecole Normale Sup\'erieure, PSL University, CNRS, INRIA, 45 rue d'Ulm, Paris 75005, France}

\author{Nikolaos Koukoulekidis}
\affiliation{Duke Quantum Center and Departments of Physics and Electrical and Computer Engineering, Duke University, Durham, NC, USA}

\date{October 27, 2026}

\begin{abstract}
Preparing complex quantum states of light for quantum information processing tasks often proceeds by converting from simpler states via experimentally accessible operations. We provide fundamental bounds on such probabilistic conversions, both exact and approximate. To do so, we establish a theory of weak, fully relative majorization that compares the disorder of phase-space functions. Our framework is versatile and can be tailored to different systems and sets of allowed operations. Focusing on Wigner functions in multimode photonic systems, we establish rigorous no-go results and tradeoff relations for exact and approximate conversions involving squeezed states, cat states, Gottesman--Kitaev--Preskill states, and Fock states. In doing so, we derive novel bounds and recover known tight ones. Our results demonstrate that phase-space disorder imposes strong limitations on photonic state preparation.
\end{abstract}

\maketitle

\section{Introduction}

Quantum information processing exploits features of quantum systems as resources for computational tasks~\cite{shor1994algorithms,grover1996,Harrow2017}, communication tasks~\cite{wiesnerconjugate1983,bennett2014quantum}, and sensing tasks~\cite{Degen2017} that are inaccessible, or substantially more costly, with classical systems. 
Photonic platforms are particularly appealing in this context: photons propagate rapidly; interact weakly with their environment; can encode information in several degrees of freedom, including continuous-variable ones; and can be generated, manipulated, and detected using increasingly large integrated devices~\cite{o2009photonic,wang2020integrated}. As a result, photonic quantum experiments have been at the forefront of landmark demonstrations of both fundamental and applied quantum protocols, including the violation of Bell inequalities~\cite{aspect1982experimental}, quantum teleportation~\cite{ursin2004quantum,jin2010experimental}, %
and quantum computational speedup~\cite{zhong2020quantum}, among others.

Harnessing the potential of photonic quantum technologies, however, requires the preparation of complex photonic quantum states. For instance, photonic quantum information processing limited to Gaussian states and operations cannot perform entanglement distillation~\cite{giedke2002characterization,eisert2002distilling,fiuravsek2002gaussian} or achieve quantum computational advantage~\cite{Bartlett2002}, let alone universal quantum computing~\cite{Lloyd1999,arzani2025effective}. Moreover, non-Gaussian quantum states such as cat states~\cite{guillaud2019repetition}, and Gottesman--Kitaev--Preskill (GKP) states~\cite{Gottesman2001} are instrumental for fault-tolerant photonic quantum computing~\cite{niset2009no,Baragiola_2019}.

The standard approach to photonic state preparation is to combine elementary input states using simple operations with conditional measurements. Experimentally accessible building blocks include coherent states, squeezed states, low-photon-number Fock states, and small-amplitude cat states. On the side of operations, passive linear optics and, more generally, Gaussian operations can be implemented with high control~\cite{lvovsky2020production}. At the measurement level, photon number measurements, homodyne or heterodyne detection and postselection then supply an effective nonlinearity, leading to a variety of state preparation protocols~\cite{ourjoumtsev2006generating,Ourjoumtsev2007generation,Etesse:14,Arzani2017polynomial,Weigand2018generating,zheng2020gaussian,hahn2022deterministic,pizzimenti2024optical,takase2024generation}.

Constructing a specific protocol to prepare a target quantum state provides a lower bound on the achievable preparation fidelity and success probability, but gives little information when the search for such a protocol fails. This raises the following question:
\begin{center}
   \textit{What are the fundamental limits of photonic state preparation?} 
\end{center}
A useful theory of state preparation should robustly identify impossible transformations and remain applicable when ancillary modes, heralding measurements, and experimental imperfections are included, over a variety of classes of experimentally accessible operations.

Several techniques have been developed to address this problem, each focused on a predetermined set of experimentally accessible operations. 
These include Lie-algebraic invariants for conversions with passive linear optics~\cite{parellada2023no,parellada2024lie,rodari2025observation,draux2025invariants}, and polynomial representation methods for states with few photons~\cite{de2024simple,vanmeter2007general,singh2026rigorous}.
For protocols built from Gaussian operations, the resource theory of quantum non-Gaussianity~\cite{takagi2018convex,albarelli2018resource} provides a different family of obstructions. In order to derive necessary conditions for state conversion using Gaussian operations, many resource measures have been introduced~\cite{takagi2018convex,albarelli2018resource,chabaud2020stellar,Hahn2026assessingnon,Hahn2025classicalsimulation}. Such Gaussian monotones lead to no-go theorems for exact and approximate state preparation using Gaussian protocols. Recently, \emph{majorization} of phase-space distributions~\cite{koukoulekidis2022constraints} has also been used to assess exact, deterministic state preparation~\cite{VanHerstraeten2023continuous,de2024continuous}, including a general framework in Ref.~\cite{Upadhyaya2026}. %

In this work, we develop a powerful methodology for assessing photonic state preparation via phase-space disorder. Unlike many previous works, our framework has the following desirable features, each enabled by a distinct technical contribution of this work: (i) The framework allows the study of several classes of probabilistic protocols, rather than being limited to a fixed set of operations, as we extend the phase-space majorization theory from Ref.~\cite{Upadhyaya2026} to the setting of \textit{weak and fully relative majorization}. (ii) The framework provides analytical and rigorous numerical limitations, as we develop techniques to analytically compute Lorenz curves and divergences and we introduce the concept of \textit{robust Lorenz curves} which provide certified constraints that account for numerical error. (iii) The framework allows one to establish limits on exact and approximate state preparation, as we derive explicit quantitative lower semicontinuity bounds for Wigner--Rényi divergences, that are also extensible to Wigner Lorenz curves.

We apply our framework to a range of linear-optical and Gaussian state-preparation protocols, including cat-state distillation, GKP magic-state preparation, and Fock-state concentration. In doing so, we develop new constraints for photonic state preparation and recover known tight no-goes. Our results are summarized in \cref{fig:summary}.

\begin{figure}
     \centering
     \includegraphics[width=\linewidth]{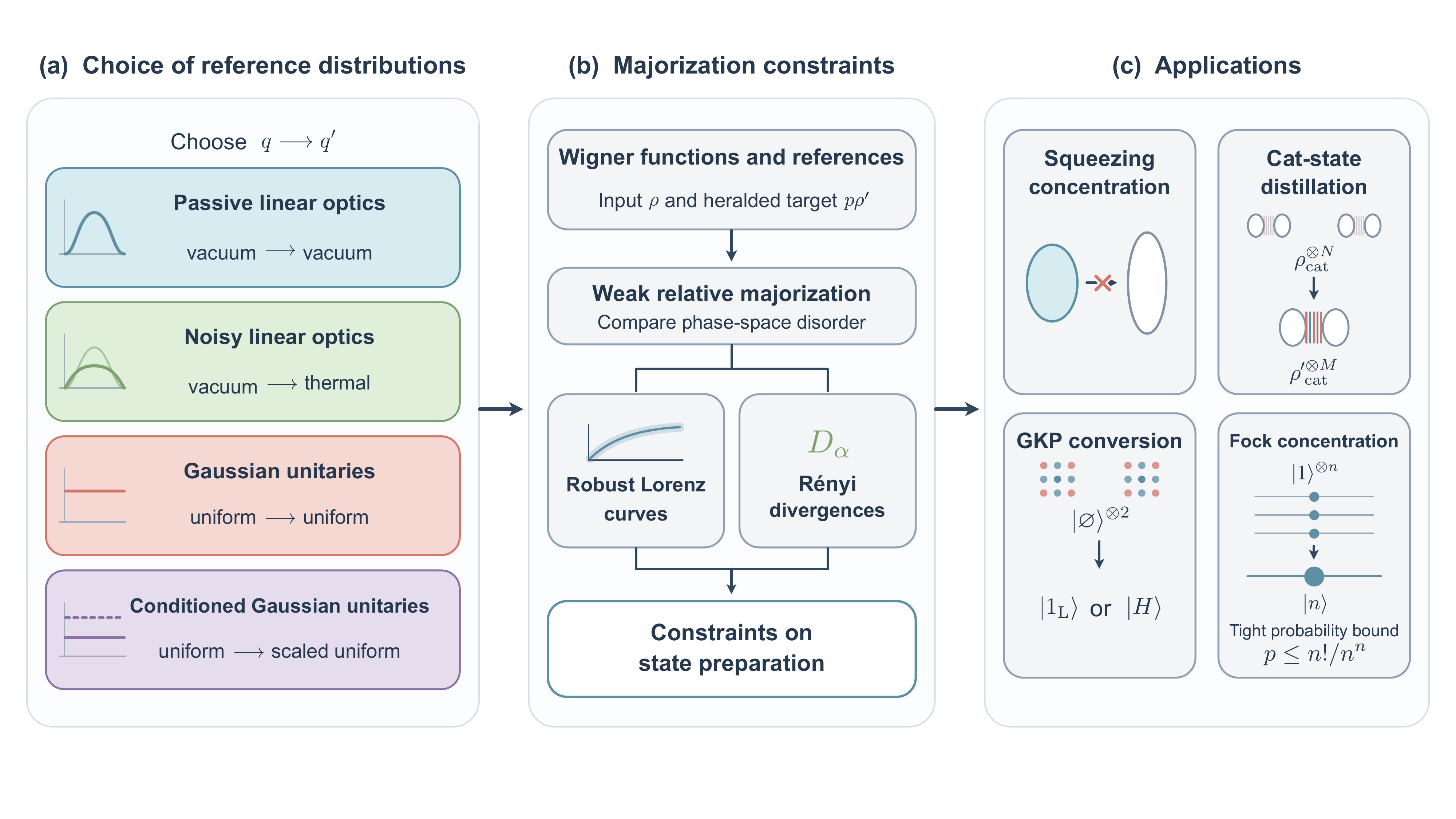}
     \caption{Schematic depiction of our majorization framework for assessing quantum state preparation. (a) We first pick input/output reference distributions $q$ and $q'$, which correspond to a class of protocols. (b) Then, we choose a state preparation scenario specified by a choice of input state $\rho$ and output state $p\rho'$, where $p$ is the success probability, and we apply our relative majorization results; these yield robust Lorenz curves and Wigner--Rényi divergences which allow us to deduce constraints on the state conversion $\rho\to\rho'$, based on the class of protocols analyzed. (c) We illustrate this framework with several applications, including robust bounds on success probability for scenarios involving cat states, GKP states and Fock states and reproducing known no-go results for Fock number and squeezing concentration.
     }
     \label{fig:summary}
\end{figure}

The paper is structured as follows: in \cref{sec:preliminaries}, we provide the technical background needed to appreciate the framework; in \cref{sec:majorization}, we relate majorization on quasiprobabilities to probabilistic exact and approximate state conversion tasks; in \cref{sec:robustcurves}, we introduce robust Lorenz curves; and in \cref{sec:conversion}, we apply the tools we have developed through divergences and Lorenz curves to assess several experimental scenarios of photonic state preparation.

\section{Preliminaries}
\label{sec:preliminaries}

In this section, we provide a short review of bosonic states and channels, including their phase-space representations, and majorization. For more background, we refer the reader to Refs.~\cite{Gerry2004,Weedbrook2012-qu,BUCCO} for the former and Refs.~\cite{marshall2011inequalities,Joe1990,Upadhyaya2026} for the latter.

\subsection{Bosonic quantum information theory}

We consider photonic/bosonic systems consisting of a finite number $N$ of modes. The modes represent spatial or temporal degrees of freedom~\cite{Weedbrook2012-qu}.
The Hilbert space of a single mode is
$\mathcal H= L^2(\mathbb R)$ and admits a countable orthonormal basis. Denote multimode Hilbert spaces by $\mathcal{H}_N \coloneqq {\mathcal H}^{\otimes N}$ and the set of density operators on them by $\S(\H_N)$. Each mode has a pair of quadrature (position and momentum) operators~\cite{Gerry2004}.

For a system of $N$ modes, let the vector of the $N$ quadrature operator pairs be $\mathbf{\hat{r}}=(\hat{x}_1,\ldots,\hat{x}_N,\hat{p}_1,\ldots,\hat{p}_N)^{\rm T}$.
The quadrature operators satisfy the canonical commutation relations,
\begin{equation} \label{eq:ccr}
    \left[\mathbf {\hat{r}} , \mathbf {\hat{r}}^{\rm T} \right] = \frac{i}{2}
    \begin{pmatrix} 
        0 & \id_N \\ 
        -\id_N & 0
    \end{pmatrix}
    \eqqcolon \frac{i}{2} \Omega
    \,,
\end{equation}
where $\Omega$ is the symplectic form. Note that we use the convention $\hbar=1/2$ in this work.
The annihilation and creation operators are
$\hat{a}_j=\hat{x}_j+i\hat{p}_j$ and $\hat{a}_j^\dagger=\hat{x}_j-i\hat{p}_j$ for $j=1,\dots,N$.
The photon number operators are $\hat{a}_j^\dagger \hat{a}_j$.

A bosonic Hamiltonian $H_G$ is called \emph{Gaussian} if it is a polynomial in the creation and annihilation operators of degree at most 2. \emph{Gaussian states} $\rho_G$ are the ground and thermal states of positive definite Gaussian Hamiltonians~\cite{BUCCO},
\begin{align}
    \rho_G= \frac{e^{-\beta H_G}}{\Tr{e^{-\beta H_G}}}.
\end{align}
Gaussian states are fully characterized by their first two moments, i.e. their mean displacement $\bmd$ and covariance $ V$, defined as
\begin{equation}\label{eq:moments}
 d_j \coloneqq  \expval{\hat r_j} \,,\qquad
 V_{jk} \coloneqq 2\left\langle
        \{\hat r_j-d_j,\hat r_k-d_k\}
      \right\rangle,
\end{equation}
where $\{\cdot,\cdot\}$ denotes the anticommutator and $ V$ satisfies the uncertainty relation $ V + i\Omega \geq 0$.
A unitary $U_G$ is called a \emph{Gaussian unitary} if it can be generated by a time-dependent Gaussian Hamiltonian.

A bosonic quantum channel from $N$ modes to $M$ modes is a completely positive, trace-preserving (CPTP) map
$\mathcal E:\mathcal S(\mathcal H_N)\to\mathcal S(\mathcal H_M)$. We will refer to completely positive, trace-non-increasing (CPTNI) maps as \emph{operations}. 
The subset of channels that take Gaussian states to Gaussian states are the  \emph{Gaussian channels}~\cite{lindblad2000cloning,holevo2001evaluating,Caruso2008-dt}. 
They are specified by a rescaling of the input quadratures $ X$, classical noise that is added to the covariance matrix $ Y$, and a phase-space displacement  $\bm \delta$. 
Define the displacement operators as
$D(\boldsymbol\gamma)
=\exp(-2i\boldsymbol\gamma^{\mathsf T}\Omega\hat{\mathbf r})$,
where $\boldsymbol\gamma\in\mathbb R^{2N}$.
The adjoint of a Gaussian channel maps displacement operators to
scalar multiples of displacement operators according to
\begin{equation}\label{eq:dual_channel}
    \mathcal E_G^\dagger\!\left(D_\text{out}(\boldsymbol\gamma)\right)
    =
    D_\text{in}\!\left(-\Omega_\text{in} X^{\mathsf T}\Omega_\text{out}\boldsymbol\gamma\right)
    \exp\!\left(
        -\frac12\boldsymbol\gamma^{\mathsf T}
        \Omega_\text{out}^{\mathsf T}Y\Omega_\text{out}\boldsymbol\gamma
        -2i\boldsymbol\gamma^{\mathsf T}\Omega_\text{out}\boldsymbol\delta
    \right).
\end{equation}
The matrices need to satisfy the condition $Y + i(\Omega_\text{out} - X \Omega_\text{in} X^{\rm T}) \ge 0$ to ensure complete positivity.
Equivalently, a Gaussian channel transforms the displacement vector $\bmd$ and covariance matrix $V$ of the input as  
\begin{equation}\label{eq:xychannel}
    \bmd \mapsto X\bmd + \bm{\delta} \,,\quad V \mapsto X V X^{\rm T} + Y \,.
\end{equation}
For $N=M$, Gaussian unitaries correspond to the case of $Y=0$; $X$ is then a symplectic matrix and we get the same class of unitaries as are generated by Gaussian Hamiltonians.

Gaussian measurements are described by positive operator-valued measures (POVMs) that yield Gaussian outcome distributions for every Gaussian input state~\cite{Weedbrook2012-qu}. Homodyne and heterodyne measurements are the two most common Gaussian measurements. Indeed, any Gaussian measurement can be realized via Gaussian unitaries, ancillary Gaussian states, and homodyne detection~\cite{Weedbrook2012-qu}.
For a single mode, ideal homodyne detection measures the rotated quadrature
\begin{equation}
    \hat x_\theta
    =\hat x\cos\theta+\hat p\sin\theta.
\end{equation}
Its POVM is the projection-valued spectral measure of $\hat x_\theta$. In terms of eigenstates of the rotated quadrature operators, this measure can be
written formally as
\begin{equation}
    M_\theta({\rm d}x)
    =|x;\theta\rangle\langle x;\theta|\,{\rm d}x.
\end{equation}
Introducing the complex outcome
\begin{equation}
    \gamma=y_x+i y_p,
\end{equation}
the heterodyne POVM takes the form
\begin{equation}
    M({\rm d}^2\gamma)
    =\frac{1}{\pi}|\gamma\rangle\langle\gamma|\,
      {\rm d}^2\gamma,
    \qquad
    \frac{1}{\pi}\int_{\mathbb C}
       |\gamma\rangle\langle\gamma|\,{\rm d}^2\gamma=\id ,
\end{equation}
where $\ket{\gamma}$ is a coherent state and ${\rm d}^2\gamma ={\rm d}(\operatorname{Re}\gamma)\,{\rm d}(\operatorname{Im}\gamma)$.
Thus, $\pi^{-1}|\gamma\rangle\langle\gamma|$ is the operator-valued
density of the heterodyne POVM with respect to ${\rm d}^2\gamma$. %

Instead of representing a quantum state as an operator on the Hilbert space $\mathcal{H}^{\otimes N}$, one can equivalently represent it as a real-valued function over a phase space $\mathbbm{R}^{2N}$. Perhaps the most ubiquitous example is the Wigner function~\cite{Wigner1932}. Let $\hat{\Pi}\coloneqq(-1)^{\sum_{i=1}^N \hat{n}_i}$ be the total parity operator and $\hat D(\bm x, \bm p) \coloneqq e^{2i(\bm p \bm{\hat{x}} - \bm x \bm{\hat{p}})}$ displacement operators. Define the displaced and scaled parity operators 
\begin{equation}\label{eq:Delta}
    \hat{\Delta}_{\rm W}(\bm{x},\bm{p})
    \coloneqq
    \left(\frac{2}{\pi}\right)^N \,D(\bm{x},\bm{p})
    \,\hat{\Pi}\,
    D^{\dagger}(\bm{x},\bm{p})
    \,.
\end{equation}
The Wigner function of $\rho$ is then given by the expectation values of displaced parity operators~\cite{Grossmann_1976, royer1977Wigner},
\begin{equation}\label{eq:def_wigner_function}
    W_{\rho}(\bm{x},\bm{p}) \coloneqq \Tr\big[\hat{\Delta}_{\rm W}(\bm{x},\bm{p})\;\rho\big]  %
    \,.
\end{equation}
The mapping is bijective, so that
\begin{equation}
    \rho = \pi^N \int W_{\rho}(\bm{x},\bm{p})  \hat{\Delta}_{\rm W}(\bm{x},\bm{p}) \ {\rm d}^N \bm{x} \ {\rm d}^N  \bm{p}.
\end{equation}
The Wigner function of a quantum state is normalized,
\begin{equation}
    \Tr(\rho) = \int W_{\rho}(\bm{x},\bm{p}) \ {\rm d}^N \bm{x} \ {\rm d}^N  \bm{p}.
\end{equation}
However, it does not always have a finite absolute integral. Our formalism is restricted to the subset of states that do have absolutely integrable Wigner functions, i.e. finite negative volume. These so-called Feichtinger states form a dense subset in trace norm in the space of all quantum states~\cite{gosson}.
The purity can be expressed as
\begin{equation}
  \Tr(\rho^2) = \pi^N \int W_{\rho}(\bm{x},\bm{p})^2 {\rm d}^N \bm{x} \ {\rm d}^N  \bm{p}
  \,.
\end{equation}

Quantum channels, or any CPTNI maps, from $N$ to $M$ modes admit a Wigner kernel,
\begin{equation}\label{wignerkernel}
K_{\Phi} (\bm{x},\bm{p},\bm{x}',\bm{p}') \coloneqq \pi^N \Tr( \hat{\Delta}_{\rm W}(\bm{x},\bm{p}) \Phi(\hat{\Delta}_{\rm W}(\bm{x}',\bm{p}')))
\,.
\end{equation}
This kernel allows us to write the action of the map on a state directly in terms of Wigner functions, 
\begin{equation}
W_{\Phi(\rho)}(\bm{x},\bm{p}) = \int_{\mathbbm{R}^{2N}} K_{\Phi} (\bm{x},\bm{p},\bm{x}',\bm{p}') W_{\rho}(\bm{x}',\bm{p}') {\rm d}^N \bm{x}' \ {\rm d}^N  \bm{p}'
\,.
\end{equation}
We use the shorthand $K_\Phi W_\rho$ to denote the action of the kernel as above.

More generally, one can consider the representation of states and channels in an entire family of $s$-ordered representations~\cite{cahill1969density}, including the Husimi $Q$ function~\cite{husimi1940some} and Sudarshan--Glauber $P$ function~\cite{Glauber_1963,Sudarshan_1963}.

\subsection{Majorization theory}

In this section, we provide the requisite background on majorization theory.  We state all definitions using the Wigner function for clarity of exposition, but we emphasize that the results hold more generally for all real-valued and integrable phase-space representations. The more general statements can be found in \cref{app:majproofs}. Majorization is a preorder on distributions based on how disordered they are~\cite{marshall2011inequalities} that has found many applications in quantum information theory~\cite{nielsen1999conditions,Vidal1999,Torun_Majorization_QRT_Review_2023}.

In the following, we restrict to Wigner functions whose absolute integral converges.
Let $\mathcal{I}$ denote an indicator function and denote the positive and negative parts of $f$ by $f^+\coloneqq\max\{f,0\}$ and $f^-\coloneqq\min\{f,0\}$. Given a Wigner function on $N$ modes, the distribution and codistribution functions of its positive and negative parts are
\begin{align}
    D_{W^+_{\rho}}(t)&\coloneqq \int \mathcal{I}\{W^+_{\rho}(\bm{x},\bm{p}) >t\}  {\rm d}^N \bm{x} \ {\rm d}^N  \bm{p},\\
    C_{W^-_{\rho}}(t)&\coloneqq \int \mathcal{I}\{W^-_{\rho}(\bm{x},\bm{p}) <t\} {\rm d}^N \bm{x} \ {\rm d}^N  \bm{p}.
\end{align}
These equal the area over which the function is greater or less than a certain value.
We can then define the decreasing and increasing rearrangements of a Wigner function as
\begin{align}
    {W^+_{\rho}}^\downarrow(u) &\coloneqq \inf\{ t: D_{W^+_{\rho}}(t)\leq u \} \,, \\
    {W^-_{\rho}}^\uparrow(u) &\coloneqq \sup\{ t: C_{W^-_{\rho}}(t)\leq u \} \,.
\end{align}
Informally these are the reorderings of the function's values in decreasing or increasing order arranged on a 1D line indexed by the accumulated volume, regardless of the original space the function was defined on.
Finally, from the rearrangements we define the positive and negative Lorenz curves,
\begin{align}
    L^{\Dsh}_{W_{\rho}}(s)\coloneqq \int_0^s (W^+_{\rho})^\downarrow(u) {\rm d}u \, , \\ 
    L^{\Ish}_{W_{\rho}}(s)\coloneqq \int_0^s (W^-_{\rho})^\uparrow(u) {\rm d}u \, ,
\end{align}
where the argument $s$ is in $[0,\infty)$. The shape of the Lorenz curves captures how quickly a function accumulates its positive and negative values. Two functions that take the same values just at different points in phase space have identical Lorenz curves. For example, the Wigner Lorenz curves of the vacuum and of any coherent state are the same. In \cref{fig:illus}(a), we plot the Lorenz curves of $W_{\ket{0}}$, $W_{\ket{1}}$, and $W_{\ket{\gamma=1.2}_{\text{cat}}}$.

\begin{figure}[h!]
    \centering
    \includegraphics[width=\linewidth]{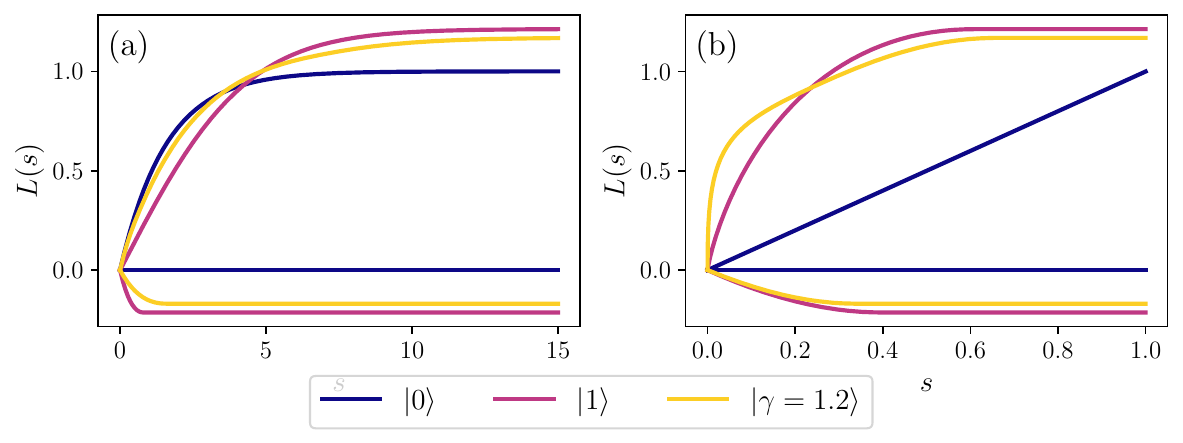}
\captionof{figure}{Wigner Lorenz curves of vacuum, the first Fock state, and a cat state for (a) regular majorization and (b) majorization relative to the Wigner function of vacuum.}
\label{fig:illus}
\end{figure}

Linear maps from functions, Wigner or otherwise, on $\mathbbm{R}^{2N}$ to those on $\mathbbm{R}^{2M}$ can be represented by integration kernels $S$. An integral kernel is positive if
\begin{equation}\label{def:positive}
    S(\bm{x},\bm{p},\bm{x}',\bm{p}') \geq 0
\end{equation}
for all points $(\bm{x},\bm{p},\bm{x}',\bm{p}')$. A positive integral kernel $S$ is substochastic if 
\begin{equation}\label{def:substochastic}
    \int_{\mathbbm{R}^{2M}}  S(\bm{x},\bm{p},\bm{x}',\bm{p}') {\rm d}^M \bm{x} \ {\rm d}^M  \bm{p} \leq 1
\end{equation}
for almost all $(\bm{x}',\bm{p}')$, and stochastic if the same holds with equality. A substochastic integral kernel is doubly substochastic if 
\begin{equation}\label{def:dblsubstochastic}
    \int_{\mathbbm{R}^{2N}}  S(\bm{x},\bm{p},\bm{x}',\bm{p}') {\rm d}^N \bm{x}' \ {\rm d}^N  \bm{p}' \leq 1
\end{equation}
for almost all $(\bm{x},\bm{p})$. All Gaussian channels~\cite{Cerf2007-chap2, Caruso2008-dt,Weedbrook2012-qu} have stochastic Wigner kernels, of the form
\begin{align}
K_{\Phi_G}(\bm x, \bm p,\bm x', \bm p') &= 
\frac{2^M}{\pi^M \sqrt{\det  Y}}\   
\exp\left[
-2\Big(
\begin{mypsmallmatrix}
    \bm{x}\\\bm{p}
\end{mypsmallmatrix}
-{X}
\begin{mypsmallmatrix}
\bm{x}'\\\bm{p}'
\end{mypsmallmatrix}
-\bm{\delta}
\right)^\intercal 
{Y}^{-1}
\left(
\begin{mypsmallmatrix}
    \bm{x}\\\bm{p}
\end{mypsmallmatrix}
-
{X}
\begin{mypsmallmatrix}
    \bm{x}'\\\bm{p}'
\end{mypsmallmatrix}
-\bm{\delta}
\Big)
\right]
\label{eq:gaussian_wigner_kernel}.
\end{align}
For $N=M$, the kernel is doubly substochastic when $\abs{\det X}\geq 1$.

The majorization setting we have thus described is that of regular majorization: distributions that are more spread out, approaching a uniform distribution over phase space, rank lower in the preorder. 
In particular, the comparison of distributions with different normalizations is referred to as \emph{weak} majorization~\cite{marshall2011inequalities}.
A more general majorization setting is that of \emph{relative} majorization. Here, distributions are compared relative to an arbitrary strictly positive reference distribution $q$, which need not be a physical Wigner function. Relative majorization with $q=1$ is equivalent to regular majorization. We now repeat the above definitions for weak, relative majorization. 

For an $N$-mode Wigner function $W_\rho$ and reference distribution $q$, the relative distribution and codistribution functions of its positive and negative parts are
\begin{align}
    D^q_{W^+_{\rho}}(t)&\coloneqq \int \mathcal{I}\{W^+_{\rho}(\bm{x},\bm{p}) >t q(\bm{x},\bm{p}) \} q(\bm{x},\bm{p}) {\rm d}^N \bm{x} \ {\rm d}^N  \bm{p},\\
    C^q_{W^-_{\rho}}(t)&\coloneqq \int \mathcal{I}\{W^-_{\rho}(\bm{x},\bm{p}) <t q(\bm{x},\bm{p})\} q(\bm{x},\bm{p})  {\rm d}^N \bm{x} \ {\rm d}^N  \bm{p}.
\end{align}
The relative decreasing and increasing rearrangements of a Wigner function are then
\begin{align}
    {W^+_{\rho}}^{\downarrow,q}(u) &\coloneqq \inf\{ t: D^q_{W^+_{\rho}}(t)\leq u \} \,, \\
    {W^-_{\rho}}^{\uparrow,q}(u) &\coloneqq \sup\{ t: C^q_{W^-_{\rho}}(t)\leq u \} \,.
\end{align}
Finally, from the rearrangements we define the positive and negative Lorenz curves,
\begin{align}
    L^{\Dsh}_{W_{\rho}|q}(s)\coloneqq \int_0^s (W^+_{\rho})^{\downarrow,q}(u) {\rm d}u \, , \\ 
    L^{\Ish}_{W_{\rho}|q}(s)\coloneqq \int_0^s (W^-_{\rho})^{\uparrow,q}(u) {\rm d}u \, .
\end{align}
$s$ is in the range $[0,\int q(\bm{x},\bm{p}) {\rm d}^N \bm{x} \ {\rm d}^N  \bm{p}]$, to be read as $[0,\infty)$ if the integral diverges. %
In \cref{fig:illus}(b), we plot the Wigner Lorenz curves of the same three states as above but now relative to $q=W_{\ket{0}}$.

A substochastic kernel $S$ is $(q,q')$-substochastic if $(Sq)(\bm{x},\bm{p}) \leq q'(\bm{x},\bm{p})$. %
It is useful to note that Lorenz curves obey the following rescaling relation for $c,p\in(0,1]$,
\begin{align}\label{eq:rescale}
    \qdeclor{W_{\rho}}&=\frac{1}{p} \qqqdeclor{pW_{\rho}}{cq}{cs}\\
    \qinclor{W_{\rho}}&=\frac{1}{p} \qqqinclor{pW_{\rho}}{cq}{cs}.
\end{align}
For subnormalized states $\rho$ and any nonnegative convex function $\phi:\mathbbm{R} \rightarrow \mathbbm{R}_{\geq 0}$ satisfying $\phi(0)=0$, we define the functional 
\begin{equation}
    D(W_\rho\|q)= \int_X q \phi\left(\frac{W_\rho}{q}\right) {\rm d}\mu.
\end{equation}
We refer to such a functional as a generalized divergence. For example, up to additional monotonic transformations on the integral, this construction recovers the R\'enyi entropies for regular majorization and R\'enyi divergences for relative majorization. These convex averages and any other functionals that respect the majorization preorder are referred to as Schur-convex or Schur-concave.

For the consideration of probabilistic conversions, it will be useful to compare Lorenz curves of states between spaces with different total measure. In this case, for $s$ exceeding the total measure, we define the Lorenz curve by its endpoint value, i.e. we extend the positive Lorenz curve by a horizontal plateau at its highest value and the negative one by a horizontal plateau at its lowest value.

\section{Constraints on heralded state conversion from majorization}
\label{sec:majorization}

In this section, we provide our main results on majorization and discuss how its different formulations result in state conversion no-go results, distillation bounds, and tradeoff relations.
The key technical contributions are twofold:
Firstly, we provide equivalent formulations of relative majorization on quasiprobabilities, extending the main result of~\cite{Upadhyaya2026} in three ways: distributions are permitted to (i) be defined on different measure spaces, (ii) have different normalizations, and (iii) be compared relative to different references.
Secondly, we extend the formulation of majorization to the approximate setting.
Therefore, our framework captures heralded, probabilistic exact and approximate conversions.
We present our results via the Wigner representation for concreteness, but they hold for all real-valued and integrable phase space representations. This includes other phase-space representations for bosonic states, but also those for qudits. While we focus on bosonic phase spaces, our framework applies in more general settings, including discrete phase spaces for qudit systems.

\subsection{Bounds on exact state conversions}

Exact conversions demand that the output state matches the target state, albeit with some probability of success.
There are two notions of such probabilistic conversion protocols: A \emph{heralded} conversion is one where there is a classical flag when the conversion succeeds; the map is trace non-increasing and outputs $p\rho'$ on success. An \emph{unheralded conversion} has no such flag; the map is trace-preserving and outputs $p'\rho'+(1-p')\sigma$ for some $\sigma$. In the setting of ~\cite{Renes2016}, weak majorization of eigenvalue distributions provides the same upper bound on $p$ and $p'$. Due to the negativity of quasiprobability distributions, this is not the case in our setting. Our majorization bounds are upper bounds only on success probabilities of heralded conversions; unheralded conversions may be achievable with higher probability.

We present a version of our majorization theorem geared towards heralded, probabilistic photonic state conversions.
\begin{theorem}\label{physmajor}
    Let $\rho \in \S(\H_N)$ and $\rho' \in \S(\H_M)$ be two quantum states with finite Wigner negative volume and let $q, q'$ be two strictly positive functions over the phase spaces of $N$ and $M$ modes, respectively.
    Denote by $\O$ the set of completely positive trace-non-increasing maps that are Wigner $(q,q')$-substochastic (see \cref{sec:recipes} for examples).
    The heralded conversion of $\rho$ to $\rho'$ with probability $p$ and using operations from $\O$, is impossible if either of the following conditions holds:
    \begin{enumerate}
        \item There exists $s>0$ such that $\qdeclor{W_\rho} < p \qpdeclor{W_{\rho'}} $ or $\qinclor{W_\rho} > p\qpinclor{W_{\rho'}}$;
        \item There exists a generalized divergence $D$ such that $D(W_\rho\|q) < D(pW_{\rho'}\|q')$.
    \end{enumerate}
\end{theorem}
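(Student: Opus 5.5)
We argue by contraposition. Suppose there is an operation $\Phi\in\O$ with $\Phi(\rho)=p\rho'$. Its Wigner kernel $S\coloneqq K_\Phi$ is, by the definition of $\O$, a $(q,q')$-substochastic kernel, and it satisfies $S W_\rho = pW_{\rho'}$. The whole proof therefore reduces to the following purely measure-theoretic statement: if $f\in L^1(\mathbbm{R}^{2N})$ and $g=Sf$ with $S$ $(q,q')$-substochastic, then
\begin{align}
\qdeclor{f}\ge \qpdeclor{g},\qquad \qinclor{f}\le \qpinclor{g}\quad\text{for all } s>0,
\end{align}
and $D(f\|q)\ge D(g\|q')$ for every generalized divergence. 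Applied to $f=W_\rho$ and $g=pW_{\rho'}$, this gives $\qpdeclor{pW_{\rho'}}=p\,\qpdeclor{W_{\rho'}}$ by positive homogeneity, which is the $c=1$ case of \cref{eq:rescale}. The same holds for the negative curve, so conditions 1 and 2 each contradict the existence of $\Phi$.

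For the Lorenz-curve inequality, we use the variational characterization of relative Lorenz curves, which in the normalized setting is standard and appears in Ref.~\cite{Upadhyaya2026}:
\begin{align}
\qdeclor{f}=\sup\Big\{\int f\,h\,{\rm d}\mu \;:\; 0\le h\le 1,\ \int h\,q\,{\rm d}\mu\le s\Big\}.
\end{align}
The negative curve has the analogous form with an infimum. We would first check that this formula still holds when $f$ is not normalized and when $s$ exceeds $\int q$; this is exactly where the plateau extension convention in the statement comes in. Given an optimal test $h'$ for $g$ at level $s$, we pull it back to $h\coloneqq S^\dagger h'$, where $S^\dagger$ denotes integration of the kernel against its output variables. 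This gives $\int g h' = \int f\, h$. Because $S$ is substochastic and $0\le h'\le 1$, we get $0\le h\le 1$. The $(q,q')$ condition gives $\int h\,q=\int h'\,Sq\le\int h' q'\le s$, so $h$ is feasible for $f$, and hence $\qdeclor{f}\ge \qpdeclor{g}$. The negative curve follows identically with infima.

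For the divergence statement, there are two routes. The first is to go through the Lorenz curves: represent $D(\cdot\|q)$ as an integral of the convex function $\phi$ against the slopes of the Lorenz curves, which are the relative rearrangements $f^{\downarrow,q}$ and $f^{\uparrow,q}$, and then invoke a Hardy--Littlewood--Pólya/Karamata-type inequality for weak majorization. The second is to argue directly with Jensen's inequality, which is what we would try first. Write $g/q'$ at a point as an average of $f/q$ against the sub-probability measure $S(\bm x,\cdot)\,q(\cdot)/q'(\bm x)$. Jensen's inequality for a convex $\phi$ with $\phi(0)=0$ and $\phi\ge0$ handles the missing mass, because it allows the deficit to be assigned to the value $0$. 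Integrating against $q'$ and using $\int S(\bm x,\bm x')\,{\rm d}\bm x\le1$ then gives $D(g\|q')\le D(f\|q)$.

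We expect the main obstacle to be the weak and fully relative bookkeeping. Since $S$ is only substochastic, and $q$ and $q'$ live on different spaces with possibly different total masses, care is needed in three places. First, the sub-measures in Jensen's inequality have total mass at most $1$, and the inequality is valid there only because $\phi(0)=0$ and $\phi\ge 0$. Second, the supremum formula has to be justified for signed $f$. This is the reason for separating positive and negative parts: the test functions $h$ automatically select positive or negative regions. Third, the regime $s>\int q$ has to be handled consistently with the plateau convention.
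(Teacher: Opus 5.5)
Your proof is correct. The divergence half matches the paper; for the Lorenz-curve half you take a genuinely different route.

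The paper obtains \cref{physmajor} from the forward implications of \cref{relmajor}. It first changes measure to ${\rm d}\nu=q\,{\rm d}\mu$ and rescales the kernel to $S(x',x)/q'(x')$, which reduces everything to regular weak majorization (\cref{weakmajor}). It then proves $2\Rightarrow3$ by essentially your sub-probability Jensen argument, plus Fatou because its statement 2 allows $L^1$ limits of operator sequences. Next it specializes to $\phi(x)=(x-z)^+$ and $\phi(x)=-(x+z)^-$ to get the $z$-integral conditions. Only then does it recover the Lorenz curves, via Chong-type rearrangement lemmas with the choice $z=(f^+)^\downarrow(s)$.

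You instead go from the operator to the Lorenz curves in one step, by pulling back the bathtub test function, $h=S^\dagger h'$. The two arguments are the primal and dual sides of the same linear program, since
$\qdeclor{f}=\sup\{\int fh\,{\rm d}\mu : 0\le h\le 1,\ \int hq\,{\rm d}\mu\le s\}=\inf_{z\ge0}\{zs+\int(f-zq)^+\,{\rm d}\mu\}$.
The paper works with the dual form, and you work with the primal.

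Your route is shorter and self-contained:
\begin{itemize}
\item it needs no convex-function detour and no rearrangement lemmas;
\item the plateau convention for $s$ beyond the total reference mass falls out of the variational formula, with no padding construction;
\item it uses only the two defining inequalities $S^\dagger 1\le 1$ and $Sq\le q'$.
\end{itemize}
The one lemma you defer is the variational formula for signed, unnormalized $f$. It is the bathtub principle applied to $f^{\pm}/q$ on $(X,\nu)$ with fractional tests $0\le h\le 1$, and it holds in exactly this generality.

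What the paper's longer route buys is the full equivalence of \cref{relmajor}, including the converse. Failure of both conditions is also sufficient for reachability by (not necessarily physical) $(q,q')$-substochastic operators, up to $L^1$ limits. It also shows that Lorenz curves and the divergence family carry the same information. None of this is needed for the impossibility statement itself.
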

If neither of the conditions in \cref{physmajor} holds, then $W_\rho$ is said to (weakly) majorize $W_{p\rho'} = pW_{\rho'}$ relative to $(q,q')$, denoted by $(W_\rho,q) \weak (W_{p \rho'},q')$. 
Note that weak relative majorization does not guarantee that a conversion is necessarily possible with a physical quantum operation.

In \cref{app:majproofs}, we establish the complete theorem from which this result derives. 
Namely, we prove the equivalence of four conditions for weak relative majorization between quasiprobabilities on different measure spaces and with different strictly positive reference distributions.
This is an extension of the relative majorization result of Ref.~\cite{Upadhyaya2026} which is restricted to distributions with the same normalization, defined on the same measure spaces, and compared with respect to the same reference distribution.

Our definition of weak majorization for quasiprobabilities recovers the standard notion of weak submajorization for nonnegative distributions~\cite{marshall2011inequalities}, but is distinct from both submajorization and supermajorization for distributions with negativity. An advantage of our definition over others is that it admits majorization equivalences that parallel those in simpler settings. We discuss this in more detail in \cref{app:majproofs}.

The second statement of \cref{physmajor} allows us to obtain entropic constraints on state transformations.
Concretely, we consider a family of relative entropies in bosonic phase space based on the classical $\alpha$-R\'enyi divergences.
\begin{definition}\label{def:WignerRenyi}
    Given two possibly subnormalized quantum states $\rho, \tau$ with Wigner functions $W_\rho, W_\tau$ such that $W_{\tau}$ is strictly positive on the entire phase space, we define the \emph{Wigner $\alpha$-R\'enyi divergence} of $\rho$ relative to $\tau$ as
    \begin{equation}
        D_\alpha(W_{\rho} \| W_{\tau}) \coloneqq \frac{1}{\alpha-1} \log \int_{\mathbb{R}^{2N}} \abs{W_\rho(\bm{r})}^\alpha W_\tau(\bm{r})^{1-\alpha} \; {\rm d}^{2N}\bm{r} \,, \quad \alpha \in (1, \infty) \,.
    \end{equation}
\end{definition}
Here, $W_\tau$ plays the role of the reference distribution $q$ and the restriction to $\alpha > 1$ is needed for quasiprobabilities~\cite{Upadhyaya2026} to ensure Schur-convexity.
The Wigner $\alpha$-R\'enyi divergence inherits several desirable properties from the classical R\'enyi divergence on probability distributions~\cite{koukoulekidis2022constraints,koukoulekidis2023quasi}:
For normalized $\rho$ and $\tau$, it is nonnegative, equaling zero if and only if $\rho = \tau$.
It is also additive even for subnormalized states,
\begin{equation}
    D_\alpha(W_{\rho_1 \otimes \rho_2} \| W_{\tau_1 \otimes \tau_2}) = D_\alpha(W_{\rho_1} \| W_{\tau_1}) + D_\alpha(W_{\rho_2} \| W_{\tau_2}) \,.
\end{equation}
In particular, $D_\alpha(W_{\rho^{\otimes n}} \| W_{\tau^{\otimes n}}) = nD_\alpha(W_{\rho} \| W_{\tau})$. Finally, for any scalars $p,t \in (0,1]$, we have 
\begin{equation}\label{renyiscale}
    D_\alpha(W_{p\rho} \| W_{\tau})
    = D_\alpha(W_{\rho} \| W_{\tau}) + \frac{\alpha}{\alpha-1} \log p - \log t
    \,,
\end{equation}
leading to the following bound.
\begin{corollary}\label{prop:div}
    Let $\E$ be a Wigner-substochastic operation and $\tau$ be a strictly Wigner-positive state such that $\E(\tau)$ is also strictly Wigner-positive.
    Then, for an arbitrary quantum state $\rho$ and scalar $\alpha > 1$, the Wigner $\alpha$-R\'enyi divergence satisfies the data-processing inequality
    \begin{equation}
        D_\alpha(W_{\rho} \| W_{\tau}) \ge D_\alpha(W_{\E(\rho)} \| W_{\E(\tau)}) \,.
    \end{equation}
    In particular, if $\E(\rho) = p\rho'$ and $\E(\tau) = t\tau'$ for some probabilities $p,t$, then
    \begin{equation}\label{corollaryprobbound}
        p^\alpha t^{1-\alpha} \le \exp( (\alpha-1) \left[ D_\alpha(W_{\rho} \| W_{\tau}) - D_\alpha(W_{\rho'} \| W_{\tau'}) \right] )
        \,.
    \end{equation}
\end{corollary}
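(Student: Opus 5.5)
The plan is to derive the corollary from the second condition of \cref{physmajor}, applied with reference distributions $q = W_\tau$ and $q' = W_{\E(\tau)}$. First I will check that $\E$ lies in $\O$ for this choice. The kernel $K_\E$ is substochastic by assumption. Moreover $K_\E W_\tau = W_{\E(\tau)}$, so $(K_\E q)(\bm r) \le q'(\bm r)$ holds with equality. Hence $\E$ is Wigner $(q,q')$-substochastic. Both references are strictly positive by hypothesis, as \cref{physmajor} requires.

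By \cref{physmajor}, since $\E$ maps $\rho$ to the subnormalized output $\E(\rho)$ (taking $p=1$ with target $\E(\rho)$, or equivalently $p\rho'$), no generalized divergence can satisfy $D(W_\rho\|q) < D(W_{\E(\rho)}\|q')$. So every generalized divergence obeys $D(W_\rho\|q)\ge D(W_{\E(\rho)}\|q')$. I will use $\phi(x)=\abs{x}^\alpha$ with $\alpha>1$. This function is nonnegative and convex with $\phi(0)=0$, and gives $D = \int \abs{W}^\alpha q^{1-\alpha}$. Because $u\mapsto \frac{1}{\alpha-1}\log u$ is increasing for $\alpha>1$, the inequality transfers to the Wigner R\'enyi divergences, which is exactly the data-processing inequality.

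For the probability bound, substitute $\E(\rho)=p\rho'$ and $\E(\tau)=t\tau'$ and expand the integral directly:
\begin{equation}
\int \abs{pW_{\rho'}}^\alpha (tW_{\tau'})^{1-\alpha} = p^\alpha t^{1-\alpha}\int \abs{W_{\rho'}}^\alpha W_{\tau'}^{1-\alpha}.
\end{equation}
This gives $D_\alpha(W_{p\rho'}\|W_{t\tau'}) = D_\alpha(W_{\rho'}\|W_{\tau'}) + \frac{1}{\alpha-1}\log(p^\alpha t^{1-\alpha})$, which should agree with \cref{renyiscale}. Inserting this into the data-processing inequality, multiplying by $\alpha-1>0$ and exponentiating yields \cref{corollaryprobbound}.

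The main obstacle is bookkeeping rather than substance. One issue is the role of $p$ in \cref{physmajor}: the second reference $q'$ there is $t\,W_{\tau'}$, not the normalized $W_{\tau'}$, so the scaling must be tracked carefully. The other is making sure the integrals are finite, so that the logarithms are defined and the divergence framework of \cref{app:majproofs} applies. Finiteness of $\int \abs{W_\rho}^\alpha W_\tau^{1-\alpha}$ can be assumed; if it is infinite, the inequality holds trivially.
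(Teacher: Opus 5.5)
Your proposal is correct and follows the paper's own argument. The paper likewise takes $(q,q')=(W_\tau,W_{\E(\tau)})$, uses that $\E$ is then $(q,q')$-substochastic so the relative-majorization theorem of \cref{app:majproofs} applies to the Schur-convex functional with $\phi(x)=\abs{x}^\alpha$, and finishes with the scaling relation \cref{renyiscale}. Your direct computation of $D_\alpha(W_{p\rho'}\|W_{t\tau'})$ is the right reading of that relation, whose left-hand side should have $W_{t\tau}$ as reference, and your remark that an infinite left-hand side makes the inequality trivial supplies a detail the paper leaves implicit.
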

\begin{proof}
    The divergence as defined is a Schur-convex function under majorization relative to $(q,q')=(W_{\tau},W_{\E(\tau)})$. 
    Hence, the first inequality follows immediately from the extension of \cref{physmajor} in \cref{app:majproofs}, and the second inequality follows from the first together with \cref{renyiscale}.
\end{proof}
\cref{corollaryprobbound} gives an upper bound on the probability of heralded conversion of $\rho$ to $\rho'$ using $\E$. If we are interested in a class of operations which all have the same action on $\tau$, then \cref{corollaryprobbound} bounds the conversion probability under any of those operations via a single condition.

\subsection{Bounds on approximate state conversions}

In practice, especially in bosonic settings, state conversion protocols do not often output an exact target state, even probabilistically. Instead, they generally output states that are close to the target in trace distance or fidelity. 
Namely, given an input state $\rho$ and a target state $\rho'$, the free operations may output $\E(\rho) = p \tilde{\rho}$ where the state $\tilde{\rho}$ is close to the target, $\frac{1}{2}\|\tilde{\rho}-\rho'\|_1 \le \epsilon$.
In principle, the probability $p$ of exact conversion to a target state may be small or zero, while the probability of generating a state in the neighborhood of the target may be large.

From \cref{corollaryprobbound}, as long as $D_\alpha(W_{\tilde{\rho}} \| W_{\E(\tau)})$ does not get much smaller than $D_\alpha(W_{\rho'} \| W_{\E(\tau)})$ for all $\tilde{\rho}$ close to $\rho'$, the exact conversion bound can be relaxed to approximate conversions. 
We now show that this property, termed lower semicontinuity, does in fact hold for the Wigner--R\'enyi divergences.
The full proof is deferred to \cref{app:wigner}.
\begin{theorem}\label{thm:semicontinuity}
    Let $\tau \in \S(\H_N)$ be a reference state such that $W_\tau > 0$ and $\alpha > 1$ be a scalar.
    Then, the following two statements hold.
    \begin{enumerate}
    \item[(i)] The functional $\rho \mapsto D_\alpha(W_\rho \| W_\tau)$ is lower semicontinuous on $\S(\H_N)$ in the trace-norm topology.
    In particular, for any sequence of states $(\rho_n)_{n \in \mathbb{N}} \subset \S(\H_N)$ converging to $\rho \in \S(\H_N)$ in trace norm, i.e.
    \begin{equation}
        \lim_{n \to \infty} \frac{1}{2} \|\rho_n - \rho\|_1 = 0
        \,,
    \end{equation}
    we have
    \begin{equation}
        \liminf_{n \to \infty} D_\alpha(W_{\rho_n} \| W_\tau) \ge D_\alpha(W_{\rho} \| W_\tau)
        \,.
    \end{equation}
    \item[(ii)] Assume that $\tau$ is a Gaussian state with displacement vector $\bmd$ and covariance matrix $V$, and $\rho$ satisfies the finite tail-margin assumption 
    \begin{equation}\label{eq:Dkappa_assumption}
        D_{\alpha}(W_{\rho} \| q_\kappa) < \infty
        \,, \qquad
        q_\kappa(\bm{r}) \coloneqq \exp(-2\kappa \|\bm{r} - \bmd\|_2^2) \; W_\tau(\bm{r})
        \,,
    \end{equation}
    for some $\kappa > 0$.
    Then, every quantum state $\tilde{\rho} \in \S(\H_N)$ satisfying
    $\frac{1}{2} \|\tilde{\rho}-\rho\|_1 \le \epsilon$ obeys
    \begin{equation}\label{eq:semicontinuity}
        D_\alpha(W_{\tilde{\rho}}\|W_\tau)
        \ge
        D_\alpha(W_{\rho}\|W_\tau) - \frac{\log(1/(1-\Delta\epsilon^c))}{\alpha-1}
        \,, \qquad
        \forall \epsilon \in \left[ 0, \Delta^{-\frac{1}{c}} \right)
        \,,
    \end{equation}
    for constants $\Delta > 0$ and
    \begin{equation}
        c \coloneqq \frac{\kappa \alpha}{\kappa \alpha + \|V^{-1}\|_{\infty}}
        \in (0, 1)
        \,,
    \end{equation}
    that are both independent of $\tilde{\rho}$ and $\epsilon$.
    
    In particular, for all $E \in (0,1)$, the functional $\rho \mapsto D_\alpha(W_\rho \| W_\tau)$ obeys the one-sided H\"older continuity bound
    \begin{equation}\label{eq:semicontinuity2}
        D_\alpha(W_{\tilde{\rho}}\|W_\tau)
        \ge
        D_\alpha(W_{\rho}\|W_\tau) - \frac{\log(1/(1-E))}{(\alpha-1)E}\Delta \epsilon^c
        \,, \qquad
        \forall \epsilon \in \left[ 0, \left(E\Delta^{-1}\right)^{\frac{1}{c}} \right]
        \,.
    \end{equation}
    \end{enumerate}
\end{theorem}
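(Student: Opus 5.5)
The plan is to work with the integral $F_\tau(\rho) \coloneqq \int \abs{W_\rho}^\alpha W_\tau^{1-\alpha}\,{\rm d}^{2N}\bm r$. Since $D_\alpha = \frac{1}{\alpha-1}\log F_\tau$ and $\alpha>1$, both statements reduce to lower bounds on $F_\tau(\tilde\rho)$ in terms of $F_\tau(\rho)$. The basic tool throughout is a uniform pointwise estimate. By \cref{eq:def_wigner_function}, $W_\rho(\bm r)=\Tr[\hat\Delta_{\rm W}(\bm r)\rho]$, and $\|\hat\Delta_{\rm W}(\bm r)\|_\infty=(2/\pi)^N$ because it is a unitarily conjugated parity operator. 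Hence
\begin{equation}
\sup_{\bm r}\abs{W_{\tilde\rho}(\bm r)-W_\rho(\bm r)} \le (2/\pi)^N \|\tilde\rho-\rho\|_1 \le 2(2/\pi)^N\epsilon .
\end{equation}

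For (i), trace-norm convergence $\rho_n\to\rho$ therefore gives uniform, and in particular pointwise, convergence $W_{\rho_n}\to W_\rho$. The integrands $\abs{W_{\rho_n}}^\alpha W_\tau^{1-\alpha}$ are nonnegative and converge pointwise to $\abs{W_\rho}^\alpha W_\tau^{1-\alpha}$. Fatou's lemma then yields $\liminf_n F_\tau(\rho_n)\ge F_\tau(\rho)$, including the case where the right-hand side is infinite. Applying the continuous increasing map $x\mapsto \frac{1}{\alpha-1}\log x$ gives the claim.

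For (ii), I would truncate to a ball $B_R$ centred at $\bmd$ and use the weighted $L^\alpha$ norm $\|f\|_{\alpha,B_R}\coloneqq(\int_{B_R}\abs f^\alpha W_\tau^{1-\alpha})^{1/\alpha}$.

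\emph{Truncation.} We have $F_\tau(\tilde\rho)^{1/\alpha}\ge \|W_{\tilde\rho}\|_{\alpha,B_R}\ge \|W_\rho\|_{\alpha,B_R}-\|W_{\tilde\rho}-W_\rho\|_{\alpha,B_R}$ by Minkowski.

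\emph{Inner error.} For the Gaussian $W_\tau$ we have $W_\tau^{1-\alpha}(\bm r)\propto \exp\big(2(\alpha-1)(\bm r-\bmd)^{\rm T}V^{-1}(\bm r-\bmd)\big)\le C\,e^{2(\alpha-1)\|V^{-1}\|_\infty R^2}$ on $B_R$. Combined with the sup bound, this gives
\begin{equation}
\|W_{\tilde\rho}-W_\rho\|_{\alpha,B_R}\le C_1\,\epsilon\, R^{2N/\alpha}\, e^{2\frac{\alpha-1}{\alpha}\|V^{-1}\|_\infty R^2}.
\end{equation}

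\emph{Tail.} Write $W_\tau^{1-\alpha}=q_\kappa^{1-\alpha}e^{-2\kappa(\alpha-1)\|\bm r-\bmd\|^2}$. Then the tail satisfies $\int_{B_R^c}\abs{W_\rho}^\alpha W_\tau^{1-\alpha}\le e^{-2\kappa(\alpha-1)R^2}\int\abs{W_\rho}^\alpha q_\kappa^{1-\alpha}$, and the last integral is finite by \cref{eq:Dkappa_assumption}. This gives $\|W_\rho\|_{\alpha,B_R}^\alpha\ge F_\tau(\rho)-C_2e^{-2\kappa(\alpha-1)R^2}$.

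\emph{Combining.} Factoring out $F_\tau(\rho)>0$ gives
\begin{equation}
F_\tau(\tilde\rho)\ge F_\tau(\rho)\,\Big(\big(1-C_3e^{-aR^2}\big)^{1/\alpha}-C_4\,\epsilon\, e^{bR^2}\Big)^{\alpha},
\end{equation}
with $a,b$ proportional to $\kappa$ and $\|V^{-1}\|_\infty$ respectively, up to $\alpha$-dependent factors; the polynomial $R^{2N/\alpha}$ is absorbed by slightly adjusting constants.

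\emph{Optimising $R$.} Choosing $R^2$ so that the two error terms are balanced, i.e.\ $e^{(a+b)R^2}\sim\epsilon^{-1}$, makes both errors of order $\epsilon^{a/(a+b)}$. This should produce the exponent $c=\kappa\alpha/(\kappa\alpha+\|V^{-1}\|_\infty)$ once the factors of $\alpha$ from the $\alpha$-th roots are tracked. Expanding then gives $F_\tau(\tilde\rho)\ge(1-\Delta\epsilon^c)F_\tau(\rho)$ for $\epsilon<\Delta^{-1/c}$, and taking $\frac{1}{\alpha-1}\log$ yields \cref{eq:semicontinuity}.

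The Hölder form \cref{eq:semicontinuity2} follows from convexity of $x\mapsto\log(1/(1-x))$, which vanishes at $0$. Convexity gives $\log(1/(1-x))\le \frac{x}{E}\log(1/(1-E))$ for $x=\Delta\epsilon^c\le E$.

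The main obstacle I expect is the bookkeeping in the optimisation of $R$. The aim is to land exactly on the stated $c$ rather than a weaker exponent, while keeping $\Delta$ independent of $\tilde\rho$ and $\epsilon$. Handling the polynomial volume factor and the regime of $\epsilon$ where the optimal $R$ would be small may require replacing $\Delta$ by a larger constant. I would first carry out the balancing with the exact Gaussian quadratic form and only afterwards bound it by $\|V^{-1}\|_\infty$.
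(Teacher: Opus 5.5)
\textbf{Where the proposal fails.} Part (i), the reduction to $F_\tau$, the tail estimate via $q_\kappa$, and the final secant step are fine. Your balancing exponent is also right: with $a=2\kappa(\alpha-1)$ and $b=2\frac{\alpha-1}{\alpha}\|V^{-1}\|_\infty$ one gets exactly $a/(a+b)=c$.

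The gap is the claim that the volume factor $R^{2N/\alpha}$ ``is absorbed by slightly adjusting constants.'' Because the balance lands on $c$ with no slack, it cannot be absorbed. At the balancing radius $R^2\simeq\frac{1}{a+b}\log(1/\epsilon)$, a factor $R^m$ becomes $(\log 1/\epsilon)^{m/2}$. The tail term decreases in $R$ and the inner term increases, so the minimum over $R$ of their sum is at least their common value at the crossing point. That value is of order $\epsilon^{c}(\log 1/\epsilon)^{mc/2}$, which diverges relative to $\epsilon^c$ as $\epsilon\to0$. No enlargement of $\Delta$ fixes this. What your argument actually proves is the bound with some constant $\Delta_{c'}$ for every exponent $c'<c$, not the stated $c$.

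Computing $\int_{B_R}W_\tau^{1-\alpha}$ exactly, rather than as sup times volume, removes the loss for $N=1$. There $\int_{B_R}e^{\beta\|\bm r-\bmd\|_2^2}\,{\rm d}^2\bm r\le\pi e^{\beta R^2}/\beta$, and your proof then goes through. It does not help in general. For an isotropic reference with $N\ge2$, e.g.\ $V=(2\bar n+1)\id_{2N}$ as in the cat-state corollary, the weighted volume of the ball is $\asymp R^{2N-2}e^{\beta R^2}$. Reshaping the truncation region does not help either: your tail estimate only controls the complement of a ball of radius $R$ around $\bmd$, so any admissible region contains that ball.

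\textbf{The missing idea: a soft interpolation instead of a hard cutoff.} The paper proceeds in three steps.
\begin{enumerate}
\item It bounds the pointwise loss by $|W_\rho|^\alpha-|W_{\tilde\rho}|^\alpha\le\min\{|W_\rho|^\alpha,\,2(2/\pi)^N\alpha|W_\rho|^{\alpha-1}\epsilon\}$, using convexity of $t\mapsto t^\alpha$ and the sup bound.
\item It applies $\min\{u,v\}\le u^{1-c}v^{c}$, giving a total loss of at most $(2(2/\pi)^N\alpha\epsilon)^c\int|W_\rho|^{\alpha-c}W_\tau^{1-\alpha}$.
\item It uses H\"older with exponents $\alpha/(\alpha-c)$ and $\alpha/c$, pairing $|W_\rho|^{\alpha-c}$ with $q_\kappa^{(1-\alpha)(\alpha-c)/\alpha}$.
\end{enumerate}
For the stated $c$, the leftover factor is an integral over all of phase space of $W_\tau^{1-\alpha}e^{-2\kappa(\alpha-1)(\alpha/c-1)\|\bm r-\bmd\|_2^2}$. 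This integrand is bounded by a constant times $e^{-2(\alpha-1)^2\kappa\|\bm r-\bmd\|_2^2}$, so the integral is finite with no $\epsilon$-dependent region. One then gets $J_\alpha(W_{\tilde\rho}\|W_\tau)\ge(1-\Delta\epsilon^c)\,J_\alpha(W_\rho\|W_\tau)$ with an explicit $\Delta$.

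The $\min$ with $|W_\rho|^\alpha$ is what replaces your cutoff. Far from $\bmd$, where the weight $W_\tau^{1-\alpha}$ grows, the loss is controlled by $|W_\rho|^\alpha$ itself rather than by $\epsilon$ times the weight. Your Minkowski step cannot exploit this.
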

To promote the qualitative lower semicontinuity of part \emph{(i)} to a quantifiable one-sided H\"older continuity bound in part \emph{(ii)}, we have made two additional assumptions.
The first assumption, that the reference state $\tau$ be Gaussian, is not strictly necessary and a quantitative bound can be derived for any $\tau$ with $W_\tau > 0$.
However, the Gaussianity assumption allows us to get closed forms for $\Delta$ and $c$ that depend on the covariance matrix $V$; the explicit expression for $\Delta$ can be found in \cref{app:wigner}.
The parameter $E$ is controllable, and it can be used to trade off a smaller error prefactor against a larger valid range for the error.

The second assumption of \cref{eq:Dkappa_assumption} is a mild strengthening of the assumption that $D_{\alpha}(W_{\rho} \| W_\tau) < \infty$.%
As a concrete example,  in \cref{sec:cat}, we study noisy cat state conversions, and we explicitly calculate the range of $\kappa$ for which the assumption holds in terms of the relevant physical parameters.
As $\kappa \to 0^+$, the prefactor $\Delta \to 1$ remains finite, but the exponent $c \to 0^+$ and the maximum valid error rate $\Delta^{-\frac{1}{c}} \to 0^+$ both vanish, recovering the data-processing inequality for exact state conversions.

The qualitative observation that R\'enyi divergences, and more general types of divergences~\cite{csiszar1967information}, are lower semicontinuous on arbitrary measure spaces is well-known (e.g.~\cite{csiszar1967information,liese2006on,van2014renyi,CaiVerdu2019}), and our result does not imply full continuity.
The constants $\Delta, c$ in \cref{thm:semicontinuity} depend on the state $\rho$ that satisfies the assumption $D_{\alpha}(W_{\rho} \| q_\kappa) < \infty$ for some $\kappa > 0$.
Therefore, assuming that a universal $\kappa > 0$ existed, such that $D_{\alpha}(W_{\tilde{\rho}} \| q_\kappa) < \infty$ also holds for all $\tilde{\rho}$ that are $\epsilon$-close to $\rho$ in trace norm, one would recover H\"older continuity.
However, in \cref{app:wigner} we show that the functional $\rho \mapsto D_\alpha(W_{\rho} \| W_\tau)$ is discontinuous by constructing a sequence of states with bounded moments that approach $\rho$ but have divergences growing arbitrarily large.

The lower semicontinuity of various quantum divergences and entanglement monotones defined in the Hilbert space has been studied (e.g.~\cite{winter2016tight,berta2017variational,ferrari2023asymptotic,mosonyi2023strong,van2025energy}).
Interestingly, it is known that uniform continuity of quantum entropies can be recovered under energy constraints~\cite{winter2016tight}, which contrasts our discontinuity result for the Wigner $\alpha$-R\'enyi divergence under bounded moments.
Moreover, the classical $\alpha$-R\'enyi entropy on discrete, finite-dimensional Wigner functions is known to be continuous~\cite{koukoulekidis2022constraints}.
Using the additional property of Schur-convexity, Ref.~\cite{alexander2023general} uses $\alpha$-R\'enyi divergences to obtain tradeoff relations for approximate conversions in the context of qudit magic state distillation.
As expected, this continuity bound diverges with the local qudit dimension.

\cref{thm:semicontinuity} extends the upper bound of \cref{prop:div} to the setting of approximate heralded conversions.
In particular, consider the transformation $\E(\rho) = p\tilde{\rho}$ where the output state $\tilde{\rho}$ is $\epsilon$-close to the target $\rho'$ in trace distance.
Under the assumptions of \cref{thm:semicontinuity} with $\E(\tau) = t\tau'$ for some probability $t$, the tradeoff relation becomes
\begin{equation}\label{corollaryprobbound_extended}
    p^\alpha t^{1-\alpha} (1-\Delta\epsilon^c) \le \exp( (\alpha-1) \left( D_\alpha(W_{\rho} \| W_{\tau}) - D_\alpha(W_{\rho'} \| W_{\tau'}) \right) )
    \,.
\end{equation}

We note that one can also obtain bounds on approximate conversions using the full Lorenz curves. Suppose $\frac{1}{2} \| \tilde{\rho} - \rho \|_1 \le \epsilon$. 
Then, 
as, e.g. shown in \cref{app:wigner}, the Wigner function obeys $\|W_\rho(\bm r) - W_{\tilde{\rho}}(\bm r)\|\le 2\left(\frac{2}{\pi}\right)^N \epsilon$. 
Consequently, the positive and negative Lorenz curves of $(W_\rho-2\left(\frac{2}{\pi}\right)^N \epsilon)^+$ and $(W_\rho+2\left(\frac{2}{\pi}\right)^N \epsilon)^-$ provide, in the language of \cref{sec:robustcurves}, inner bounds on the Lorenz curves of any state in the trace-norm-ball. 
For regular majorization, the Lorenz curve bounds can be simply expressed:
the positive curve's bound is $\declor{W_\rho}-2s\left(\frac{2}{\pi}\right)^N \epsilon $ until this bound reaches a slope of zero, and a horizontal plateau afterwards;
the negative curve is similarly piecewise equal to $\inclor{W_\rho}+2s\left(\frac{2}{\pi}\right)^N \epsilon $ and a horizontal plateau once zero slope is reached.

\subsection{Majorization recipes: specifying the allowed operations}
\label{sec:recipes}

Our majorization framework allows for the study of conversions between arbitrary, pure or mixed, input and output states. 
One of the strengths of our framework is its flexibility: 
Aside from the ability to choose different phase-space representations, we can also study a broad range of state-conversion scenarios by choosing different reference distributions $q,q'$.
These specify the class of allowed operations to be $\O$, i.e. operations with Wigner $(q,q')$-substochastic kernels.
Unlike some resource monotones which are specific to a particular resource theory, our majorization framework is able to accommodate different scenarios and applications. It does so through one common lens: that of phase-space disorder.
One of the challenges in applying this framework is that it is difficult to enumerate all the operations in $\O$.
For example, it is difficult to classify all channels with a positive Wigner kernel; in fact, it is hard to characterize even the states that are Wigner-positive.
However, in a practical state conversion setting, we have a given set of physically realizable operations, and for constraining state conversions via majorization, it suffices to show that this set of physical operations is included in $\O$. If other operations are also included in $\O$, the no-goes are still valid, but their quantitative statements may be weakened.
Nevertheless, understanding which operations are disallowed in a specific majorization setting is also useful as it can guide the construction of conversion protocols.
In this section, we summarize several relevant ``recipes'' for applying majorization with Wigner functions, i.e. choices of $(q,q')$. We give classes of channels that are and are not in the different $\O$.

We start with a short discussion on the partial trace operation, as understanding its interaction with the majorization preorder is important to understand the following recipes.
Canonically, partial trace is treated as a free operation in most resource theories. This is operationally natural, since discarding a subsystem is usually easy.
Under our framework, when $q$ is integrable, partial trace can be made an allowed operation for an appropriate choice of $q'$. However, if $q$ is not integrable, it is not generally possible to include partial trace in the set of allowed operations.

We give a simple example to show that partial trace can be regarded in certain contexts as resourceful. Consider a Stinespring dilation of a replacement channel, consisting simply of SWAP followed by partial trace. If the partial trace operation is replaced by conditioning on a non-identity POVM element, it is no longer possible to deterministically achieve replacement on every state. Hence, partial trace enables an extra capability.
Partial trace is also resourceful from a regular-majorization perspective because it can reduce the disorder in a distribution, concentrating mass back onto a single value. This effect can be arbitrarily large as Wigner functions can be arbitrarily spread out.

\subsubsection{Measurement-conditioned Gaussian dilations}\label{recipegaussian}

Any Gaussian channel with $M=N$ can be realized by a Gaussian unitary dilation acting on the original $N$-mode state and $2N$ copies of vacuum, followed by partial trace~\cite{Caruso2008-dt}. It is challenging to find a single choice of $(q,q')$ such that all these channels are $(q,q')$-substochastic. For example, the choice $q=1$ is invariant under any initial Gaussian unitary. However, it then becomes arbitrarily concentrated under partial trace, as discussed above. 

Instead, we can consider restricted classes of channels consisting of the dilating Gaussian unitary followed by conditioning on a POVM element with finite trace, as depicted in \cref{fig:gdil}. The Gaussian unitary has a doubly stochastic Wigner kernel, acting as a symplectic coordinate change on the input Wigner function. Thus, we need only analyze the measurement.
\begin{figure}
    \centering
\begin{quantikz}[row sep=0.5cm, column sep=0.6cm]
    \lstick{$\rho$}
    & \gate[3]{U_G}
      \gategroup[wires=3,steps=2,
        style={dashed,rounded corners,inner sep=6pt}]{}
    & \qw & \qw \rstick{$\Phi(\rho)$} \\
    \lstick{$\ket{0}^{\otimes N}$}
    & & \meter[2]{P} & \setwiretype{n} \\
    \lstick{$\ket{0}^{\otimes N}$}
    & & & \setwiretype{n}
\end{quantikz}
\caption{Schematic of measurement-conditioned Gaussian dilation. $U_G$ is an arbitrary Gaussian unitary, while the conditioning POVM elements $P$ and $\id-P$ must be Wigner-positive. }
    \label{fig:gdil}
\end{figure}

By conditioning on a measurement outcome, we mean that the measurement with POVM $\{P,\id-P\}$ is performed and we proceed only if the first outcome is obtained. $P$ is the postselected POVM element on the $k$ modes that are measured, modeling tensor products of single-mode measurements or multimode measurements. Partition the coordinates for input modes depending on whether they are measured or not, with primed coordinates denoting measured modes. From \cref{wignerkernel}, the integral kernel for the measurement is %
\begin{equation}\label{msmtkernel}
    K_\text{meas}(\bm x_\text{out},\bm p_\text{out};\bm x_\text{in},\bm p_\text{in},\bm x'_\text{in},\bm p'_\text{in})
    =
    \pi^k W_P(\bm x'_\text{in},\bm p'_\text{in}) \delta(\bm x_\text{out}-\bm x_\text{in}) \delta(\bm p_\text{out} - \bm p_\text{in}).
\end{equation}
That is, the Wigner function of the output is given by
\begin{equation}
    W_{\Phi(\rho)}(\bm x,\bm p) 
    = 
    \pi^k \int_{\mathbb{R}^{2k}}  W_P(\bm x',\bm p') W_{U_G (\rho\otimes \dyad{0}^{\otimes 2N}) U_G^\dagger}(\bm x,\bm p,\bm x', \bm p') \; {\rm d}^{k}\bm x' {\rm d}^{k}\bm p'.
\end{equation}
The measurement kernel of the full CPTNI map is then positive if $P$ has a positive Wigner function. 

Although measuring and conditioning on a specific outcome is a CPTNI map, this does not imply that the kernel is substochastic; there may be unphysical test functions whose $L^1$-norm is actually increased by the map. This is in contrast to trace-preserving maps, whose kernels, if positive, are necessarily stochastic~\cite{Upadhyaya2026}. We check the substochasticity condition explicitly starting from \cref{msmtkernel},  
\begin{align}
    \int_{\mathbb{R}^{2N}} K_\text{meas}(\bm x_\text{out},\bm p_\text{out};\bm x_\text{in},\bm p_\text{in},\bm x'_\text{in},\bm p'_\text{in}) \; {\rm d}^{N} \bm x_\text{out} {\rm d^{N}} \bm p_\text{out}
    &= \pi^k W_P(\bm x'_\text{in},\bm p'_\text{in}).
\end{align}
Hence,  the kernel is substochastic if and only if
\begin{equation}\label{msmtstoch}
     W_P(\bm x'_\text{in},\bm p'_\text{in}) \leq \frac{1}{\pi^k} \quad  \forall (\bm x'_\text{in},\bm p'_\text{in}) \in \mathbb{R}^{2k}
     \,.
\end{equation}
This condition can be formulated more intuitively. In our convention, the Wigner function of the identity operator on $k$ modes is $1/\pi^k$. So, by linearity, the conditioning POVM element and its complement satisfy
\begin{equation}
    W_P(\bm x'_\text{in},\bm p'_\text{in}) + W_{\id-P}(\bm x'_\text{in},\bm p'_\text{in}) = \frac{1}{\pi^k}.
\end{equation}
\cref{msmtstoch} is thus equivalent to 
\begin{equation}\label{wigstoch}
    W_{\id-P}(\bm x'_\text{in},\bm p'_\text{in}) \geq 0 \quad  \forall (\bm x'_\text{in},\bm p'_\text{in}) \in \mathbb{R}^{2k}
     \,.
\end{equation}
Thus, the kernel is Wigner substochastic if and only if both POVM elements, $P$ and $\id-P$, are Wigner positive. For example, projecting on vacuum, $P=\ketbra{0}$, is not substochastic, while projecting with a normalization penalty, $P= \ketbra{0}/2$, is. Thus, even the POVM element corresponding to discarding cannot ``hide'' Wigner negativity.

Now, given that the kernel is substochastic, we seek reference distributions $q,q'$ so that the kernel is $(q,q')$-substochastic. We set $q=1$ so that it is a fixed point of the arbitrary Gaussian unitary that is first applied. We can then calculate how this $q$ transforms under the measurement. We have that 
\begin{align}
    K_\text{meas} q &= \pi^k \int_{\mathbb{R}^{2k}} W_P(\bm x',\bm p') \; {\rm d}^{k}\bm x' {\rm d}^{k}\bm p' \\
    &= \pi^k \Tr(P) \label{wigqqs}
\end{align}
So we set $q'$ to also be a uniform distribution, equal everywhere to $\pi^k \Tr(P)$. Thus we need the POVM element to have a finite trace to treat this class of CPTNI maps via majorization. 

For example, performing a heterodyne measurement and postselecting on a region of finite area is ($q,q'$)-substochastic with $q'$ set by the region's area. Decreasing the area generally reduces the maximum conversion probability. On the other hand, conditioning on the click/no-click outcome of a photodetector is not an allowed operation, as some POVM elements are Wigner-negative. Conditioning on a homodyne window is Wigner substochastic (Eq.~\eqref{wigstoch}) but is not $(q,q')$-substochastic (Eq.~\eqref{wigqqs}) because the conditioning POVM element does not have a finite trace. For different choices of $q$ and $q'$ however, such operations are allowed (see e.g. the following recipe).

The rescaling $q' \rightarrow pq'$ for $p<1$ compresses the positive and negative Lorenz curves horizontally (\cref{eq:rescale}). This makes the target Lorenz curves steeper so the success probability upper bound decreases. This is expected as reducing the allowed measurement trace restricts the class of operations. It also corresponds to the fact that POVM elements with smaller trace will in general reduce the conditioning probability.

\subsubsection{Passive linear optical networks}\label{recipelon}

An $N$-mode passive linear optical network (LON) implements a unitary that is composed of layers of passive, i.e. number-preserving, optical elements,
\begin{equation}\label{eq:LON}
    \U = \U_L \circ \dots \circ \U_1 \in {\rm LON} \,,
\end{equation}
where each $\U_i$ is a unitary corresponding to the $i^{th}$ layer.
These optical elements are generated by phase shifters and beam splitters. An arbitrary number of vacuum modes can also be appended.

We consider an $N$-mode unitary LON, as defined in \cref{eq:LON}, followed by conditioning on a restricted class of measurements on WLOG the last $N-M$ modes. We set $q=W_{\ket{0}^{\otimes N}}$ which is a fixed point of the LON unitary. As previously, to ensure substochasticity, we restrict to measurements with $P$ and $\id-P$ Wigner-positive. With the choice $q'=W_{\ket{0}^{\otimes M}}$, \emph{every} such measurement, and thus the overall operation, is $(q,q')$-substochastic. This is in sharp contrast to the previous recipe. The reason is that the measurement acts on the Wigner function as 
\begin{equation}\label{normalizedmsmt}
    W_{\dyad{0}^{\otimes N}}\rightarrow W_{\dyad{0}^{\otimes M}} \Tr(P\dyad{0}^{\otimes N-M})
    \leq W_{\dyad{0}^{\otimes M}}.
\end{equation}
The inequality follows simply as outcome probabilities lie in the unit interval. %
Partial trace and homodyne measurements with conditioning are allowed operations because, as we established above, they are Wigner substochastic. Heterodyne measurement and conditioning on a region is also included in this set.

\subsubsection{Passive linear optical networks with Gaussian noise}
\label{sec:noisyLON}

A noisy LON is modeled by interleaving noise channels between the unitary layers,
\begin{equation}\label{eq:noisyLON}
    \widetilde{\U} \in \left\{ \N_L \circ \U_L \circ \dots \circ \N_1 \circ \U_1 \;|\; \U_1, \dots, \U_L \in {\rm LON} \right\} \,,
\end{equation}
where noise channel $\N_i$ follows the application of the $i^{\text{th}}$ unitary layer.

Photonic platforms predominantly suffer from Gaussian error sources, namely compositions of pure loss and Gaussian random noise~\cite{Weedbrook2012-qu,menicucci2014fault,Bourassa2021blueprint,bombin2021interleaving}.
Photon loss occurs at all optical elements including optical fibers used to connect components, and fiber-to-chip interfaces.
It is described by a linear bosonic channel $\N^{\rm loss}_\eta$ that acts on the first and second moments of a 1-mode state as follows,
\begin{equation}\label{eq:pureloss}
    \bmd \mapsto \sqrt{\eta} \bmd \,, \qquad V \mapsto \eta V + (1-\eta)\id \,,
\end{equation}
where $\eta \in [0,1]$ is the transmissivity of the channel.

Gaussian random noise (GRN) represents fluctuations that displace a state in phase space according to a Gaussian distribution, stemming for example from thermalization.
A Gaussian random noise channel $\N^{\rm GRN}_{\Sigma}$ is called isotropic when the phase-space fluctuations are rotationally invariant in phase space, i.e. they are independent and identically distributed in position and momentum.
In this case, the channel acts as
\begin{equation}\label{eq:isoranddisp}
    \bmd \mapsto \bmd \,, \qquad V \mapsto V + \Sigma \id \,,
\end{equation}
where $\Sigma \ge 0$ is the added covariance-matrix noise, corresponding to quadrature variance $\Sigma/4$.
A general phase-insensitive Gaussian attenuator can be obtained as the composition of these two channels,
\begin{equation}
    \N^{\rm att}_{\eta, \Sigma} \coloneqq \N^{\rm GRN}_{\Sigma} \circ \N^{\rm loss}_\eta
    \,.
\end{equation}
If each channel $\N_k$ is a tensor product of 1-mode attenuating channels of this form, the output reference distribution will depend on all the noise parameters $\{\eta_k, \Sigma_k\}_k$.
In the special case of depth-1 noisy LONs, i.e. $L=1$, the output reference state becomes an $M$-copy thermal state, $\tau' = \rho_{\rm th}(\bar{n})^{\otimes M}$, where
\begin{equation}
    \rho_{\rm th}(\bar{n}) 
    \coloneqq \left( \N^{\rm att}_{\eta_1, \Sigma_1} \circ \U \right) (\ketbra{0})
    = e^{-\beta\omega \hat a^\dag \hat a}/\Tr(e^{-\beta\omega \hat a^\dag \hat a})
    \,,
\end{equation}
where $\eta_1$ has no effect on the vacuum, and the state's mean photon number is
\begin{equation}
    \bar{n} \coloneqq \langle \hat a^\dag \hat a\rangle = \Sigma_1/2
    \,.
\end{equation}
Therefore, the class of allowed operations is $(q,q')$-substochastic with $q=W_{\ket{0}^{\otimes N}}$ and $q'=W_{\rho_{\rm th}(\bar{n})^{\otimes M}}$.

\subsubsection{Amplifying Gaussian channels}

Consider input and output states with the same number of modes and set $(q,q')$ to both be the uniform distribution equal to $1$. Then, as shown in Ref.~\cite{Upadhyaya2026}, the allowed operations include all Gaussian channels with $\abs{\det X}\geq 1$ and Gaussian-dilatable channels satisfying certain constraints including Wigner-positive environment.

\subsubsection{Larger classes of channels}
We can also consider larger classes of channels using kernels for representations other than the Wigner function. A prominent family of quasiprobability distributions is the $s$-ordered distributions~\cite{cahill1969density}, $s \in [-1,1]$, which includes the Wigner function~\cite{Wigner1932} at $s=0$. Phase-space point operators can be defined for these representations similarly to the Wigner function's. Then, the kernel of a CPTNI map where the input is in the $s$-ordering and the output is in the $t$-ordering is~\cite{Rahimi-Keshari2016}
\begin{equation}\label{skernel}
    K^{(t,s)}_{\Phi} (\bm{x},\bm{p},\bm{x}',\bm{p}') \coloneqq \pi^N \Tr( \hat{\Delta}_{t}(\bm{x},\bm{p}) \Phi(\hat{\Delta}_{-s}(\bm{x}',\bm{p}'))  ).
\end{equation}
Expressing the channel $\Phi$ in terms of its Choi matrix~\cite{watrous2018} via
\begin{equation}\label{choiaction}
    \Phi(A) = \Tr_2\left( J(\Phi) (\id_1 \otimes A^{\rm T}_2) \right)
    \,,
\end{equation}
we can rewrite the kernel in \cref{skernel} as 
\begin{align}
    K^{(t,s)}_{\Phi} (\bm{x},\bm{p},\bm{x}',\bm{p}') &= \pi^N \Tr( \hat{\Delta}_{t}(\bm{x},\bm{p}) \Tr_2( J(\Phi) (\id_1 \otimes \hat{\Delta}_{-s}(\bm{x}',\bm{p}')^{\rm T} )  ) )\\
    &=\pi^N \Tr( J(\Phi)(\hat{\Delta}_{t}(\bm{x},\bm{p}) \otimes \hat{\Delta}_{-s}(\bm{x}',-\bm{p}') )  ) 
\end{align}
where in the second line we have used that transposition on phase-space point operators acts as time-reversal, reversing momenta but not positions. Thus, for $t=-s$ we have %
\begin{equation}
    K^{(-s,s)}_{\Phi} (\bm{x},\bm{p},\bm{x}',\bm{p}') = \pi^N W^{(-s)}_{J(\Phi)} (\bm{x},\bm{p},\bm{x}',-\bm{p}') .  
\end{equation}
Decreasing $s$ corresponds to a convolution~\cite{cahill1969density}, so the condition $W_A^{(s)} \geq 0$ implies $W_A^{(s')} \geq 0$ for all $s' \le s$.
Thus, we have that if the $(-s,s)$-kernel of a channel is elementwise nonnegative, so too is the $(-s',s')$-kernel for $s'>s$. 
By choosing different values of $s$, we can thus tune the quantum channels whose kernels are stochastic. In the extreme case of $s=1$, the kernels map the P function of the input state to the Q function of the output state. 
In this case, every channel is stochastic because the kernel is always nonnegative, being the Husimi Q function of the Choi matrix.

\section{Certified majorization comparisons}
\label{sec:robustcurves}

Computing the Lorenz curves of a Wigner function analytically can be challenging, making it important to have a routine to obtain them numerically.
In order to numerically compute the positive and negative Lorenz curves relative to a reference distribution, we sample both functions on a discrete grid with a finite number of points. Since both functions are defined over $\mathbbm{R}^{2N}$, %
the curves we plot are only approximations to the true curves, as is generally the case with numerical methods.
For a specific state and reference distribution, we can change the sampling grid resolution, phase-space cutoff, and sparsity threshold by hand to gauge the convergence of the positive and negative curves to their true shape. 
However, a more principled approach that does not require manual analysis for each case would be desirable. In this section, we derive such an approach. We derive robust Lorenz \emph{envelopes} which are guaranteed to contain the true positive and negative curves. We emphasize that these are not probabilistic error bars, but rigorous deterministic bounds on the true curves that account for numerical error. For vanishing numerical error, the envelopes converge to the true curves.

Given robust envelopes for two pairs of Lorenz curves, to certify the existence of a crossing we need to observe that the entire envelopes interchange which one is on top. It is then guaranteed that the true curves must also cross, while otherwise we cannot conclude whether the true curves cross or not. Both cases are illustrated in \cref{robustbounce}.
\begin{figure}[htbp]
    \centering
    \begin{tikzpicture}
        \begin{axis}[
            width=0.46\textwidth, height=7cm,
            title={(a) Robust crossing},
            axis lines=left,
            xlabel={$s$},
            ylabel={$L(s)$},
            xtick=\empty, ytick=\empty,
            domain=0:1, samples=200,
            xmin=0, xmax=1,
            ymin=0, ymax=1.3,
        ]
            \def\err{0.4} %

            \addplot[blue, thick, name path=fupper]
                {h(x)+0.05*x+0.1};
            \addplot[blue, thick, name path=flower]
                {h(x)+0.02*x+0};
            \addplot[blue, fill opacity=0.2, draw=none]
                fill between[of=fupper and flower];
            
            \addplot[orange, thick, name path=gupper]
                {1.7*x-0.5*x^2+0.};
            \addplot[orange, thick, name path=glower]
                {1.5*x-0.45*x^2};
            \addplot[orange, fill opacity=0.2, draw=none]
                fill between[of=gupper and glower];
        \end{axis}
    \end{tikzpicture}
    \hfill
    \begin{tikzpicture}
        \begin{axis}[
            width=0.46\textwidth, height=7cm,
            title={(b) No robust crossing},
            axis lines=left,
            xlabel={$s$},
            ylabel={$L(s)$},
            xtick=\empty, ytick=\empty,
            domain=0:1.2, samples=200,
            xmin=0, xmax=1.2,
            ymin=0, ymax=1.3,
        ]
            \def\err{0.04} %
            \addplot[blue, thick, name path=fupper]
                {1.2*(h(x)+0*x+0.2)};
            \addplot[blue, thick, name path=flower]
                {1.2*(h(x)+0)};
            \addplot[blue, fill opacity=0.2, draw=none]
                fill between[of=fupper and flower];
            
            \addplot[orange, thick, name path=gupper]
                {1*(1.75*x-1.5*x^2+0.43*x^3+0.+0.3*h(1.1*x))};
            \addplot[orange, thick, name path=glower]
                {1.2*(0.8*x-0.25*x^2)};
            \addplot[orange, fill opacity=0.2, draw=none]
                fill between[of=gupper and glower];
        \end{axis}
    \end{tikzpicture}
    \caption{Illustration of (a) a certified crossing: no matter where the true curves lie in their respective envelopes, they must cross; and (b) the absence of a robust crossing: the true curves could lie in their envelopes without intersecting, we thus cannot conclude whether they do or do not intersect.}
    \label{robustbounce}
\end{figure}

In more detail, each of the positive and negative robust envelopes consists of an inner bound and an outer bound. We use this terminology instead of lower and upper bounds to highlight the symmetry between the positive and negative curves and minimize confusion with negative signs. Each bound comes from a discrete list $Z_{I/O,P/N}$ whose Lorenz curve is piecewise linear and can thus be computed exactly. As before, the lists' Lorenz curves are extended by a horizontal plateau so that they are defined for all $s$. The Lorenz curves of these lists for positive and negative outer and inner bounds are derived to satisfy
\begin{align}
    \declor{Z_\text{I,P}} \leq &\qdeclor{W_\rho} \leq \declor{Z_\text{O,P}}, \\
    \inclor{Z_\text{I,N}} \geq &\qinclor{W_\rho} \geq \inclor{Z_\text{O,N}}. 
\end{align}

We can now provide a numerically stable version of the statement about Lorenz curves from \cref{physmajor}.
\begin{theorem}\label{certifiedcrossing}
    Let $Z$ and $Z'$ be the discrete envelope lists for $(W_\rho,q)$ and $(W_{\rho'},q')$. The heralded conversion of $\rho$ to $\rho'$ with probability $p$ using any operation from $\O$ is impossible if there exists an $s$ such that 
    \begin{align}
        \declor{Z_\text{O,P}} &< p \declor{Z'_\text{I,P}}, \quad \text{or} \\
        \inclor{Z_\text{O,N}} &>p \inclor{Z'_\text{I,N}}.
    \end{align}    
\end{theorem}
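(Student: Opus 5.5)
The plan is to reduce \cref{certifiedcrossing} directly to condition 1 of \cref{physmajor}, using the envelope inequalities that define $Z$ and $Z'$. We take these inequalities as given from the construction of the lists. The first is
\begin{equation*}
\declor{Z_\text{I,P}} \leq \qdeclor{W_\rho} \leq \declor{Z_\text{O,P}}, \qquad \inclor{Z_\text{I,N}} \geq \qinclor{W_\rho} \geq \inclor{Z_\text{O,N}},
\end{equation*}
and the second is the analogous pair for $Z'$ relative to $(W_{\rho'},q')$. Both hold for every $s\ge 0$, with the plateau extension beyond the total measure. That extension is applied uniformly to the true curves and to the list curves, so the sandwich holds on all of $[0,\infty)$.

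\textbf{Positive curve.} Suppose there is an $s$ with $\declor{Z_\text{O,P}} < p\,\declor{Z'_\text{I,P}}$. Chaining at this same $s$ gives
\begin{equation*}
\qdeclor{W_\rho} \le \declor{Z_\text{O,P}} < p\,\declor{Z'_\text{I,P}} \le p\,\qpdeclor{W_{\rho'}}.
\end{equation*}
The last step uses $p>0$, so multiplying the inner bound for $\rho'$ by $p$ preserves the inequality. The resulting strict inequality $\qdeclor{W_\rho} < p\,\qpdeclor{W_{\rho'}}$ is exactly the first alternative of condition 1 in \cref{physmajor}, so the heralded conversion with probability $p$ under $\O$ is impossible.

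\textbf{Negative curve.} This case is symmetric. From $\inclor{Z_\text{O,N}} > p\,\inclor{Z'_\text{I,N}}$ we chain
\begin{equation*}
\qinclor{W_\rho} \ge \inclor{Z_\text{O,N}} > p\,\inclor{Z'_\text{I,N}} \ge p\,\qpinclor{W_{\rho'}},
\end{equation*}
which is the second alternative of condition 1.

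We need $s>0$, as required in \cref{physmajor}. At $s=0$ all curves vanish, so neither strict inequality can hold there, and any witnessing $s$ is automatically positive.

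\textbf{Main obstacle.} The chaining itself is trivial. The real content is justifying the envelope inequalities, namely that the piecewise-linear Lorenz curves of the discrete lists bound the true continuum relative Lorenz curves on all of $[0,\infty)$, including the regions beyond the grid cutoff and at sub-grid resolution. Since the statement presents these bounds as properties of $Z$ and $Z'$ by construction, the proof relies on them. The remaining care is to check that the plateau conventions for the list curves and the true curves are compatible, so that the sandwich still holds past the total measure.
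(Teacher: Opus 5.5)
Your proposal is correct and matches the paper's argument. The paper gives no separate proof: it relies on the lists being built so that $(f_\text{O},q_\text{O}) \weak (W_\rho,q) \weak (f_\text{I},q_\text{I})$, which means the envelopes sandwich the true curves for all $s$ under the plateau convention. A robust crossing of the envelopes therefore forces a crossing of the true curves, and condition 1 of \cref{physmajor} applies, which is exactly your pointwise chaining at the witnessing $s$.

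One small imprecision: the outer lists contain zero-measure cutoff entries such as $(w_f^+,0)$, so the outer envelopes need not vanish at the origin. Your conclusion $s>0$ still holds, because the chained inequality is a strict inequality between the true curves, and those do vanish at $s=0$.
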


In \cref{app:robust}, we explain how we construct these envelopes for a generic Wigner function and reference distribution. We require some regularity conditions on the Wigner function enforced via its first and second derivatives. The envelopes account for the cutoff in phase space over which we sample and the discreteness of the grid used. They also involve a coarse-graining step that reduces the size of the lists to improve computational efficiency while maintaining the bounds' validity. Our construction also allows numerical floating-point tolerance and function-evaluation error to be taken into account.

\section{Applications to photonic state conversions}
\label{sec:conversion}

In this section, we present four applications of our majorization framework to constraining state conversions.
We start in \cref{sec:squeezing} with a pedagogical example, demonstrating how considering the entire family of Wigner $\alpha$-R\'enyi divergences prohibits exact squeezing concentration under LONs.
Then, we use the Wigner 2-R\'enyi divergence in \cref{sec:cat} to obtain analytical tradeoff relations that constrain heralded, approximate cat state conversions.
In \cref{sec:gkp}, we use robust numerical Lorenz curves to constrain interconversions of GKP states. 
Finally, in \cref{sec:Fock} we recover  tight constraints on Fock state concentration under LONs using analytical forms for portions of Lorenz curves.

\subsection{Squeezing amplification no-go via the family of Wigner R\'enyi divergences}
\label{sec:squeezing}

As a warm-up example, we show purely through majorization that LONs cannot concentrate squeezing.
That is, for any $N \ge 1$, $\ket{r}^{\otimes N} \not\rightarrow \ket{r'} \otimes \ket{0}^{\otimes N-1}$, whenever $|r| < |r'|$, where $\ket{r} \coloneqq \exp(\frac{r}{2}(\hat{a}^{\dagger 2} - \hat{a}^2))\ket{0}$ and similarly for $\ket{r'}$.
This result is clear by noting that passive unitaries (LONs) transform a covariance matrix $V$ to $V' \coloneqq OVO^{\rm T}$ for some orthogonal matrix $O$, thus the input and output covariance matrices $V, V'$ must have equal spectra, which is impossible when $|r| < |r'|$.

We can reach the same by considering the family of Wigner $\alpha$-R\'enyi divergences with $\alpha > 1$.
In \cref{app:wigner_cat}, we calculate that
\begin{equation}\label{eq:squ_renyi}
    D_\alpha(W_{\ket{r}}\|W_{\ket{0}}) 
    = -\frac{1}{2(\alpha-1)}\log\left( 1 - 4\alpha(\alpha-1)\sinh^2(r) \right) \,,
\end{equation}
if $|r| < \coth^{-1}(2\alpha-1)$ and diverges to $+\infty$ otherwise.
\cref{prop:div} requires that $N D_\alpha(W_{\ket{r}}\|W_{\ket{0}}) \ge D_\alpha(W_{\ket{r'}}\|W_{\ket{0}})$, which can be rearranged to the tradeoff relation
\begin{equation}\label{eq:squ_tradeoff}
    \cosh(2r') \le 1 + \frac{1 - \Big(1 - 4\alpha(\alpha-1)\sinh^2(r)\Big)^N}{2\alpha(\alpha-1)} \,.
\end{equation}
In the limit $\alpha \rightarrow 1^+$, the Wigner $\alpha$-R\'enyi divergence is finite for all $|r|$ and approaches $D_1(W_{\ket{r}}\|W_{\ket{0}}) = 2\sinh^2(r)$ which equals twice the energy of $\ket{r}$.
Then, \cref{eq:squ_tradeoff} coincides with the energy bound $\sinh^2(r') \le N\sinh^2(r)$.

Considering instead the entire family of Wigner $\alpha$-R\'enyi divergences, we can conclude that the transformation is impossible: given a target squeezing $|r'| > |r|$, we can choose
\begin{equation}
    \alpha = \frac{1}{2}(\coth|r'|+1) > 1 \,,
\end{equation}
making the Wigner $\alpha$-R\'enyi divergence finite at the input and infinite at the output.
Therefore, \cref{prop:div} guarantees a zero probability of conversion.

\subsection{Tradeoff relations for approximate cat state distillation via robust divergences}
\label{sec:cat}

Here, we study approximate cat state conversions under the class of operations described in \cref{sec:noisyLON}.
Accounting for experimental imperfections, we consider the set $\O$ of depth-1 noisy LONs, which map $q = W_{\ket{0}^{\otimes N}}$ to $q'=W_{\rho_{\rm th}(\bar{n})^{\otimes M}}$ in phase space.

The cat state is a superposition of coherent states with opposite displacement vectors,
\begin{equation}
    \ket{\alpha}_{\rm cat} \coloneqq A (\ket{\alpha} + \ket{-\alpha}) \,, \quad A \coloneqq \frac{1}{\sqrt{2(1+e^{-2|\alpha|^2})}} \,,
\end{equation}
for some $\alpha = X + iP$ where $X,P \in \mathbb{R}$.
Cat states find many important applications in photonic and superconducting platforms. 
A predominant application is in their use as hardware-efficient, biased-noise qubits.
In particular, logical bit flips can be suppressed exponentially in $|\alpha|$ reducing the requirements for quantum error correction by allowing the use of phase-flip repetition codes~\cite{chamberland2022building}.
Cat states are also common in quantum metrology and sensing as they achieve Heisenberg-limited sensitivity~\cite{munro2002weak,joo2011quantum}, and in quantum communication, enabling quantum teleportation~\cite{enk2001entangled,wang2001quantum}.
Finally, they can be thought of as minimal finite-term approximations of GKP states and are in fact used to breed GKP states~\cite{vasconcelos2010all,Weigand2018generating}.

Under the Gaussian attenuating noise model we are considering, a noisy cat state is defined as
\begin{equation}
    \rho(\alpha; \eta, \Sigma) \coloneqq \N^{\rm att}_{\eta, \Sigma}(\ketbra{\alpha}_{\rm cat}) \,,
\end{equation}
for some $\eta \in [0,1]$ and $\Sigma \ge 0$.
In this setting, the Wigner 2-R\'enyi divergence of an attenuated cat state relative to a thermal state is computable analytically as the integrand $W_\rho(x,p)^2 W_\tau(x,p)^{-1}$ contains only real Gaussian terms and interference terms that can be understood as complex Gaussian functions in $x,p$.
We perform this derivation in \cref{app:wigner_cat}, and present the result here.
\begin{lemma}\label[lemma]{cat2renyi}
    Let $\alpha \in \mathbb{C}$, $\eta \in [0,1]$, and $\bar{n}, \Sigma \ge 0$.
    The Wigner 2-R\'enyi divergence of an attenuated cat state $\rho(\alpha; \eta, \Sigma)$ relative to a thermal state $\rho_{\rm th}(\bar{n})$ is
    \begin{equation}\label{eq:D2catnoisy}
        D_2(W_{\rho(\alpha; \eta, \Sigma)} \| W_{\rho_{\rm th}(\bar{n})})
        = \log \frac{K (1+2\bar{n})^2 \Big( \cosh\left(2(1 + 2\eta K (\Sigma - 2\bar{n})) |\alpha|^2\right) \cosh\left(4\eta K (1+2\bar{n}) |\alpha|^2\right) + 1 \Big)}{\cosh(2|\alpha|^2) + 1}
        \,,
    \end{equation}
    with
    \begin{equation}
        K = (1+\Sigma)^{-1}(1+4\bar{n}-\Sigma)^{-1}
        \,,
    \end{equation}
    if $\Sigma < 1+4\bar{n}$ and $D_2(W_{\rho(\alpha; \eta, \Sigma)} \| W_{\rho_{\rm th}(\bar{n})}) = +\infty$ otherwise.
\end{lemma}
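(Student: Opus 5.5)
The plan is a direct Gaussian-integral computation. The first step is to write the Wigner function of the noisy cat state explicitly. With the paper's convention $\hbar=1/2$, the vacuum has covariance $V=\id$, and the Wigner function of $\ketbra{\alpha}_{\rm cat}$ is $A^2$ times the sum of three pieces:
\begin{itemize}
\item two real Gaussians $\frac{2}{\pi}e^{-2\|\bm r\mp\bm r_\alpha\|^2}$, centred at $\pm\bm r_\alpha=\pm(X,P)$;
\item an interference term $\frac{4}{\pi}e^{-2\|\bm r\|^2}\cos\!\big(4(Px-Xp)\big)$, which can be written as the real part of a complex Gaussian.
\end{itemize}
The attenuator acts on each Gaussian-type term through \cref{eq:xychannel}, with $X=\sqrt\eta\,\id$ and $Y=(1-\eta+\Sigma)\id$:
\begin{itemize}
\item the centres become $\pm\sqrt\eta\,\bm r_\alpha$;
\item the covariance becomes $(1+\Sigma)\id$, since $\eta\cdot 1+(1-\eta)+\Sigma=1+\Sigma$;
\item the complex "centre" $\pm i\sqrt\eta\,\Omega\bm r_\alpha$ of the interference term is rescaled in the same way. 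Convolving its Gaussian envelope with the noise yields an extra damping factor $\exp(-2\eta\Sigma|\alpha|^2/(1+\Sigma))$-type, equivalently $e^{-2|\alpha|^2(1-\eta/(1+\Sigma))}$ after accounting for normalization. I will derive this by applying the Gaussian kernel \cref{eq:gaussian_wigner_kernel} term by term, or more cleanly via the characteristic function and \cref{eq:dual_channel}.
\end{itemize}
The thermal reference is $W_{\rho_{\rm th}(\bar n)}=\frac{2}{\pi(1+2\bar n)}e^{-2\|\bm r\|^2/(1+2\bar n)}$.

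Next I expand $W_\rho^2 W_\tau^{-1}$. Squaring gives nine products of (possibly complex) Gaussians, each multiplied by $e^{+2\|\bm r\|^2/(1+2\bar n)}$. Every term has a quadratic exponent whose coefficient is
$$-\frac{4}{1+\Sigma}+\frac{2}{1+2\bar n}=-\frac{2(1+4\bar n-\Sigma)}{(1+\Sigma)(1+2\bar n)}.$$
So the integrals converge if and only if $\Sigma<1+4\bar n$, and otherwise the leading real Gaussian terms give $+\infty$. This settles the divergent case.

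For $\Sigma<1+4\bar n$, I evaluate each term with the standard formula $\int e^{-a\|\bm r\|^2+\bm b\cdot\bm r}\,{\rm d}^2\bm r=\frac{\pi}{a}e^{\|\bm b\|^2/(4a)}$, valid also for complex $\bm b$. The terms collect by symmetry ($\pm$ centres, and the conjugate pair of interference terms) into four groups:
\begin{itemize}
\item the two "diagonal" squares, which give $\cosh$-type contributions with an exponent shifted by the reference;
\item the cross term of the two real Gaussians, which gives a constant;
\item the interference squared, which gives $\cosh$ of a purely real exponent, because the imaginary linear terms square to negative quantities;
\item the products of real Gaussians with interference terms, which give products of phase and decay factors.
\end{itemize}
Factoring out $K(1+2\bar n)^2$ should reorganise the sum into $\cosh(a)\cosh(b)+1$, via $2\cosh a\cosh b=\cosh(a+b)+\cosh(a-b)$. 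Here $a=2(1+2\eta K(\Sigma-2\bar n))|\alpha|^2$ comes from the real-displacement parts, and $b=4\eta K(1+2\bar n)|\alpha|^2$ comes from the interference parts. The prefactor $A^4$ combines with the $e^{-2|\alpha|^2}$ damping factors to give the denominator $\cosh(2|\alpha|^2)+1$, using $A^2=\big(2(1+e^{-2|\alpha|^2})\big)^{-1}$. Taking the logarithm, with $\alpha=2$ in \cref{def:WignerRenyi}, gives the claim.

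I expect the main obstacle to be the bookkeeping in this last step. The difficulty is tracking the interference-term damping and the complex linear coefficients so that the nine Gaussian integrals collapse exactly into the two-$\cosh$ form. To keep this manageable, I will rotate phase space so that $\alpha$ is real, which is allowed because both the attenuator and $W_\tau$ are rotation invariant. I will then check the result in the limiting cases $\eta=0$ (value $\log$ of a vacuum-vs-thermal expression) and $\eta=1,\Sigma=\bar n=0$ (pure cat against vacuum) as sanity checks.
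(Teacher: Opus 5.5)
Your proposal is correct and follows essentially the same route as the paper. The paper obtains the attenuated cat Wigner function by convolving with the attenuator kernel (\cref{attcatW}), expands $W_\rho^2/W_{\rho_{\rm th}(\bar n)}$ into real and complex Gaussian terms that share the quadratic coefficient $-2\bigl(\tfrac{2}{1+\Sigma}-\tfrac{1}{1+2\bar n}\bigr)$ (hence convergence iff $\Sigma<1+4\bar n$), and evaluates and regroups the integrals term by term.

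One correction to your predicted bookkeeping, in your notation $a,b$: the cross term of the two real Gaussians is not a constant but is proportional to $e^{-4\eta|\alpha|^2/(1+\Sigma)}$, and it combines with the two diagonal squares into the $e^{a}\cosh b$ part of $\cosh a\cosh b$. The squared interference term supplies the $e^{-a}\cosh b$ part, and the $+1$ comes from the four real-times-interference products, which (relative to the common Gaussian-integral factor) sum to exactly $4e^{-2|\alpha|^2}$, with all dependence on $\eta,\Sigma,\bar n$ cancelling.
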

In the limit of complete attenuation, $\eta = 0$, the expression in \cref{eq:D2catnoisy} simplifies to the classical 2-R\'enyi divergence of a Gaussian distribution with variance $\frac{1+\Sigma}{4}\id_2$ relative to a Gaussian distribution with variance $\frac{1+2\bar{n}}{4}\id_2$.
In the opposite regime where all noise is absent, $\eta = 1$ and $\bar{n} = \Sigma = 0$, the expression simplifies to
\begin{equation}\label{eq:D2catnoiseless}
    D_2(W_{\dyad{\alpha}_{\rm cat}} \| W_{\dyad{0}})
    = \log\Big(1 + 2\cosh(2|\alpha|^2)\left(\cosh(2|\alpha|^2)-1\right)\Big)
    \,.
\end{equation}

Our goal is to approximately convert $N$ copies of a noisy cat state $\rho(\alpha; \eta, \Sigma)^{\otimes N}$ to $M$ copies of another noisy cat state $\rho(\alpha'; \eta', \Sigma')^{\otimes M}$, using depth-1 noisy LONs.
Combining the expression derived in \cref{cat2renyi} with \cref{corollaryprobbound_extended}, we obtain in \cref{app:wigner_cat} the following analytical tradeoff relations between the parameters of interest.
\begin{corollary}\label{thm:catstates}
    Let $\bar{n} > 0$ and consider the set of operations $\O$ with $q = W_{\ket{0}^{\otimes N}}$ and $q'=W_{\rho_{\rm th}(\bar{n})^{\otimes M}}$ (see \cref{sec:noisyLON}).
    Let $\rho(\alpha; \eta, \Sigma)$ and  $\rho(\alpha'; \eta', \Sigma')$ with $\eta, \eta' \in [0,1], \alpha,\alpha' \in \mathbb{C}$ and $\Sigma, \Sigma' \ge 0$ be the input and target noisy cat states, respectively.
    
    Suppose $\Sigma < 1$ and $\Sigma' < 1 + 4\bar{n}$, and define the constant
    \begin{equation}
        c
        = \frac{1 + 4\bar{n} - \Sigma'}{2(2\bar{n}+1)}
        \in (0, 1)
        \,.
    \end{equation}
    Then, there exists a constant $\Delta > 0$, such that:
    If an operation $\E \in \O$ achieves the probabilistic approximate conversion
    \begin{equation}\label{cat_conversion}
        \E\left( \rho(\alpha; \eta, \Sigma)^{\otimes N} \right) = p\tilde{\rho}
        \,, \qquad
        \frac{1}{2} \|\tilde{\rho} - \rho(\alpha'; \eta', \Sigma')^{\otimes M} \|_1 \le \epsilon
        \,,
    \end{equation}
    with target precision $\epsilon \in [0, \Delta^{-1/c})$, the following bound must be satisfied
    \begin{equation}\label{pbound_cat}
        p^2 (1-\Delta\epsilon^c) \le \exp( N D_2(W_{\rho(\alpha; \eta, \Sigma)} \| W_{\ket{0}}) - M D_2(W_{\rho(\alpha'; \eta', \Sigma')} \| W_{\rho_{\rm th}(\bar{n})}) )
        \,.
    \end{equation}
    Equivalently, whenever the bound in \cref{pbound_cat} is violated, there exists no $\E \in \O$ achieving the conversion in \cref{cat_conversion}.
\end{corollary}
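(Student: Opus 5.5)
We combine \cref{prop:div} (data processing plus the probability bound \cref{corollaryprobbound} at $\alpha=2$) with the lower semicontinuity bound of \cref{thm:semicontinuity}(ii), applied to the target state $\rho(\alpha';\eta',\Sigma')^{\otimes M}$ with the Gaussian reference $\tau=\rho_{\rm th}(\bar n)^{\otimes M}$. We take $\tau_{\rm in}=\ketbra{0}^{\otimes N}$. By \cref{sec:noisyLON}, every $\E\in\O$ maps $W_{\tau_{\rm in}}$ to $W_{\rho_{\rm th}(\bar n)^{\otimes M}}$. Strictly, for conditioned operations $\E(\tau_{\rm in})=t\tau'$ with $t\le 1$; since $t^{1-\alpha}=t^{-1}\ge 1$ at $\alpha=2$, the factor $t$ can simply be dropped from the left-hand side of \cref{corollaryprobbound_extended}. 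Alternatively, we argue directly with $(q,q')$-substochasticity: $q'$ dominates $\E(q)$, and a smaller reference only increases $D_2$ for $\alpha>1$. Data processing then gives $D_2(W_{\rho^{\otimes N}}\|W_{\ket0^{\otimes N}})\ge D_2(W_{p\tilde\rho}\|W_{\rho_{\rm th}^{\otimes M}})$. By \cref{renyiscale} at $\alpha=2$, the right-hand side equals $D_2(W_{\tilde\rho}\|\cdot)+2\log p$. Additivity turns the left-hand side into $N D_2(W_{\rho(\alpha;\eta,\Sigma)}\|W_{\ket0})$. The condition $\Sigma<1$ is exactly the finiteness condition of \cref{cat2renyi} with $\bar n=0$, so this quantity is finite.

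Next, we lower bound $D_2(W_{\tilde\rho}\|W_{\rho_{\rm th}^{\otimes M}})$ via \cref{eq:semicontinuity}. This gives $D_2(W_{\tilde\rho}\|\cdot)\ge M D_2(W_{\rho(\alpha';\eta',\Sigma')}\|W_{\rho_{\rm th}(\bar n)})-\log\frac{1}{1-\Delta\epsilon^c}$, using additivity on the target. Exponentiating and rearranging yields \cref{pbound_cat}. Everything except the hypothesis of \cref{thm:semicontinuity}(ii) and the identification of $c$ is bookkeeping.

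The main step is verifying the tail-margin assumption \cref{eq:Dkappa_assumption} and computing the resulting exponent. The reference has $\bmd=0$ and $V=(1+2\bar n)\id_{2M}$, so $\|V^{-1}\|_\infty=1/(1+2\bar n)$. Now $q_\kappa=e^{-2\kappa\|\bm r\|^2}W_{\rho_{\rm th}}^{\otimes M}$ is, up to normalization, a product of isotropic Gaussians with effective variance parameter $\sigma_\kappa$ given by $1/\sigma_\kappa=1/(1+2\bar n)+\kappa$ (in covariance-matrix units). We then rerun the computation behind \cref{cat2renyi} with a thermal-like Gaussian of reduced variance. The integrand $W^2/q_\kappa$ is a sum of real and complex Gaussians, which are integrable precisely when the quadratic form $2/(1+\Sigma')-1/\sigma_\kappa$ is positive. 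Hence the condition is $\kappa<\frac{2}{1+\Sigma'}-\frac{1}{1+2\bar n}=\frac{1+4\bar n-\Sigma'}{(1+\Sigma')(1+2\bar n)}$, which is positive exactly when $\Sigma'<1+4\bar n$. This also confirms that $D_2$ of the target is finite. The interference terms only contribute bounded oscillatory factors, so they do not affect integrability.

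Finally, we choose $\kappa$ in the admissible range and substitute into $c=\kappa\alpha/(\kappa\alpha+\|V^{-1}\|_\infty)$ with $\alpha=2$. Since $c$ depends only on $\kappa$ and $\bar n$, we expect the stated $c=\frac{1+4\bar n-\Sigma'}{2(2\bar n+1)}$ to come from a specific admissible choice, possibly a supremum-type value, or a convenient admissible $\kappa$ such as $\kappa=\frac{1+4\bar n-\Sigma'}{2(1+2\bar n)}\cdot\frac{1}{\ldots}$. If the exact constant does not match, we would check that any $\kappa$ giving the stated $c$ lies in the admissible range, shrinking $c$ if necessary, since a smaller $c$ only weakens the bound for $\epsilon<1$. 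The constant $\Delta$ is whatever \cref{thm:semicontinuity} furnishes for this $\kappa$; it is finite because the tail-margin divergence is finite.
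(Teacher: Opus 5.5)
Your route is the same as the paper's. You use data processing at $\alpha=2$ for the $(q,q')$-substochastic class, then the scaling rule and additivity. You then apply \cref{thm:semicontinuity}(ii) to the target $\rho(\alpha';\eta',\Sigma')^{\otimes M}$ with $\tau=\rho_{\rm th}(\bar n)^{\otimes M}$ and $\|V^{-1}\|_\infty=(1+2\bar n)^{-1}$. Your direct use of $\mathcal{E}(q)\le q'$, rather than assuming $\mathcal{E}(q)=tq'$, is if anything slightly more careful than the paper's appeal to \cref{corollaryprobbound_extended} with $t=1$. Your admissibility window $0<\kappa<\kappa_{\max}\coloneqq\frac{2}{1+\Sigma'}-\frac{1}{1+2\bar n}$ is exactly the paper's.

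The gap is that you never fix $\kappa$, so the stated exponent $c$ is not established, and both of your hedges fail.

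\begin{itemize}
\item The supremum $\kappa=\kappa_{\max}$ is not admissible. There the quadratic coefficient of $W_{\rho'}^2/q_\kappa$ vanishes, so $J_2(W_{\rho'}\|q_{\kappa})=\infty$.
\item ``Shrinking $c$'' goes the wrong way. For $\epsilon<1$, a bound $p^2(1-\Delta_1\epsilon^{c_1})\le\ldots$ with $c_1<c$ is weaker than the claimed one. It cannot imply the claimed bound for any choice of $\Delta$, because $\epsilon^{c_1}/\epsilon^{c}\to\infty$ as $\epsilon\to0$. A larger exponent is the stronger statement, so you need at least the stated $c$.
\end{itemize}

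What is needed, and what the paper does, is to take half the threshold:
\[
\kappa_\star=\frac{\kappa_{\max}}{2}=\frac{1+4\bar n-\Sigma'}{2(1+\Sigma')(1+2\bar n)} .
\]
This is positive and admissible precisely because $\Sigma'<1+4\bar n$; the missing factor in your guess is $1/(1+\Sigma')$. It gives
\[
c=\frac{2\kappa_\star}{2\kappa_\star+(1+2\bar n)^{-1}}=\frac{1+4\bar n-\Sigma'}{(1+4\bar n-\Sigma')+(1+\Sigma')}=\frac{1+4\bar n-\Sigma'}{2(2\bar n+1)},
\]
which is exactly the stated constant. The constant $\Delta$ is then \cref{eq:Delta_const} evaluated at $\alpha=2$ and $\kappa=\kappa_\star$. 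There the $M$-mode tail-margin quantity factorizes into the $M$-th power of the single-mode one, by the product structure of $q_\kappa$. With this explicit choice in place, the rest of your argument goes through as written.
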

We emphasize that the constants $\Delta$ and $c$ are independent of the perturbed state $\tilde{\rho}$ and error rate $\epsilon$.
An expression for $\Delta$ in terms of the quantities involved can be found  in \cref{app:wigner_cat}.

While the general bound in \cref{corollaryprobbound_extended} can in principle be optimized over the R\'enyi order $\alpha$, such an optimization is analytically and computationally prohibitive.
Even computing the R\'enyi divergence generally requires expensive integration over a high-dimensional phase space.
This highlights the importance of our analytical bound in \cref{pbound_cat}.

\subsection{GKP state conversion no-goes via robust numerical Lorenz curves}
\label{sec:gkp}

In this section, we consider interconversions between GKP code states~\cite{Gottesman2001}. GKP codes offer a promising approach to encoding a qudit in a harmonic oscillator and protecting it against small displacement errors. They have led to the first demonstration of quantum error correction beyond the break-even point with a superconducting device~\cite{sivak2023real} and have been recently experimentally demonstrated in photonics~\cite{konno2024logical,larsen2025integrated}. GKP codes are experimentally attractive as logical Clifford unitaries can be implemented using Gaussian unitaries. Given a supply of GKP-encoded stabilizer states, Gaussian operations suffice for universal quantum computation~\cite{Baragiola_2019}. These features underpin architectures for scalable, fault-tolerant photonic quantum computing~\cite{Bourassa2021blueprint}. %
Here we focus on a recent protocol~\cite{Hosseinynejad2026} that uses GKP sensor states to prepare GKP encodings of magic states.

Ideal GKP states are not normalizable. We use one of the standard models for a non-ideal GKP state which is damping by an energy envelope~\cite{Matsuura2020}, 
\begin{equation}
\ket{\psi}_\text{physical} \propto e^{-\epsilon_\text{damp}  \hat{n}}  \ket{\psi}_\text{ideal},
\end{equation}
where $\psi$ is the encoded qudit state.

The Wigner functions of ideal GKP states are linear combinations of Dirac delta functions. Let a generic GKP state encoding the qudit state $\rho$ be represented as
\begin{equation}\label{eq:idealgkpwigner}
    W_{\rho,\text{ideal}}(\mathbf r)
    =\sum_{j\in J}c_j\,
    \delta^{(2)}(\mathbf r-\boldsymbol\mu_j).
\end{equation}
See Refs.~\cite{PhysRevLett.128.210502,PRXQuantum.6.010330} for the precise definition of $J$, $c_j$, and $\mu_j$.
Then, the Wigner function of the damped and renormalized state is a linear combination of finite-width Gaussians~\cite{Bourassa2021}:
\begin{align}\label{eq:dampedgkpwigner}
    W_{\rho,\text{physical}}(\mathbf r)
    &=\sum_{j\in J}c_{j,\epsilon_\text{damp}}\,
    G_{\epsilon_\text{damp}}\!\left(
    \mathbf r-\boldsymbol\mu_j \sech \epsilon_\text{damp} \right),
\end{align}
where 
\begin{align}
G_{\epsilon_\text{damp}}(\mathbf r')&\coloneqq\frac{2}{\pi \tanh\epsilon_\text{damp}}
    \exp\!\left(-\frac{2|\mathbf r'|^2}{\tanh\epsilon_\text{damp}}\right) \quad \text{and} \quad  
    c_{j,\epsilon_\text{damp}}
    \coloneqq\frac{c_j
    \exp\!\left(-2 \tanh\epsilon_\text{damp} |\boldsymbol\mu_j|^2\right)}
    {\sum_{j\in J}c_j
    \exp\!\left(-2 \tanh\epsilon_\text{damp} |\boldsymbol\mu_j|^2\right)}.
\end{align}
The damping broadens each peak, contracts the peak centers, and suppresses the weights of peaks far from the origin.

We list three specific states that will arise in our examples. The sensor or qunaught state is 
\begin{equation}\label{eq:sensor-ket}
    \ket{\varnothing}_\text{ideal}
    \propto\sum_{j\in\mathbb Z}\ket{x=j\sqrt{\pi}},
\end{equation}
and its Wigner function is
\begin{equation}\label{eq:sensorwigner}
    W_{\ket{\varnothing},\text{ideal}}(\mathbf r)
    \propto\sum_{m,n\in\mathbb Z}(-1)^{mn}\,
    \delta^{(2)}\!\left(\mathbf r-\frac{\sqrt{\pi}}{2}
    \begin{pmatrix} m \\ n \end{pmatrix}\right).
\end{equation}
The logical zero and one states ($i=0,1$) are
\begin{equation}
    \ket{i}_{\text{ideal}}
    \propto\sum_{j\in\mathbb Z}\ket{x=j\sqrt{2\pi}+ i\sqrt{\frac{\pi}{2}}}, 
\end{equation}
with Wigner functions
\begin{equation}\label{eq:logicalwigner}
    W_{\ket{i},\text{ideal}}(\mathbf r)
    \propto\sum_{m,n\in\mathbb Z}(-1)^{n(m+i)}\,
    \delta^{(2)}\!\left(\mathbf r-\frac12\sqrt{\frac{\pi}{2}}
    \begin{pmatrix} 2m \\ n \end{pmatrix}\right).
\end{equation}
Numerically, we truncate the sums and accordingly estimate the tolerance for pointwise accuracy.

\subsubsection{State generation from GKP sensor states}
Approximate conversions from damped sensor states to damped magic GKP states using linear optics and postselection are considered in Ref.~\cite{Hosseinynejad2026}. Based on their setup, we consider a scenario where exact, probabilistic conversion is possible.
The input state is two copies of a damped sensor state. The desired output is an H magic GKP state at the same damping. Protocols to obtain GKP encodings of magic states are of interest as deterministic generation from stabilizer GKP states requires demanding non-Gaussian operations~\cite{PRXQuantum.6.010330}. We use the convention $\ket{H}_{\text{ideal}}= \sin \pi/8 \ket{0}_{\text{ideal}} - \cos \pi/8 \ket{1}_{\text{ideal}}$. A simple protocol achieves nonzero conversion probability. First, interfering both input states on a balanced beamsplitter gives the two-mode state 
\begin{equation}
 e^{-\epsilon_\text{damp} (\hat{n}_1+\hat{n}_2)} (\ket{00}_{\text{ideal}} + \ket{11}_{\text{ideal}}),
\end{equation}
where we have omitted the local rotations as compared to Eq.~1 of~\cite{Hosseinynejad2026}. %
Then, consider measuring the second mode with a coherent state POVM element $\ketbra{\gamma}/2$, with $\gamma=e^{\epsilon_\text{damp}} \sqrt{\frac{\pi}{2}} (1-i)$. This is within the class of allowed operations in \cref{recipelon}. The coherent-state amplitude is chosen so that the damping is canceled by the first factor and the second factor satisfies~\cite{Rymarz2021}
\begin{align}
    \sin \frac{\pi}{8} \braket{\sqrt{\frac{\pi}{2}} (1-i)}{1}_{\text{ideal}} = -\cos \frac{\pi}{8} \braket{\sqrt{\frac{\pi}{2}} (1-i)}{0}_{\text{ideal}}.
\end{align}
This ensures that the conditional state on the first mode is, up to an overall normalization and phase, $\ket{H}_{\text{physical}}$.

We follow the recipe in \cref{recipelon}, so compute Lorenz curves relative to copies of the Wigner function of vacuum.
Setting the damping strength at $\epsilon_\text{damp}=0.5$, in \cref{fig:gkpsensormagic} we plot the robust Lorenz curve of both states. We sweep the success probability downwards from $p=1$, which corresponds to vertically rescaling the output Lorenz envelopes, until the curves no longer robustly intersect. This gives an upper bound of $p<0.472$. %

\begin{figure}[h!]
    \centering
    \includegraphics[width=0.5\linewidth]{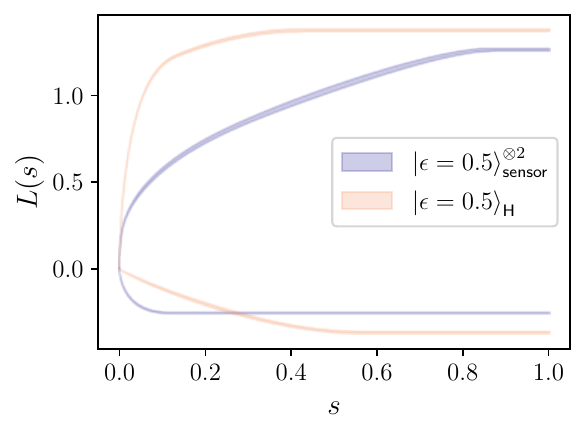}
\captionof{figure}{Positive and negative robust Wigner Lorenz curves for damped sensor and H GKP states. As the output curve is vertically scaled by $p\leq 1$, the envelopes continue to intersect robustly, until for $p<0.472$ they no longer do.}
\label{fig:gkpsensormagic}
\end{figure}

Modifying the choice of coherent-state measurement in the previous protocol allows for the generation of other GKP states. For example, $\gamma=e^{\epsilon_\text{damp}} \sqrt{\frac{\pi}{2}} (1+i/2)$ conditionally generates $\ket{1}_{\text{physical}}$. We can derive upper bounds under a different class of operations by using regular majorization. This is in line with the recipe in \cref{recipegaussian} but without ancillary vacuum modes at the input. We need to specify the maximum allowed trace of the conditioning POVM element. We use $q'=\pi/2$, which includes our explicit protocol that has $\Tr(P)=1/2$.
The robust curves are shown in \cref{fig:gkpsensorstab} for $\epsilon_\text{damp}=0.4$. The probability sweep gives a bound $p<0.296$.

\begin{figure}[h!]
    \centering
    \includegraphics[width=0.5\linewidth]{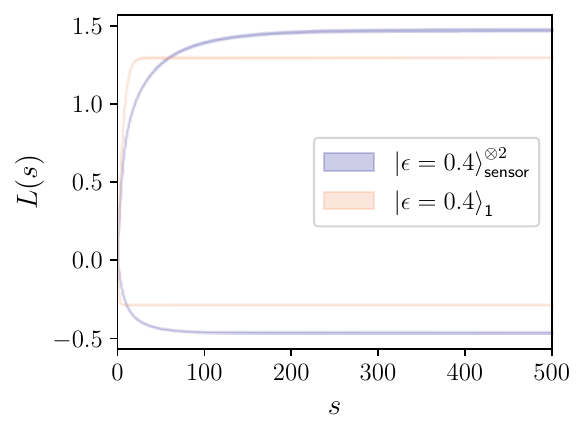}
\captionof{figure}{Positive and negative robust Wigner Lorenz curves for damped sensor and logical $\ket{1}_{\text{physical}}$ GKP states. As the output curve is vertically scaled by $p\leq 1$, the envelopes continue to intersect robustly, until for $p<0.296$ they no longer do.}
\label{fig:gkpsensorstab}
\end{figure}

\subsection{Tight Fock state concentration bounds via analytical Lorenz curves near the origin}
\label{sec:Fock}

In this section, we provide an analytical upper bound on the probability of concentrating $n$ single-photon Fock states to one Fock state $\ket{n}$ using linear optics. We match the tight bound derived in Ref.~\cite{Aaronson2012} even though we allow for a larger class of operations. Our bound follows by considering positive Lorenz curves of Fock states relative to vacuum. In particular, we derive analytical forms for the curves at small $s$.

\subsubsection{Allowed operations}

The allowed operations in Ref.~\cite{Aaronson2012} consist of appending arbitrary copies of vacuum, applying LON unitaries, and then measuring all the modes in the Fock basis. 
The probability of getting $n$ photons in, say, the first mode is shown to be $p\leq \frac{n!}{n^n}$. 
Here, we show that majorization gives the same tight bound under a strictly larger class of operations, i.e. under the majorization recipe in \cref{recipelon}. 
Since the associated set of operations does not include heralding on photon number measurements, we first explain what quantity we bound and how it constrains the conversion as set up in Ref.~\cite{Aaronson2012}.

Let the multimode state prior to measurement be $\ket{\theta}$. After dephasing in the Fock basis, the state becomes
\begin{align}
    \Phi_{\rm deph}(\dyad{\theta}) &= \sum_{\bf{n}} p_{\bf{n}} \ketbra{\bf{n}}
\end{align}
where $\bf n$ indexes multimode Fock states. By conservation of total photon number, only terms with $\bf n$ satisfying $\sum n_i=n$ are nonzero. Dephasing in the Fock basis doesn't affect the measurement: the probability $p$ that is bounded in Ref.~\cite{Aaronson2012} is exactly $p_{n,0,...,0}$. 
Next, consider the reduced state on the first mode,
\begin{align}\label{targetfock}
    \Tr_{2...}(\Phi_{\rm deph}(\dyad{\theta})) &= \sum_{m} p'_m \ketbra{m}.
\end{align}
Again by photon number conservation, the only way to have $n$ photons in the first mode is if all other modes have zero photons, so $p'_n=p_{n,0,...,0}$; the only nonzero terms are those with $m\leq n$. 
Thus, for any $\theta$, the heralded success probability considered in Ref.~\cite{Aaronson2012} is equal to the weight of $\ketbra{n}$ on the first mode after dephasing.
In the majorization recipe we are using, partial trace and Fock dephasing are allowed operations~\cite{Upadhyaya2026}. Hence, we can WLOG set \cref{targetfock} as the target state. Our result is then an upper bound on $p'_n$. This argument illustrates that majorization can provide constraints even in settings where the measurements are not Wigner-substochastic.

\subsubsection{Fock Lorenz curves}

We first find an analytical formula for the positive Lorenz curves of single-mode Fock states at small $s$. Starting from the definition of the distribution function, we approximate the Laguerre polynomials in the Wigner functions of Fock states by their highest order term. We can then derive an expression for the leading-order term of the positive Lorenz curve at small $s$. We prove in \cref{app:fock} that
\begin{align}\label{singlemodefock}
    \qqdeclor{W_m}{W_0}&\approx 2^m  \frac{s(-\log s)^m}{m!}.
\end{align}
The target state is a mixture of Fock states up to $n$ so its small-$s$ positive Lorenz curve is just $p'_n \qqdeclor{W_n}{W_0} $. That is, the highest power of $\log s$ dominates as $s\rightarrow 0$.

Next, we derive an expression for the positive Lorenz curve relative to vacuum of $n$ single photon states. Again we find an approximate expression for the distribution function and obtain the Lorenz curve to leading order:
\begin{align}\label{singlephotonfock}
    \qqdeclor{W_1^{\otimes n}}{W_0^{\otimes n}}&\approx\frac{2^n}{n^n} s (-\log s)^n.
\end{align}
The proof is in \cref{app:fock}.

With both curves in hand, we can then upper bound $p_n$ by the ratio of Lorenz curves at any point $s$, in particular as $s\rightarrow 0$. In this limit, we can directly substitute our approximate expressions as they become arbitrarily accurate at smaller $s$. Using the leading order expressions for both Lorenz curves from \cref{singlemodefock} and \cref{singlephotonfock}, we obtain the bound
\begin{align}
    p_n &\leq \lim_{s\rightarrow 0} \frac{\declor{W_1^{\otimes n}|W_0^{\otimes n}}}{\declor{W_n|W_0}} \\
    &= \lim_{s\rightarrow 0} \frac{\frac{2^n}{n^n} s (-\log s)^n}{ \frac{2^n}{n!} s(-\log s)^n} \\
    &= \frac{n!}{n^n}.
\end{align}
Thus, we exactly match the known tight bound from~\cite{Aaronson2012} for Fock concentration using linear optics. Our bound is valid even under the larger class of operations that are Wigner-substochastic and preserve vacuum. This includes for example Fock dephasing and other non-Gaussian operations. At $n=2$, the success probability is $1/2$ and is achieved by a symmetric beamsplitter, i.e. the Hong-Ou-Mandel effect~\cite{Hong1987}.

\section{Outlook}
\label{sec:conclusion}
In this work, we introduce a highly flexible framework for assessing photonic state conversion in various operational scenarios. We do so by extending majorization theory to weak, fully relative majorization for phase-space distributions in bounded and unbounded spaces. Using Lorenz curves and divergences, we show that this framework can place constraints on heralded state conversions, including approximate ones, under different classes of allowed operations. Known protocols for photonic state generation and conversion are usually probabilistic and approximate, making our framework especially suitable for their analysis. We also provide a robust Lorenz curve construction for computing certified constraints even in the presence of numerical errors. Focusing on the Wigner function for bosonic systems, we illustrate analytical and numerical bounds for several state conversion problems involving cat, GKP, and Fock states. The flexibility of the majorization machinery allows us to tailor our framework to different operational scenarios. 

A promising next step is to use our majorization framework to further analyze technologically relevant protocols and scenarios, such as NOON state preparation. Given the substantial success of majorization theory in establishing sufficient conditions for state conversion in entanglement theory~\cite{nielsen1999conditions}, finding analogous conditions for photonic systems, even for restricted sets of states, would be desirable.

\section*{Acknowledgments}
We thank Zacharie Van Herstraeten and Jack Davis for valuable discussions. T.U.\ acknowledges the support of the Joint Center for Quantum Information and Computer Science and the MathQuantum RTG (DMS-2231533).
O.H.\ acknowledges support from JST CREST Grant Number JPMJCR23I3.
U.C.\ acknowledges funding from the European Union’s Horizon Europe Framework Programme (EIC Pathfinder Challenge project Veriqub) under Grant Agreement No.~101114899.
N.K.\ acknowledges support from the NSF QLCI grant OMA-2120757.
Exchanges with LLM tools including ChatGPT-6 assisted with the proof of \cref{thm:semicontinuity}, the generation of \cref{fig:summary}, and proofreading. The authors take full responsibility for the correctness of the manuscript.  

\newpage

\appendix

\begin{center}
    {\Huge Appendix}
\end{center}

\section{Statements and proofs of weak majorization equivalences}
\label{app:majproofs}

We prove here our general results for regular, weak majorization and then extend them to the setting of relative majorization with different reference distributions. 

\subsection{Definitions and preliminaries for regular majorization}

Let $(X,\mathcal {A},\mu)$ be a $\sigma$-finite measure space, with $X$ a set, $\mathcal{A}$ a $\sigma$-algebra, and $\mu$ a measure on it. For example, $X = \mathbb{R}^{2n}$ with $\mu$ as the standard Lebesgue measure gives the bosonic phase space on $n$ modes.
As usual, let $L^1(X,\mu)$ denote the Lebesgue-integrable real-valued functions on $X$. 
For our purposes, a quasiprobability distribution $f$ is a function in $L^1(X,\mu)$ normalized so that $\int_X f {\rm d} \mu =1$.

Let $f \in L^1(X,\mu)$ be a distribution.
We define the distribution function $D_f$ and codistribution function $C_f$ of $f$ as
\begin{align}
    D_f(t) &\coloneqq \mu\{x: f(x) > t \} 
    \,, \\
    C_f(t) &\coloneqq \mu\{x: f(x) < t \}
    \,.
\end{align}
They respectively allow us to rigorously define the decreasing and increasing rearrangements of $f$,
\begin{align}
    f^\downarrow(u) &\coloneqq \inf\{ t: D_f(t)\leq u \}=\sup \{ t: D_f(t)> u \} 
    \,, \\
    f^\uparrow(u) &\coloneqq \sup\{ t: C_f(t)\leq u \}=\inf \{ t: C_f(t)> u \}
    \,.
\end{align}
The positive and negative parts of $f$ are denoted by $f^+\coloneqq\max\{f,0\}$ and $f^-\coloneqq\min\{f,0\}$.
The positive and negative Lorenz curves of $f$ are
\begin{align}
    L^{\Dsh}_f(s)&\coloneqq \int_0^s (f^+)^\downarrow {\rm d}u \, , \\
    L^{\Ish}_f(s)&\coloneqq  \int_0^s (f^-)^\uparrow {\rm d}u \,,
\end{align}
where ${\rm d} u$ indicates integration with respect to Lebesgue measure on $\mathbb{R}$. $s$ is in the range $[0,\mu(X)]$ when $\mu(X)$ is finite and $[0,\mu(X))$ when $\mu(X)$ is infinite. For $s>\mu(X)$, we define the Lorenz curve to be equal to its endpoint, i.e. extend it by a horizontal plateau. This becomes relevant for certain state-conversion comparisons. 

A linear operator $S: L^1(X,\mu) \rightarrow L^1(X',\mu')$ is an integral operator if its action can be expressed via a real-valued measurable function $S(x',x)$ on $X' \cross X$, called the kernel of $S$. Namely, for $g=Sf$,
\begin{equation}\label{eq:intgral_operator_kernel_def}
    g(x') = \int_{X} S(x',x) f(x) {\rm d}\mu(x) \,.
\end{equation}
An integral operator $S$ is positive
if $S(x',x) \geq 0$ for all $(x',x) \in X' \cross X$.
A positive integral operator $S$ is substochastic if $\int_{X'} S(x',x) {\rm d}\mu'(x') \le 1$ for almost all $x\in X$.
It is stochastic if the same holds with equality. A substochastic operator is doubly substochastic if $\int_{X} S(x',x) {\rm d}\mu(x) \le 1$ for almost all $x'\in X'$.

In order to consider atomic measure spaces, which arise for example for qudit Wigner functions, on the same footing as nonatomic ones, we first establish a lemma, similar to Lemma~5 of~\cite{Upadhyaya2026}, that lets us consider nonatomic measure spaces without loss of generality. For any space $X$, let $\bar{X}\coloneqq X \cross [0,1]$ and for any measure $\mu$, let $\bar{\mu}\coloneqq \mu \cross u$ with $u$ the Lebesgue measure on the unit interval.
For $f\in L^1(X,\mu)$, define $\bar{f}\in L^1(\bar{X},\bar{\mu})$ by $\bar{f}(x_1,x_2)= f(x_1)$. The (co)distribution functions satisfy $D_{\bar{f}}=D_f, C_{\bar{f}}=C_f$.
Analogous constructions apply to $(X', \mu')$ and $g$.
\begin{lemma}\label[lemma]{atomicspacestochastic}
    If there is a sequence of integral doubly substochastic operators $\bar{S}_n: L^1(\bar{X},\bar{\mu}) \rightarrow L^1(\bar{X'},\bar{\mu'})$ such that $\bar{S}_n \bar{f}$ converges to $\bar{g}$ in $L^1(\bar{X'},\bar{\mu'})$, then there is a sequence of integral doubly substochastic operators $S_n: L^1(X,\mu) \rightarrow L^1(X',\mu')$ such that $S_n f$ converges to $g$ in $L^1(X',\mu')$.
\end{lemma}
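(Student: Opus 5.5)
The plan is to sandwich the given operators $\bar S_n$ between an embedding and an averaging map, so that double substochasticity is inherited and the $L^1$ convergence transfers. Define the lifting operator $J: L^1(X,\mu)\to L^1(\bar X,\bar\mu)$ by $(Jf)(x_1,x_2)=f(x_1)$, so that $Jf=\bar f$. Define the averaging operator $A: L^1(\bar{X'},\bar{\mu'})\to L^1(X',\mu')$ by
\begin{equation}
    (Ah)(x_1')=\int_0^1 h(x_1',x_2')\,{\rm d}x_2' .
\end{equation}
Then set $S_n\coloneqq A\bar S_n J$.

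The first step is to check that $S_n$ is an integral operator, and to identify its kernel. Writing out the composition and using Fubini (valid since the kernels are nonnegative),
\begin{equation}
    S_n(x_1',x_1)=\int_0^1\!\int_0^1 \bar S_n\big((x_1',x_2'),(x_1,x_2)\big)\,{\rm d}x_2\,{\rm d}x_2' .
\end{equation}
This kernel is clearly nonnegative and measurable.

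The second step is double substochasticity. Integrating over $x_1'$ against $\mu'$ gives
\begin{equation}
    \int_{X'} S_n(x_1',x_1)\,{\rm d}\mu'(x_1') = \int_0^1 \Big[\int_{\bar{X'}}\bar S_n\big(\cdot,(x_1,x_2)\big)\,{\rm d}\bar\mu'\Big]{\rm d}x_2 \le 1
\end{equation}
for a.e.\ $x_1$. This holds because the bracketed quantity is at most $1$ for $\bar\mu$-a.e.\ $(x_1,x_2)$, and Fubini converts a $\bar\mu$-null exceptional set into one whose $x_2$-sections are null for $\mu$-a.e.\ $x_1$. The column sums are handled symmetrically: integrating over $x_1$ against $\mu$ gives an average over $x_2'$ of quantities bounded by $1$ for a.e.\ $(x_1',x_2')$.

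The third step is convergence. Since $A\bar g=g$, we have $S_nf-g=A(\bar S_n\bar f-\bar g)$. The operator $A$ is an $L^1$ contraction, because
\begin{equation}
    \int_{X'}|Ah|\,{\rm d}\mu'\le \int_{X'}\!\int_0^1|h|\,{\rm d}x_2'\,{\rm d}\mu' = \|h\|_{L^1(\bar{X'},\bar{\mu'})}.
\end{equation}
It follows that $\|S_nf-g\|_1\le\|\bar S_n\bar f-\bar g\|_1\to0$.

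The only delicate point is the measure-theoretic bookkeeping: exceptional null sets must be handled via Fubini–Tonelli, which requires $\sigma$-finiteness of $\mu$ and $\mu'$ (and hence of $\bar\mu$ and $\bar{\mu'}$). We also need to confirm that $S_n$ maps $L^1$ to $L^1$, which follows from substochasticity. I expect no substantive obstacle beyond this bookkeeping.
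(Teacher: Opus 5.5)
Your proof is correct and uses the same construction as the paper: sandwich $\bar S_n$ between the lifting map $(Jh)(x_1,x_2)=h(x_1)$ and the fiber-averaging map over $[0,1]$, and transfer the convergence using the $L^1$ contractivity of the averaging map. The one difference is in your favor. You compute the kernel of $A\bar S_n J$ explicitly (averaging $\bar S_n$ over both auxiliary coordinates) and so show directly that the composition is an integral doubly substochastic operator. The paper instead treats the two outer maps as Dirac-delta-kernel operators and only asserts that they can be approximated by integral doubly stochastic operators.
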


\begin{proof}

Consider the linear operator $R$ defined by
\begin{equation}
    (R h)(x_1,x_2) = h(x_1).
\end{equation}
Its kernel representation is a Dirac delta so it is a doubly stochastic operator, and can be approximated arbitrarily well by a sequence of integral doubly stochastic operators. The operator $T$ that takes us back to the original space, 
\begin{equation}
    (T h)(x_1) = \int_{[0,1]} h(x_1,x_2) \; {\rm d}u(x_2).
\end{equation}
is similarly doubly stochastic and admits a sequence of arbitrarily good integral approximations. Then, we can consider a new sequence of operators  $T\bar{S}_n R$ mapping $f$ to $g$. The composition of doubly substochastic operators is itself doubly substochastic, each element of the sequence can be approximated by integral operators, and the convergence in $L^1$ is preserved since $R,T$ are contractions.
\end{proof}

\subsection{Proof and statement for regular majorization}

\begin{theorem}\label{weakmajor}
Given $f\in L^1(X,\mu) $, $g\in L^1(X',\mu')$ with $\max\{\mu(X),\mu'(X')\}=M$ we say that $f$ weakly majorizes $g$, and write $f \succ\!\succ g$, if any of the following equivalent statements holds.
\begin{enumerate}
\item $L^{\Dsh}_f(s) \geq L^{\Dsh}_g(s)$ and $L^{\Ish}_f(s) \leq L^{\Ish}_g(s)$ for all $s\in [0,M]$;%
\item There exists a sequence of doubly substochastic integral operators $(S_n)_{n \in \mathbbm{N}}$ such that $S_n f$ converges to $g$ in $L^1(X',\mu')$;
\item $\int_X \phi(f) {\rm d}\mu \ge \int_{X'} \phi(g) {\rm d}\mu'$ for all nonnegative convex functions $\phi: \mathbbm{R} \rightarrow \mathbbm{R}_{\geq 0}$ satisfying $\phi(0)=0$ such that the integrals converge;%
\item $ \int_X (f-z)^+ {\rm d}\mu \ge \int_{X'} (g-z)^+ {\rm d}\mu'$ and $ \int_X (f+z)^- {\rm d}\mu \le \int_{X'} (g+z)^- {\rm d}\mu' $ for all real numbers $z \ge 0$.%
\end{enumerate}
\end{theorem}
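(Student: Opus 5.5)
We will establish the cycle of implications $2\Rightarrow 3\Rightarrow 4\Rightarrow 1\Rightarrow 2$. By \cref{atomicspacestochastic}, together with $D_{\bar f}=D_f$ and $C_{\bar f}=C_f$ (so Lorenz curves and all integrals in statements 3 and 4 are unchanged by the lift), we may assume without loss of generality that both $(X,\mu)$ and $(X',\mu')$ are nonatomic. A further simplification is to split everything into positive and negative parts. Statement 1 is a pair of conditions on $f^+,g^+$ and on $f^-,g^-$ separately. The test functions in statement 4 also split: $(f-z)^+=(f^+-z)^+$ and $(f+z)^-=(f^-+z)^-$ for $z\ge0$.

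For $2\Rightarrow3$, fix a convex $\phi\ge0$ with $\phi(0)=0$. Let $S$ be doubly substochastic. For a.e. $x'$, write $S(x',\cdot)\,{\rm d}\mu$ as a subprobability measure of total mass $m(x')\le1$. Jensen's inequality, applied after completing the mass $1-m(x')$ with a point at $0$, gives $\phi((Sf)(x'))\le\int S(x',x)\phi(f(x))\,{\rm d}\mu(x)$; this step uses $\phi(0)=0$. Integrating over $x'$ and using the column condition $\int S(x',x)\,{\rm d}\mu'(x')\le1$ yields $\int\phi(Sf)\le\int\phi(f)$. Passing to the limit $S_nf\to g$ requires lower semicontinuity of $g\mapsto\int\phi(g)$ under $L^1$ convergence. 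Since $\phi\ge0$ is continuous, Fatou along an a.e.-convergent subsequence gives this. Statement 4 is then the special case $\phi(t)=(t-z)^+$ and $\phi(t)=-(t+z)^-$, both of which are admissible.

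For $4\Rightarrow1$, we use the standard Legendre-type duality $L^{\Dsh}_f(s)=\inf_{z\ge0}\{\int(f-z)^+\,{\rm d}\mu+zs\}$. This follows by writing $\int(f^+-z)^+=\int_0^\infty((f^+)^\downarrow-z)^+\,{\rm d}u$ and optimizing at $z=(f^+)^\downarrow(s)$. The analogous formula with a supremum holds for the negative curve. The inequalities of statement 4 then transfer directly to the Lorenz curves. Some care is needed for $s$ beyond $\mu(X)$ or $\mu'(X')$: the plateau convention is consistent with the dual formula, since for $s\ge\mu(X)$ the infimum is attained at $z=0$, giving $\int f^+$.

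The main obstacle is $1\Rightarrow2$. Our plan is to reduce it to the equal-normalization relative majorization theorem of Ref.~\cite{Upadhyaya2026} (or its regular version), after padding. Statement 1 says $f^+$ weakly submajorizes $g^+$, in the sense that $\int_0^s (f^+)^\downarrow\ge\int_0^s (g^+)^\downarrow$ for all $s$. Classically, weak submajorization is equivalent to the existence of $h^+$ with $g^+\le h^+$ that is majorized by $f^+$. We would construct such an $h^+$ by raising $(g^+)^\downarrow$ on a tail so that the total masses match. This is possible in nonatomic infinite-measure space by spreading the excess mass thinly, which is why the full range $s\in[0,M]$ and the plateau extension matter. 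We treat $g^-$ similarly. The strong majorization theorem then supplies doubly stochastic approximants mapping $f$ to $h^+ + h^-$. Finally, we compose with a multiplication operator by $g^\pm/h^\pm\in[0,1]$, which is doubly substochastic; this recovers $g$.

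Three technical points must be checked. First, the positive and negative parts must be handled by a single operator, which we can arrange by placing them on disjoint supports as in Ref.~\cite{Upadhyaya2026}. Second, the case $\mu(X)\neq\mu'(X')$ must be handled, which we do by embedding the smaller space into the larger via a measure-preserving injection and extending by zero. Third, when exact mass matching is impossible, for example when $M<\infty$ and the excess cannot be spread, we instead use an approximation argument, and this is where the limit in statement 2 is essential.
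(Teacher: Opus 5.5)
\textbf{The gap is in $1\Rightarrow2$.} Your implications $2\Rightarrow3\Rightarrow4\Rightarrow1$ are fine and essentially match the paper. For $2\Rightarrow3$ you use Jensen on the subprobability measures $S_n(x',\cdot)\,{\rm d}\mu$ together with $\phi(0)=0$, then integrate in $x'$ using $\int_{X'}S_n(x',x)\,{\rm d}\mu'(x')\le1$, then apply Fatou along an a.e.\ convergent subsequence. For $4\Rightarrow1$, your formula $L^{\Dsh}_f(s)=\inf_{z\ge0}\{\int(f-z)^+{\rm d}\mu+zs\}$, attained at $z=(f^+)^\downarrow(s)$, repackages the Chong-type lemmas the paper uses with exactly that choice of $z$.

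The problem is the step that raises $g$ to some $h$ and then recovers $g$ with the multiplier $g/h\in[0,1]$. That multiplier forces $h$ to have the sign of $g$ wherever $g\neq0$. So the excess positive mass $\int f^+-\int g^+$ can only be added on $\{g\ge0\}$, and there may be no room for it.

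Concretely, take $X=X'=[0,1]$ with Lebesgue measure, $f=\tfrac12\chi_{[0,0.9]}-\chi_{(0.9,1]}$ and $g=0.1\,\chi_{[0,0.1]}-0.01\,\chi_{(0.1,1]}$; Statement~1 holds. Any $h$ majorized by $f$ with equal normalization must satisfy $\int h^+=\int f^+=0.45$, since the endpoint Lorenz inequalities plus equal total integrals force this. Yet $h^+$ must vanish on $(0.1,1]$, which gives $L^{\Dsh}_h(0.1)=0.45>0.05=L^{\Dsh}_f(0.1)$.

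So no admissible $h$ exists. Because the violation is strict at a fixed $s$, no approximation argument rescues it. Placing the parts on disjoint supports does not help either, since those supports are pinned down by the sign pattern of $g$.

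Even with unlimited room, spreading the excess thinly is not enough. If $f^+>0$ on a set of infinite measure, then $L^{\Dsh}_f(s)<\int f^+$ for every finite $s$. Spreading the excess at a fixed small height over a set of finite measure (with $g^+$ of finite-measure support) then makes $L^{\Dsh}_h$ overshoot. The added mass would have to mimic the tail of $f^+$ itself.

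\textbf{The missing idea is to shrink $f$ instead of raising $g$.} Let $G_P=\int g^+$ and $G_N=\int g^-$; Statement~1 at the endpoint gives $G_P\le\int f^+$ and $G_N\ge\int f^-$. In the nonatomic space, choose $A$ to contain the largest positive values of $f$ up to total mass $G_P$ and the most negative values up to total mass $G_N$, splitting a level set if needed.

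Multiplication by $\chi_A$ is doubly substochastic. Moreover $L^{\Dsh}_{f\chi_A}=\min\{L^{\Dsh}_f,G_P\}\ge L^{\Dsh}_g$, and similarly for the negative curves. So $f\chi_A$ and $g$ have identical positive and negative masses. The equal-normalization theorem of Ref.~\cite{Upadhyaya2026}, extended to different spaces of equal total measure via Theorem~2.2 of Ref.~\cite{Chong1976}, then yields $S_n(f\chi_A)\to g$. Composing with the truncation gives Statement~2, with no room needed in $X'$ and no sign bookkeeping.

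Your raising route can only be repaired by adjoining an auxiliary space that carries $f$'s own discarded tail, using that majorization is preserved under direct sums, and restricting at the end. That is the same construction in disguise.

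Finally, for unequal total measures, a measure-preserving injection of one abstract $\sigma$-finite space into another need not exist. Instead, adjoin an interval to the smaller space by disjoint union and extend by zero, as the paper does.
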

Our proof is similar to the proof of Theorem~1 in Ref.~\cite{Upadhyaya2026}. Interestingly, our proof is actually simpler despite being more general. The reason is that splitting the Lorenz curves into positive and negative parts naturally leads to weak majorization since the curves may individually have different normalizations even when the overall functions have the same normalization.
\begin{proof}
Assume first that $\mu(X)=\mu'(X')$.\\
\noindent$\bm{1 \implies 2:}$ 
Let $F_P,G_P$ and $F_N,G_N$ be the asymptotic values of the positive and negative Lorenz curves. By \cref{atomicspacestochastic}, we can assume $f$ and $g$ are defined on a nonatomic measure space.
Following a similar construction as in Ref.~\cite{Upadhyaya2026}, define a reduced function $f_{\text{red}}=f(x) \chi_{A}$. The region $A$, which may not be of finite measure, is selected so that the positive and negative Lorenz curves of $f_{\text{red}}$ match those of $f$ up to $G_P$ and $G_N$. That is, we do not include the tails of $f$ in the region. There is a simple doubly substochastic operator $T$ taking $f$ to $f_{\text{red}}$, defined by 
\begin{equation}
    T h \coloneqq  \id(h \chi_{A}).
\end{equation}

By construction, $f_{\text{red}}$ majorizes $g$ and $f_{\text{red}}$ and $g$ have the same normalization. Therefore by Theorem~1 of~\cite{Upadhyaya2026}, there exists a sequence of semidoubly stochastic operators $S_n$ such that $S_n f_{\text{red}} \rightarrow g$ in $L^1(X',\mu')$. Strictly speaking, that theorem only applies when the two functions being compared are on the same measure space. We can however generalize to functions on two different measure spaces with the same total measure. This is accomplished by using Theorem 2.2 of~\cite{Chong1976} in place of Theorem 4.9 of~\cite{Day1973}. Once this is done, $S_n T $ is a sequence of doubly substochastic integral operators satisfying the desired equation.

\noindent$\bm{2\implies 3:}$
The measure defined by $\nu_{n,x'}(A) = \int_A S_n(x',x) {\rm d}\mu(x)$ satisfies $\nu_{n,x'}(X)\leq 1$, for almost all $x'$, because $S_n$ is doubly substochastic. 
Assuming $\nu_{n,x'}(X)>0$, we apply Jensen's inequality~\cite{jensen1906fonctions} to the renormalized measure, 
\begin{align}
    \phi( S_n f )(x')&= \phi\left( \int_X S_n(x',x) f(x) {\rm d}\mu(x)  \right)\\
    &= \phi \left( \int_X \nu_{n,x'}(X) f(x) \ \frac{1}{\nu_{n,x'}(X)}{\rm d}\nu_{n,x'}(x) \right)\\
    &\leq  \int_X \phi \left[ \nu_{n,x'}(X) f(x)\right] \ \frac{1}{\nu_{n,x'}(X)}{\rm d}\nu_{n,x'}(x). 
\end{align}
By assumption $\phi$ is convex and vanishes at the origin. Thus, 
\begin{equation}
    \phi \left[ \nu_{n,x'}(X) f(x)\right] \leq \nu_{n,x'}(X) \phi \left[ f(x)\right] + \left[1-\nu_{n,x'}(X)\right] \phi(0) = \nu_{n,x'}(X) \phi  \left[ f(x)\right] \,,
\end{equation}
leading to the inequality,
\begin{equation}
    0\leq\phi( S_n f )(x') \leq \int_X \phi \left( f(x)\right) {\rm d}\nu_{n,x'}(x) \eqqcolon h_n(x'), \label{convexub}
\end{equation}
since $\phi$ is nonnegative. If $\nu_{n,x'}(X)=0$, the same inequality follows as both sides are zero.
We compute
\begin{align}
    \int_{X'} h_n(x') {\rm d}\mu'(x')&= \int_{X'} \int_X \phi \left( f(x)\right) \ S_n(x',x) {\rm d}\mu(x) {\rm d}\mu'(x')\\
    &=  \int_{X} \phi \left( f(x)\right) \ \int_{X'} S_n(x',x)  {\rm d}\mu'(x') {\rm d}\mu(x)\\
    &\leq  \int_X \phi \left( f(x)\right) {\rm d}\mu(x), \label{jensenfubini}
\end{align}
where the second line follows from Fubini's theorem~\cite{fubini1907sugli,Rudin1987}, the third line follows from substochasticity of $S_n$, and the final integral exists by assumption.

Since $S_n f$ converges to $g$ in $L^1$, there exists a subsequence such that $S_{n_i} f$ converges pointwise to $g$ almost everywhere.
Since $\phi$ is convex, it is also continuous on $\mathbb{R}$ so $\phi(S_{n_i} f)$ converges pointwise to $\phi(g)$ almost everywhere. Since the limit exists almost everywhere the limit inferior equals it. Then, we have that
\begin{align}
    \int_{X'} \phi(g) {\rm d}\mu' &=\int_{X'} \liminf_{i\rightarrow \infty} \phi(S_{n_i} f) {\rm d} \mu'\\
    &\leq  \liminf_{i\rightarrow \infty} \int_{X'} \phi(S_{n_i} f) {\rm d} \mu'\\
    &\leq  \int_X \phi(f) {\rm d} \mu,
\end{align}
where the second line follows by Fatou's lemma~\cite{fatou1906series} and the third line by \cref{convexub} and \cref{jensenfubini}.

\noindent$\bm{3\implies 4:}$
The functions $s \mapsto (s-z)^+$ and $s \mapsto -(s+z)^-$ for $z\geq0$ are convex and satisfy the requirements on the assumption. We then get immediately that
\begin{align}
    \int_X (f(x)-z)^+ {\rm d}\mu(x) &\geq \int_{X'} (g(x')-z)^+ {\rm d}\mu'(x'),\\
    \int_X (f(x)+z)^- {\rm d}\mu(x) &\leq \int_{X'} (g(x')+z)^- {\rm d}\mu'(x')
\end{align}
for all $z\geq0$. All integrals converge because $f,g$ are integrable. %

\noindent$\bm{4\implies 1:}$
$\int_{X'} (g-z)^+ {\rm d}\mu' \leq \int_X (f-z)^+ {\rm d}\mu$ by assumption. By Lemma~3 of~\cite{Upadhyaya2026} which extends Corollaries 1.2 and 1.3 of~\cite{Chong1974}, this is equivalent to $\int_z^\infty D_{g^+} \leq \int_z^\infty D_{f^+} $. For any $s\in (0,\mu(X))$, the choice $z=(f^+)^\downarrow(s)\geq 0$ ensures $D_{f^+}(z) \leq s \leq D_{f^+}(z^-)$. Then, by Lemma~4 of~\cite{Upadhyaya2026}, which is a slight generalization of Lemma~1.5 of~\cite{Chong1974}, we have
\begin{align}
    \int_0^s ((g^+)^\downarrow -z) {\rm d}u &\leq  \int_z^\infty D_{g^+}(t) {\rm d} t\\
    &\leq \int_z^\infty D_{f^+}(t) {\rm d} t\\
    &=\int_0^s ((f^+)^\downarrow -z) {\rm d}u \,.
\end{align}

Similarly, we have $\int_{X'} (g+z)^- {\rm d}\mu' \geq \int_X (f+z)^- {\rm d}\mu$ if and only if $ - \int_{-\infty}^{-z} C_g \geq - \int_{-\infty}^{-z} C_f$. Setting $z=-(f^-)^\uparrow(s)\geq 0$, we guarantee that $C_{f^-}(-z) \leq s \leq C_{f^-}((-z)^+)$, leading to
\begin{align}
    \int_0^s ((g^-)^\uparrow + z) {\rm d}u &\geq  - \int_{-\infty}^{-z} C_{g^-}(t) \  {\rm d}t\\
    &\geq - \int_{-\infty}^{-z} C_{f^-}(t) \  {\rm d}t \\
    &=\int_0^s ((f^-)^\uparrow +z) {\rm d}u \,.
\end{align}
In summary we obtain $L^{\Dsh}_f(s) \geq L^{\Dsh}_g(s)$ and $L^{\Ish}_f(s) \leq L^{\Ish}_g(s)$ for all $s$.

If $\mu'(X')>\mu(X)$, define a new measure space as follows: $\tilde X \coloneqq X\sqcup[0,U)$ for $U=\mu'(X')-\mu(X)$ and $ \tilde \mu \coloneqq \mu\oplus u$ for $u$ denoting the Lebesgue measure on the real line. Formally extend $f$ to this larger measure space, $\tilde f \coloneqq f\oplus0$. If instead, $\mu'(X')<\mu(X)$, extend $g$ analogously. The extended measure spaces are $\sigma$-finite and the extended functions are clearly $L^1$ and have the same Lorenz curves as their unextended counterparts, up to extensions by horizontal plateaus, as accounted for in our formal definition of the Lorenz curves. The integrals in statements 3 and 4 are unchanged since the functionals in the integrands all take the value zero at zero. A doubly substochastic operator on the original space can be trivially padded to a doubly substochastic operator on the extended spaces. Conversely, restricting a doubly substochastic operator on the extended spaces to one on the original space preserves its double substochasticity. Thus, we can WLOG apply the version of the theorem with equal measures.

\end{proof}

\subsection{Definitions and preliminaries for relative majorization}
Let $q: X \rightarrow \mathbb{R}_{>0}$ be any strictly positive, measurable function. The above definitions can be generalized in a manner that is necessary for relative majorization. The key idea is to introduce a new measure $\nu$~\cite{Joe1990}. Define $\nu(A)= \int_A q {\rm d} \mu$. Roughly speaking, this is multiplying the original measure pointwise by $q$.
The relative distribution function $D^q_f$ and codistribution function $C^q_f$ are
\begin{align}
    D^q_f(t)&\coloneqq\nu\{x: f(x)> tq(x) \} \, \\
    C^q_f(t)&\coloneqq\nu\{x: f(x)< tq(x) \} \,.
\end{align}
Then, the decreasing and increasing rearrangements of $f$ relative to $q$ are
\begin{align}
    f^{\downarrow,q}(u)&\coloneqq\inf \{ t: D^q_f(t)\leq u \}, \\
    f^{\uparrow,q}(u)&\coloneqq\sup \{ t: C^q_f(t)\leq u \} \,.
\end{align}
Then, the positive and negative Lorenz curves of $f$ relative to $q$ are
\begin{align}
    \qdeclor{f}\coloneqq \int_0^s (f^+)^{\downarrow,q} \ {\rm d}u \,, \\ \qinclor{f}\coloneqq \int_0^s (f^-)^{\uparrow,q} \ {\rm d}u \,.
\end{align}
The range of $s$ is again $[0,\nu(X)]$ or $[0,\nu(X))$ depending on whether $\nu(X)$ is finite. And again, we formally define the Lorenz curve for $s>\nu(X)$ to equal its endpoint value.

Generalizing the notion of double substochasticity, a substochastic operator is $(q,q')$-substochastic if it satisfies $\int_{X} S(x',x) q(x) {\rm d}\mu(x) \le q'(x')$ for almost all $x'\in X'$.

\subsection{Proof and statement for relative majorization}

Next we give the extension to full relative weak majorization and prove the equivalences by reduction to regular weak majorization.

\begin{theorem}\label{relmajor}
    Consider two integrable functions $f\in L^1(X,\mu) $, $g\in L^1(X',\mu')$ on $\sigma$-finite measure spaces and two strictly positive, measurable reference functions $q: X \rightarrow \mathbbm{R}_{>0} $, $ q': X'\rightarrow \mathbbm{R}_{>0} $ such that $\max\{\int_X q {\rm d}\mu,\int_{X'} q' {\rm d}\mu'\}= M$. 
    We say that $f$ weakly majorizes $g$ relative to $(q,q')$, and write $(f,q) \weak (g,q')$, if any of the following equivalent statements holds.
    \begin{enumerate}
    \item $\qdeclor{f} \geq \qpdeclor{g}$ and $\qinclor{f} \leq \qpinclor{g}$ for all $s\in [0,M]$;
    \item There exists a sequence of substochastic integral operators $(S_n)_{n \in \mathbbm{N}}$ such that $S_n f$ converges to $g$ in $L^1(X',\mu')$ and $(S_n q)(x') \le q'(x')$  for all $n \in \mathbbm{N}$ and for almost all $x'\in X'$;
    \item  $\int_X q \phi\left(\frac{f}{q}\right) {\rm d}\mu \ge \int_{X'} q' \phi\left(\frac{g}{q'}\right) {\rm d}\mu'$ for all nonnegative convex functions $\phi: \mathbbm{R} \rightarrow \mathbbm{R}_{\geq 0}$ satisfying $\phi(0)=0$ such that the integrals converge;
    \item $ \int_X (f-zq)^+ {\rm d}\mu \ge \int_{X'} (g-zq')^+ {\rm d}\mu' $ and $ \int_X (f+zq)^- {\rm d}\mu \le \int_{X'} (g+zq')^- {\rm d}\mu'  $ for all real numbers $z \ge 0$.
    \end{enumerate}
\end{theorem}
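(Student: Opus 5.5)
The plan is to reduce every statement of \cref{relmajor} to the corresponding statement of \cref{weakmajor} by a change of measure. I would replace $\mu$ by $\nu(A)=\int_A q\,{\rm d}\mu$ on $X$ and $\mu'$ by $\nu'(A)=\int_A q'\,{\rm d}\mu'$ on $X'$. I would also replace $f,g$ by the ratios $\hat f\coloneqq f/q$ and $\hat g\coloneqq g/q'$. Since $q,q'>0$ are measurable and $\mu,\mu'$ are $\sigma$-finite, $\nu,\nu'$ are $\sigma$-finite. Moreover $\int_X|\hat f|\,{\rm d}\nu=\int_X|f|\,{\rm d}\mu$, so $\hat f\in L^1(X,\nu)$, $\hat g\in L^1(X',\nu')$, and $f\mapsto f/q$ is an isometry $L^1(X,\mu)\to L^1(X,\nu)$. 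The total measures become $\nu(X)=\int_X q\,{\rm d}\mu$ and $\nu'(X')=\int_{X'}q'\,{\rm d}\mu'$, whose maximum is exactly the $M$ of the statement.

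Next I would check that each condition translates verbatim. For 1: $D^q_f(t)=\nu\{f>tq\}=D_{\hat f}(t)$ and likewise for codistributions and positive/negative parts, since $(f/q)^\pm=f^\pm/q$. Hence the relative rearrangements and relative Lorenz curves of $f$ equal the ordinary ones of $\hat f$ on $(X,\nu)$, including the plateau extension. For 3: $\int_X q\,\phi(f/q)\,{\rm d}\mu=\int_X\phi(\hat f)\,{\rm d}\nu$. For 4: $\int_X(f-zq)^+{\rm d}\mu=\int_X(\hat f-z)^+{\rm d}\nu$, and similarly for the negative part. So conditions 1, 3 and 4 of \cref{relmajor} are literally conditions 1, 3 and 4 of \cref{weakmajor} applied to $\hat f,\hat g$, and their equivalence is inherited.

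The remaining and main step is matching condition 2, i.e.\ showing that $(q,q')$-substochastic operators on the original spaces correspond bijectively to doubly substochastic operators on the reweighted spaces. Given a kernel $S(x',x)$, I would define $\hat S(x',x)\coloneqq S(x',x)/q'(x')$ with respect to $\nu$. Then
\begin{equation}
(\hat S\hat f)(x')=\int_X \frac{S(x',x)}{q'(x')}\,\frac{f(x)}{q(x)}\,q(x)\,{\rm d}\mu(x)=\frac{(Sf)(x')}{q'(x')}.
\end{equation}
Substochasticity transfers because $\int_{X'}\hat S(x',x)\,{\rm d}\nu'(x')=\int_{X'}S(x',x)\,{\rm d}\mu'(x')\le 1$. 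The second double-substochasticity condition transfers because $\int_X\hat S(x',x)\,{\rm d}\nu(x)=(Sq)(x')/q'(x')\le 1$ is exactly the condition $(Sq)\le q'$. The map $S\mapsto\hat S$ is invertible, and $L^1$ convergence $S_nf\to g$ in $\mu'$ is equivalent to $\hat S_n\hat f\to\hat g$ in $\nu'$ by the isometry. Thus condition 2 corresponds to condition 2 of \cref{weakmajor}.

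I expect the only delicate points to be measure-theoretic. First, the atomicity reduction of \cref{atomicspacestochastic} should be applied after reweighting; the argument there is insensitive to which $\sigma$-finite measure is used. Second, "almost all" statements must be preserved, which holds because $\nu$ and $\mu$ are mutually absolutely continuous when $q>0$. Once these are checked, \cref{weakmajor} closes the proof.
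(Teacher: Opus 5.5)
Your proposal is correct and follows the paper's proof essentially step for step. The paper likewise reweights to $\nu = q\,\mu$ and $\nu' = q'\,\mu'$, and rewrites statements 1, 3 and 4 in terms of $f/q$ and $g/q'$. It then uses the same kernel $T_n(x',x) = S_n(x',x)/q'(x')$ to match $(q,q')$-substochastic operators with doubly substochastic ones, and finally invokes \cref{weakmajor}. Your explicit checks fill in details the paper leaves implicit: $\sigma$-finiteness of $\nu$, the $L^1$ isometry, invertibility of $S \mapsto \hat S$ (the paper only spells out one direction), and agreement of null sets.
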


\begin{proof}
We follow the approach of Ref.~\cite{Joe1990} of reducing relative majorization to regular majorization via a change of measure.
Define the measure $\nu$ via
\begin{equation}\label{eq:nu}
     \nu(A) \coloneqq \int_X \chi_A \, q(x) \, {\rm d}\mu(x) \,,\quad \text{for all measurable } A \subseteq X \,.
\end{equation}
and analogously define $\nu'$.

We observe simply that $ \nu\{x: f(x)> tq(x) \}=\nu\{x: f(x)/q(x)> t \}$, and similarly for the other (co)distribution functions. Thus, Statement 1 can be rewritten as 
\begin{equation}
    \declor{\frac{f}{q}} \geq \declor{\frac{g}{q'}} \text{ and } \inclor{\frac{f}{q}}\leq \inclor{\frac{g}{q'}} \,,\quad \text{for all } s \in [0,M] \,,
\end{equation}
Consider substochastic integral operators $S_n:L^1(X,\mu)\rightarrow L^1(X',\mu')$ satisfying $S_n q \leq q'$.
Define a new sequence of integral operators via the kernels $T_n(x',x) = S_n(x',x)/q'(x')$. These operators are positive and well-defined by virtue of $q'$ being strictly positive. The operator $T_n$ is doubly substochastic on $L^1(X,\nu)\rightarrow L^1(X',\nu')$ because 
\begin{align}
    \int_{X'} T_n(x',x) {\rm d}\nu'(x') &= \int_{X'} S_n(x',x) {\rm d}\mu'(x') \leq 1 \,,\quad \text{and}\\
    \int_X T_n(x',x) {\rm d}\nu(x) &= \frac{1}{q'(x')} \int_X S_n(x',x) q(x) {\rm d}\mu(x) \leq 1 \,,
\end{align}
where in each line the first equality follows from the definition of the rescaled measure and the second by definition of $(q,q')$-substochastic operators.
Additionally, 
\begin{align}
    \int_X T_n(x',x) \frac{f(x)}{q(x)} {\rm d}\nu(x) = \frac{1}{q'(x')} \int_X S_n(x',x) f(x) {\rm d}\mu(x) \rightarrow \frac{g(x')}{q'(x')} \,.
\end{align}
Therefore, statement 2 is equivalent to the existence of doubly substochastic $T_n$ satisfying $T_n \frac{f}{q}\rightarrow \frac{g}{q'}$ in $L^1(X', \nu')$. 
By the definition of the rescaled measure, Statement 3 is equivalent to $\int_X \phi\left(\frac{f}{q}\right) {\rm d}\nu \geq \int_{X'} \phi\left(\frac{g}{q'}\right) {\rm d}\nu'$. 
Statement 4 is equivalent to $ \int_X \left(\frac{f}{q}-z\right)^+ {\rm d}\nu \geq \int_{X'} \left(\frac{g}{q'}-z\right)^+ {\rm d}\nu' $ and $\int_{X'} \left(\frac{g}{q'}+z\right)^- {\rm d}\nu' \geq \int_X \left(\frac{f}{q}+z\right)^- {\rm d}\nu$, for all $z \geq 0$. This follows from the definition of the rescaled measure and the fact that $q,q'$ are strictly positive.
The equivalences between statements now follow directly from \cref{weakmajor} applied to $\frac{f}{q}$ and $\frac{g}{q'}$ as integrable functions in $L^1(X,\nu)$ and $L^1(X',\nu')$.
\end{proof}

\subsection{Comparison with other definitions of weak majorization}
For nonnegative, discrete distributions with $q,q'=1$, our definition of weak majorization reduces to the standard one for weak submajorization~\cite[Definition 1.A.2]{marshall2011inequalities}. 
However, for signed distributions, our definition does not reduce either to weak submajorization ($\succ_w$) or to weak supermajorization ($\succ^w$).
As identified in Ref.~\cite{marshall2011inequalities}, attempting to use the standard definition of weak submajorization with doubly substochastic matrices trivializes to regular majorization. Their stochastic-operator equivalence is thus restricted to nonnegative distributions~\cite[Theorem 2.C.4]{marshall2011inequalities}. However, they establish equivalences for signed distributions based on convex functions, with the requirement that the convex function be increasing for weak submajorization and decreasing for weak supermajorization~\cite[4.B.2]{marshall2011inequalities}. 
Our definition of weak majorization allows us to prove an equivalence between Lorenz-curve and doubly-substochastic-operator conditions for signed distributions. It can be viewed as a hybrid definition that combines weak submajorization of the positive parts of a distribution with weak supermajorization of the negative parts. Our convex-function equivalence combines both requirements in that the convex function must be increasing over the positive real line and decreasing over the negative real line.

As a simple example to illustrate the different preorders, consider the subnormalized quasiprobability distributions
\begin{align}
    f = \left(\frac{1}{2},\frac{1}{3},\frac{1}{4},\frac{-1}{12}\right)
    \,, \qquad
    g = \left(\frac{1}{3},\frac{1}{3},0,\frac{-1}{2}\right)
    \,, \qquad
    h = \left(\frac{1}{2},\frac{1}{3},\frac{1}{4},{-1}\right)
    \,.
\end{align}
We have that $f \succ_w g$, $f \not \weak g$, and $g \succ^w f$. We also have $h \not \succ_w g$, $h \weak g$, and $h \succ^w g$.

\section{Wigner--R\'enyi divergences}
\label{app:wigner}

\cref{app:wigner1} establishes some useful notation for this section and derives standard preliminary results on the continuity of the Wigner function.
\cref{app:wigner2} proves the first, qualitative part of \cref{thm:semicontinuity} on the lower semicontinuity of the Wigner $\alpha$-R\'enyi divergence, and \cref{app:wigner3} proves the second quantitative part of \cref{thm:semicontinuity} on the H\"older lower semicontinuity, giving the explicit form of the constants that appear in the theorem statement.
Finally, \cref{app:wigner4} establishes that lower semicontinuity cannot be relaxed to continuity.

\subsection{Notation and Wigner function continuity}
\label{app:wigner1}

The $L^\infty$-norm between quasiprobability distributions $w, w'$ on $\mathbb{R}^{2N}$ is
\begin{equation}
    \|w - w'\|_{\infty} 
    \coloneqq \sup_{\bm{r} \in \mathbb{R}^{2N}} |w(\bm{r}) - w'(\bm{r})|
    \,.
\end{equation}
We extend \cref{def:WignerRenyi} of the Wigner $\alpha$-R\'enyi divergence to the case where the reference distribution is positive almost everywhere, but not necessarily the Wigner function of a state. 
Specifically, for any quasiprobability $w$ and reference $q > 0$, the $\alpha$-R\'enyi divergence is
\begin{equation}
    D_\alpha(w \| q) = \frac{1}{\alpha-1} \log J_{\alpha}(w \| q) \,, \quad \alpha \in (1, \infty) \,,
\end{equation}
where we have introduced the exponential form
\begin{equation}\label{eq:Jdef}
    J_{\alpha}(w \| q) 
    \coloneqq \int_{\mathbb{R}^{2N}} \abs{w(\bm{r})}^\alpha q(\bm{r})^{1-\alpha} {\rm d}^{2N}\bm{r}
    = \exp((\alpha-1)D_\alpha(w \| q))
    \,.
\end{equation}

We establish uniform convergence for the Wigner function and pointwise convergence for powers of its absolute value in the trace-norm topology.
\begin{lemma}\label[lemma]{lem:Wbounds}
    Let $\epsilon \ge 0$ and $\rho \in \S(\H_N)$.
    For any $\sigma \in \S(\H_N)$ that satisfies $\frac{1}{2} \| \sigma - \rho \|_1 \le \epsilon$, the following bounds hold,
    \begin{align}
        \|W_\rho - W_{\sigma}\|_{\infty} 
        &\le 2\left(\frac{2}{\pi}\right)^N \epsilon
        \,, \label{eq:boundW1}
        \\
        |W_\rho(\bm{r})|^\alpha - |W_{\sigma}(\bm{r})|^\alpha  
        &\le 2\left(\frac{2}{\pi}\right)^N \alpha |W_\rho(\bm{r})|^{\alpha - 1} \; \epsilon
        \,, \qquad \forall \alpha > 1
        \,. \label{eq:boundW2}
    \end{align}
\end{lemma}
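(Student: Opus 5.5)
The plan is to obtain the first bound from the displaced-parity representation of the Wigner function, and the second from the mean value theorem applied to $x \mapsto x^\alpha$.

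\textbf{Step 1: the uniform bound.} By \cref{eq:def_wigner_function}, linearity gives $W_\rho(\bm r) - W_\sigma(\bm r) = \Tr[\hat\Delta_{\rm W}(\bm r)(\rho-\sigma)]$. By H\"older's inequality for Schatten norms,
\begin{equation}
    |W_\rho(\bm r) - W_\sigma(\bm r)| \le \|\hat\Delta_{\rm W}(\bm r)\|_\infty \, \|\rho-\sigma\|_1 .
\end{equation}
From \cref{eq:Delta}, $\hat\Delta_{\rm W}(\bm r)$ is $(2/\pi)^N$ times a unitary conjugate of the parity operator $\hat\Pi$. Since $\hat\Pi$ is self-adjoint with eigenvalues $\pm1$, its operator norm is $1$, so $\|\hat\Delta_{\rm W}(\bm r)\|_\infty = (2/\pi)^N$. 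Combining this with $\|\rho-\sigma\|_1 \le 2\epsilon$ and taking the supremum over $\bm r$ yields \cref{eq:boundW1}.

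\textbf{Step 2: the pointwise bound on powers.} Fix $\bm r$ and set $a = |W_\rho(\bm r)|$, $b = |W_\sigma(\bm r)|$, and $\delta \coloneqq 2(2/\pi)^N\epsilon$. Step 1 and the reverse triangle inequality give $|a-b| \le \delta$. We split into two cases.
\begin{itemize}
    \item If $b \ge a$, the left-hand side of \cref{eq:boundW2} is nonpositive, so the bound holds trivially.
    \item If $b < a$, the map $x \mapsto x^\alpha$ is convex and increasing on $[0,\infty)$ for $\alpha > 1$. Its secant slope on $[b,a]$ is therefore at most the derivative at the right endpoint, which gives
    \begin{equation}
        a^\alpha - b^\alpha \le \alpha a^{\alpha-1}(a-b) \le \alpha a^{\alpha-1}\delta .
    \end{equation}
    This is exactly \cref{eq:boundW2}.
\end{itemize}

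The only point that needs care is that the bound must involve $|W_\rho|^{\alpha-1}$ rather than the larger of the two values. This is what makes \cref{eq:boundW2} one-sided: the right-endpoint derivative bound is available only in the case $b < a$, which is the only case that needs it, since the other case is trivial. Everything else is routine.
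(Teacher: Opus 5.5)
Your proof is correct and essentially matches the paper's: H\"older's inequality with $\|\hat\Delta_{\rm W}(\bm r)\|_\infty=(2/\pi)^N$ gives the uniform bound, and the first-order convexity inequality for $t\mapsto t^\alpha$, combined with the reverse triangle inequality, gives the second. The only difference is cosmetic. The paper applies the tangent-line inequality $a^\alpha-b^\alpha\le\alpha a^{\alpha-1}(a-b)$, which holds for all $a,b\ge 0$, and then bounds $a-b\le|a-b|$, so your case split is unnecessary.
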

\begin{proof}
The trace distance guarantee directly implies the bound on the $L^\infty$-norm,
\begin{align}
    \|W_\rho - W_{\sigma}\|_{\infty} 
    &= \sup_{\bm{r} \in \mathbb{R}^{2N}} \left| \Tr\left[(\rho-\sigma) \hat{\Delta}_{\rm W}(\bm{r}) \right] \right|
    \nonumber\\
    &\overset{(1)}{\le} \|\rho-\sigma\|_1 \sup_{\bm{r} \in \mathbb{R}^{2N}} \|\hat{\Delta}_{\rm W}(\bm{r})\|_{\infty}
    \nonumber\\
    &\overset{(2)}{=} \left(\frac{2}{\pi}\right)^N \|\rho-\sigma\|_1
    \nonumber\\
    &\le 2\left(\frac{2}{\pi}\right)^N \epsilon
    \,,
\end{align}
where $(1)$ is H\"older's inequality, and $(2)$ follows from our definition of the displaced parity operators in \cref{eq:Delta}.

Using this bound, we can also obtain the second bound,
\begin{align}
    |W_\rho(\bm{r})|^\alpha - |W_{\sigma}(\bm{r})|^\alpha
    &\overset{(1)}{\le} \alpha |W_\rho(\bm{r})|^{\alpha - 1} \; \big( |W_{\rho}(\bm{r})| - |W_{\sigma}(\bm{r})| \big)
    \nonumber\\
    &\overset{(2)}{\le} \alpha |W_\rho(\bm{r})|^{\alpha - 1} \big| |W_{\rho}(\bm{r})| - |W_{\sigma}(\bm{r})| \big|
    \nonumber\\
    &\overset{(3)}{\le} \alpha |W_\rho(\bm{r})|^{\alpha - 1} \big| W_{\rho}(\bm{r}) - W_{\sigma}(\bm{r}) \big|
    \nonumber\\
    &\overset{(4)}{\le} \alpha |W_\rho(\bm{r})|^{\alpha - 1} \| W_{\rho} - W_{\sigma} \|_{\infty}
    \nonumber\\
    &\overset{(5)}{\le} 2\left(\frac{2}{\pi}\right)^N \alpha |W_\rho(\bm{r})|^{\alpha - 1} \; \epsilon
    \,, \label{eq:convexity|W|}
\end{align}
where $(1)$ follows from the first-order characterization of convexity $f(b) - f(a) \le f'(b)(b-a)$ applied to the convex function $f(t) = t^\alpha$ with $\alpha > 1$, $(2)$ from $b-a \le |b-a|$,  $(3)$ is the reverse triangle inequality, $(4)$ follows from the definition of the $L^\infty$-norm, and $(5)$ from \cref{eq:boundW1}.
\end{proof}

\subsection{Proof of lower semicontinuity}
\label{app:wigner2}

We prove the lower semicontinuity of the Wigner $\alpha$-R\'enyi divergence with $\alpha > 1$, as stated in part \emph{(i)} of \cref{thm:semicontinuity}, which we restate here for convenience.
\begingroup
\renewcommand{\thetheorem}{2(i)}
\begin{theorem}\label{thm:semicontinuity1}
Let $\tau \in \S(\H_N)$ be a reference state such that $W_\tau > 0$.
Then, the functional $\rho \mapsto D_\alpha(W_\rho \| W_\tau)$ is lower semicontinuous on $\S(\H_N)$ in the trace-norm topology.
In particular, for any sequence of states $(\rho_n)_{n \in \mathbb{N}} \subset \S(\H_N)$ converging to $\rho \in \S(\H_N)$ in trace norm, i.e.
\begin{equation}
    \lim_{n \to \infty} \frac{1}{2} \|\rho_n - \rho\|_1 = 0
    \,,
\end{equation}
we have
\begin{equation}\label{eq:liming_app}
    \liminf_{n \to \infty} D_\alpha(W_{\rho_n} \| W_\tau) \ge D_\alpha(W_\rho \| W_\tau)
    \,.
\end{equation}
\end{theorem}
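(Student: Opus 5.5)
The plan is to reduce the statement to Fatou's lemma applied to the integrand of the exponential form $J_\alpha$ from \cref{eq:Jdef}. Since $\alpha>1$, the map $x\mapsto \frac{1}{\alpha-1}\log x$ is continuous and nondecreasing on $[0,\infty]$ (with the conventions $\log 0=-\infty$ and $\log\infty=\infty$). Hence it suffices to prove
\begin{equation}
    \liminf_{n\to\infty} J_\alpha(W_{\rho_n}\|W_\tau) \ge J_\alpha(W_\rho\|W_\tau)\,.
\end{equation}
Monotonicity and continuity of the logarithm then transfer the $\liminf$ inequality to $D_\alpha$, giving \cref{eq:liming_app}.

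The first step is pointwise convergence of the integrands. By \cref{eq:boundW1} of \cref{lem:Wbounds}, trace-norm convergence $\frac12\|\rho_n-\rho\|_1\to0$ implies $\|W_{\rho_n}-W_\rho\|_\infty\to0$. In particular, $W_{\rho_n}(\bm r)\to W_\rho(\bm r)$ for every $\bm r\in\mathbb{R}^{2N}$, not merely almost everywhere. Since $x\mapsto|x|^\alpha$ is continuous and $W_\tau(\bm r)^{1-\alpha}$ is a fixed finite positive number at each point (because $W_\tau>0$), the integrands converge pointwise:
\begin{equation}
    |W_{\rho_n}(\bm r)|^\alpha W_\tau(\bm r)^{1-\alpha}\to |W_\rho(\bm r)|^\alpha W_\tau(\bm r)^{1-\alpha}\,.
\end{equation}

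The second step is to apply Fatou's lemma. The integrands are nonnegative and measurable, since Wigner functions are continuous and bounded. Fatou's lemma therefore gives
\begin{equation}
    \int |W_\rho|^\alpha W_\tau^{1-\alpha} = \int \liminf_n |W_{\rho_n}|^\alpha W_\tau^{1-\alpha} \le \liminf_n \int |W_{\rho_n}|^\alpha W_\tau^{1-\alpha}\,,
\end{equation}
which is exactly the desired inequality for $J_\alpha$. This conclusion holds even when either side is $+\infty$, because Fatou's lemma is valid for extended-valued integrals. No integrability assumption on $\rho_n$ or $\rho$ is therefore needed, and the functional is allowed to take the value $+\infty$.

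The main obstacle is bookkeeping rather than analysis. One must make sure the extended-real conventions are consistent: $J_\alpha>0$ for any state, since $W_\rho$ is continuous and not identically zero, so $D_\alpha$ never equals $-\infty$. One must also check that lower semicontinuity along sequences suffices for the trace-norm topology. It does, because the trace norm is a metric and sequential lower semicontinuity is equivalent to lower semicontinuity in metric spaces. For subnormalized states the same argument applies verbatim.
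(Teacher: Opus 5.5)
Your proof is correct and follows the paper's argument: the uniform bound of \cref{lem:Wbounds} gives pointwise convergence of the integrands, Fatou's lemma gives $\liminf_n J_\alpha(W_{\rho_n}\|W_\tau)\ge J_\alpha(W_\rho\|W_\tau)$, and monotonicity and continuity of the logarithm transfer this to $D_\alpha$. The only difference is bookkeeping: the paper uses Jensen's inequality to show $J_\alpha(W_\rho\|W_\tau)\ge 1$, so the logarithm acts on $[1,\infty)$, whereas you use the extended logarithm on $[0,\infty]$, which works equally well and also covers subnormalized states.
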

\addtocounter{theorem}{-1} %
\endgroup
\begin{proof}
Since $W_\tau > 0$, we can define the integrand sequence $(f_n(\bm{r}))_{n \in \mathbb{N}}$, where
\begin{equation}
    f_n(\bm{r}) \coloneqq |W_{\rho_n}(\bm{r})|^\alpha W_\tau(\bm{r})^{1-\alpha}
    \ge 0
    \,,
\end{equation}
for all $\bm{r} \in \mathbb{R}^{2N}$.
\cref{lem:Wbounds} directly implies the pointwise convergence
\begin{equation}
    \lim_{n \to \infty} f_n(\bm{r}) 
    = f(\bm{r}) 
    \coloneqq |W_\rho(\bm{r})|^\alpha W_\tau(\bm{r})^{1-\alpha}
    \ge 0
    \,,
\end{equation}
for all $\bm{r} \in \mathbb{R}^{2N}$.
By Fatou's lemma~\cite{fatou1906series},
\begin{align}
    \liminf_{n \to \infty} J_\alpha(W_{\rho_n} \parallel W_\tau)
    &= \liminf_{n \to \infty} \int_{\mathbb{R}^{2N}}  f_n(\bm{r}) {\rm d}^{2N}\bm{r}
    \nonumber\\
    &\ge \int_{\mathbb{R}^{2N}}  \liminf_{n \to \infty} f_n(\bm{r}) {\rm d}^{2N}\bm{r}
    = J_\alpha(W_\rho \parallel W_\tau)
    \,. \label{eq:fatouJ}
\end{align}
By Jensen's inequality~\cite{jensen1906fonctions} on $\mathbb{R}^{2N}$ with measure $W_\tau(\bm{r}) {\rm d}\bm{r}$, we have
\begin{equation}
    J_\alpha(W_\rho \parallel W_\tau)
    = \int_{\mathbb{R}^{2N}} W_\tau(\bm{r})  \left(\frac{|W_\rho(\bm{r})|}{W_\tau(\bm{r})}\right)^\alpha {\rm d}^{2N}\bm{r}
    \ge \left(\int_{\mathbb{R}^{2N}}  |W_\rho(\bm{r})| {\rm d}^{2N}\bm{r} \right)^\alpha 
    \ge 1
    \,.
\end{equation}
For $\alpha > 1$, the function $t \mapsto \frac{1}{\alpha-1} \log t$ is strictly increasing and continuous on $[1, \infty)$, so it commutes with the limit inferior.
Therefore, applying the function on both sides of \cref{eq:fatouJ} gives \cref{eq:liming_app}.
\end{proof}

In fact, any generalized divergence formed according to our majorization equivalences is lower semicontinuous by an application of Fatou's lemma. This includes Wigner negativity (the relevant quantity in the $\alpha\rightarrow 1$ limit) as well as power divergences~\cite{Joe1990}.

\subsection{Proof of one-sided H\"older continuity bound}
\label{app:wigner3}

We derive a one-sided H\"older continuity bound for Wigner $\alpha$-R\'enyi divergences with $\alpha > 1$ under additional mild assumptions. 
This constitutes part \emph{(ii)} of \cref{thm:semicontinuity}, which we restate here, and include the explicit form of the constant $\Delta$.
\begingroup
\renewcommand{\thetheorem}{2(ii)}
\begin{theorem}\label{thm:semicontinuity2}
    Let $\tau \in \S(\H_N)$ be a Gaussian state with displacement vector $\bmd$ and covariance matrix $V$, and $\rho \in \S(\H_N)$ be a state that satisfies the finite tail-margin assumption
    \begin{equation}\label{Dkappa_assumption}
        D_{\alpha}(W_\rho \| q_\kappa) < \infty
        \,, \qquad
        q_\kappa(\bm{r}) \coloneqq \exp(-2\kappa \|\bm{r} - \bmd\|_2^2) \; W_\tau(\bm{r})
        \,,
    \end{equation}
    for some $\kappa > 0$.
    Then, every quantum state $\tilde{\rho}$ satisfying
    $\frac{1}{2} \|\tilde{\rho}-\rho\|_1 \le \epsilon$ obeys
    \begin{equation}\label{eq:semicontinuity_app}
        D_\alpha(W_{\tilde{\rho}}\|W_\tau)
        \ge
        D_\alpha(W_\rho\|W_\tau) - \frac{\log(1/(1-\Delta\epsilon^c))}{\alpha-1}
        \,, \qquad
        \forall \epsilon \in \left[ 0, \Delta^{-\frac{1}{c}} \right)
        \,,
    \end{equation}
    where
    \begin{align}
        c &\coloneqq \frac{\kappa \alpha}{\kappa \alpha + \|V^{-1}\|_{\infty}}
        \in (0, 1)
        \,, \label{eq:c}\\
        \Delta 
        &\coloneqq \left( \frac{2\alpha(\det V)^{\frac{\alpha-1}{2\alpha}}}{(\alpha-1)^{\frac{2N}{\alpha}} \kappa^{\frac{N}{\alpha}}} \right)^c \frac{J_{\alpha}(W_\rho \| q_\kappa)^{1-\frac{c}{\alpha}}}{J_{\alpha}(W_\rho \| W_\tau)}
        > 0
        \,. \label{eq:Delta_const}
    \end{align}
    In particular, for any $E \in (0,1)$, the Wigner $\alpha$-R\'enyi divergence is H\"older lower semicontinuous at $W_\rho$,
    \begin{equation}\label{eq:semicontinuity2_app}
        D_\alpha(W_{\tilde{\rho}}\|W_\tau)
        \ge
        D_\alpha(W_\rho\|W_\tau) - \frac{\log(1/(1-E))}{(\alpha-1)E}\Delta \epsilon^c
        \,, \qquad
        \forall \epsilon \in \left[ 0, \left(E\Delta^{-1}\right)^{\frac{1}{c}} \right]
        \,.
    \end{equation}
    Furthermore, in the limit $\kappa \to 0^+$, we have $\Delta \to 1$, $c \to 0^+$, and $\Delta^{-\frac{1}{c}} \to 0^+$.
\end{theorem}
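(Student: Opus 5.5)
The plan is to bound the exponential form $J_\alpha$ of \cref{eq:Jdef} from below and only take logarithms at the end. Concretely, I will aim for
\begin{equation*}
J_\alpha(W_{\tilde\rho}\|W_\tau)\ge J_\alpha(W_\rho\|W_\tau)\,\bigl(1-\Delta\epsilon^c\bigr).
\end{equation*}
Applying $\frac{1}{\alpha-1}\log$ to this inequality gives \cref{eq:semicontinuity_app} directly, and it is valid whenever $\Delta\epsilon^c<1$.

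\textbf{Splitting phase space.} Fix a radius $R>0$, to be chosen later, and split $\mathbb{R}^{2N}$ into the ball $B_R=\{\bm r:\|\bm r-\bmd\|_2\le R\}$ and its complement. Then
\begin{equation*}
J_\alpha(W_\rho\|W_\tau)-J_\alpha(W_{\tilde\rho}\|W_\tau)\le \int_{B_R}\bigl(|W_\rho|^\alpha-|W_{\tilde\rho}|^\alpha\bigr)W_\tau^{1-\alpha}+\int_{B_R^c}|W_\rho|^\alpha W_\tau^{1-\alpha},
\end{equation*}
where on $B_R^c$ I simply drop the nonnegative $\tilde\rho$ contribution.

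\textbf{Tail term.} By definition of $q_\kappa$,
\begin{equation*}
W_\tau^{1-\alpha}=q_\kappa^{1-\alpha}\,e^{-2\kappa(\alpha-1)\|\bm r-\bmd\|^2}.
\end{equation*}
On $B_R^c$ the exponential factor is at most $e^{-2\kappa(\alpha-1)R^2}$. Hence the tail term is bounded by $e^{-2\kappa(\alpha-1)R^2}J_\alpha(W_\rho\|q_\kappa)$, which is finite by the tail-margin assumption \cref{Dkappa_assumption}.

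\textbf{Ball term.} By \cref{eq:boundW2} in \cref{lem:Wbounds}, the ball term is at most
\begin{equation*}
2(2/\pi)^N\alpha\,\epsilon\int_{B_R}|W_\rho|^{\alpha-1}W_\tau^{1-\alpha}.
\end{equation*}
I then apply H\"older's inequality with exponents $\frac{\alpha}{\alpha-1}$ and $\alpha$, after factoring the integrand as
\begin{equation*}
|W_\rho|^{\alpha-1}W_\tau^{1-\alpha}=\bigl(|W_\rho|^\alpha q_\kappa^{1-\alpha}\bigr)^{\frac{\alpha-1}{\alpha}}\cdot W_\tau^{1-\alpha}q_\kappa^{\frac{(\alpha-1)^2}{\alpha}}.
\end{equation*}
The first factor contributes $J_\alpha(W_\rho\|q_\kappa)^{(\alpha-1)/\alpha}$. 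The second factor is an explicit Gaussian-type function. Using the Gaussian form of $W_\tau$, namely $W_\tau\propto(\det V)^{-1/2}e^{-2(\bm r-\bmd)^{\rm T}V^{-1}(\bm r-\bmd)}$, together with $(\bm r-\bmd)^{\rm T}V^{-1}(\bm r-\bmd)\le\|V^{-1}\|_\infty\|\bm r-\bmd\|^2$, its $L^\alpha(B_R)$ norm is bounded by a prefactor involving powers of $\det V$ times $e^{2(\alpha-1)\|V^{-1}\|_\infty R^2/\alpha}$. Any residual $e^{-\kappa(\ldots)\|\bm r\|^2}$ factor is integrated over all of $\mathbb{R}^{2N}$, which is what produces the $(\alpha-1)^{-2N/\alpha}\kappa^{-N/\alpha}$ prefactor in $\Delta$.

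\textbf{Optimizing $R$.} The total loss now has the form
\begin{equation*}
A\,\epsilon\,e^{a R^2}+B\,e^{-bR^2},\qquad a\propto\|V^{-1}\|_\infty,\quad b\propto\kappa,
\end{equation*}
with the common factor $(\alpha-1)$ absorbed. Choosing $R$ so that $e^{-bR^2}=\epsilon^{b/(a+b)}$ up to constants, or by direct minimization, makes both terms proportional to $\epsilon^{b/(a+b)}=\epsilon^c$ with $c=\kappa\alpha/(\kappa\alpha+\|V^{-1}\|_\infty)$. I need to check that the exponents match exactly this form. Collecting constants and dividing by $J_\alpha(W_\rho\|W_\tau)$ then yields $\Delta$ as in \cref{eq:Delta_const}; the weight $J_\alpha(W_\rho\|q_\kappa)^{1-c/\alpha}$ arises from interpolating the two terms. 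If the endpoint $R$ would become negative, that is for large $\epsilon$, I take $R=0$, which is harmless in the stated range of $\epsilon$.

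\textbf{H\"older form and the limit $\kappa\to0^+$.} For \cref{eq:semicontinuity2_app}, I use convexity of $x\mapsto\log\frac{1}{1-x}$ on $[0,E]$: it lies below its chord, so
\begin{equation*}
\log\frac{1}{1-x}\le \frac{x}{E}\log\frac{1}{1-E},\qquad x=\Delta\epsilon^c\le E.
\end{equation*}
For the limit $\kappa\to0^+$, I note that $c\to0$, the bracketed power tends to $1$, and $q_\kappa\to W_\tau$, so $\Delta\to J_\alpha(W_\rho\|W_\tau)/J_\alpha(W_\rho\|W_\tau)=1$. For $\Delta^{-1/c}\to0$ I need $\Delta>1$ near $\kappa=0$, or more precisely that $\log\Delta/c$ stays bounded below by a positive quantity. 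The prefactor $(\ldots)^c$ with $\kappa^{-N/\alpha}$ inside gives $\log\Delta\approx c\log(\kappa^{-N/\alpha}\cdots)$, so $\log\Delta/c\to+\infty$, which gives the claim.

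The main obstacle is the ball-term Gaussian integral. It has to be bounded cleanly enough, with the anisotropic $V^{-1}$ replaced by $\|V^{-1}\|_\infty$, that the exponents balance exactly to $c$ and the constants assemble into the stated $\Delta$. I would first carry out the computation with crude isotropic bounds and then adjust the H\"older split until the formula matches.
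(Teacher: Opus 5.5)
Your tail and ball estimates are correct, and your worry about the exponents resolves favourably. With $a=2(\alpha-1)\|V^{-1}\|_\infty/\alpha$ and $b=2\kappa(\alpha-1)$ you get $b/(a+b)=c$ exactly. Write $J_q\coloneqq J_\alpha(W_\rho\|q_\kappa)$ and let $G$ be the bracket inside $\Delta$. Then your total loss is at most $Ae^{aR^2}+Be^{-bR^2}$ with $A=G\,J_q^{(\alpha-1)/\alpha}\epsilon$ and $B=J_q$, and indeed $A^cB^{1-c}=\Delta\,J_\alpha(W_\rho\|W_\tau)\,\epsilon^c$.

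The step that fails is the last one, ``collecting constants yields $\Delta$''. Minimizing a \emph{sum} over $R$ does not give the geometric mean. By weighted AM--GM with weights $c,1-c$, for every $R$,
\begin{equation*}
    Ae^{aR^2}+Be^{-bR^2}\ \ge\ \frac{A^cB^{1-c}}{c^c(1-c)^{1-c}}\,,
\end{equation*}
with equality at the optimal $R$. Moreover $c^{-c}(1-c)^{-(1-c)}\in(1,2]$ for $c\in(0,1)$, and equating the two terms gives the factor $2$. So your route proves only $J_\alpha(W_{\tilde\rho}\|W_\tau)\ge J_\alpha(W_\rho\|W_\tau)\bigl(1-c^{-c}(1-c)^{-(1-c)}\Delta\epsilon^c\bigr)$. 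The exponent is right, but the constant is strictly larger, so the bound is nontrivial only for $\epsilon<\bigl(c^c(1-c)^{1-c}/\Delta\bigr)^{1/c}$, not on all of $[0,\Delta^{-1/c})$.

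The $R=0$ fallback is also not harmless. It gives loss $A+J_q>J_\alpha(W_\rho\|W_\tau)$, which is vacuous, and for $c<1/2$ it can be triggered inside the stated range. Re-tuning the H\"older split inside the ball will not repair this. The loss comes from the sharp cutoff itself: you replace $e^{2(\alpha-1)\|V^{-1}\|_\infty\|\bm r-\bmd\|^2}$ by its worst value on $B_R$, replace $e^{-2\kappa(\alpha-1)\|\bm r-\bmd\|^2}$ by its worst value on $B_R^c$, and then add the two pieces.

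The missing idea is to interpolate pointwise rather than by regions. Pointwise, the deficit $|W_\rho|^\alpha-|W_{\tilde\rho}|^\alpha$ is at most $\min\{u,v\}$ with $u=|W_\rho|^\alpha$ and $v=2(2/\pi)^N\alpha\epsilon|W_\rho|^{\alpha-1}$ (by \cref{lem:Wbounds}). Use $\min\{u,v\}\le u^{1-c}v^c$ everywhere, so the total deficit is at most $(2/\pi)^{Nc}(2\alpha\epsilon)^c\int|W_\rho|^{\alpha-c}W_\tau^{1-\alpha}$.

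Now apply a single H\"older inequality with exponents $\alpha/(\alpha-c)$ and $\alpha/c$, splitting off $(|W_\rho|^\alpha q_\kappa^{1-\alpha})^{(\alpha-c)/\alpha}$. This gives $J_q^{1-c/\alpha}\bigl(\int e^{-2\kappa(\alpha-1)(\alpha/c-1)\|\bm r-\bmd\|^2}W_\tau^{1-\alpha}\bigr)^{c/\alpha}$. The definition of $c$ is exactly what makes this weight, after $\bm x^{\rm T}V^{-1}\bm x\le\|V^{-1}\|_\infty\|\bm x\|_2^2$, bounded by $(\pi/2)^{N(\alpha-1)}(\det V)^{(\alpha-1)/2}e^{-2(\alpha-1)^2\kappa\|\bm r-\bmd\|^2}$. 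The Gaussian integral then yields precisely $J_\alpha(W_{\tilde\rho}\|W_\tau)\ge J_\alpha(W_\rho\|W_\tau)(1-\Delta\epsilon^c)$ with the stated $\Delta$, and there is no radius to optimize.

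Your chord argument for the H\"older form is fine, and so are your $\kappa\to0^+$ limits. The last one is cleanest via $J_q\ge J_\alpha(W_\rho\|W_\tau)$, which gives $\frac{1}{c}\log\Delta\ge\log G-\frac1\alpha\log J_\alpha(W_\rho\|W_\tau)\to+\infty$.
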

\addtocounter{theorem}{-1}
\endgroup
\begin{proof}
We seek to control how much $D_\alpha(W_{\tilde{\rho}} \| W_{\tau})$ can fall below $D_\alpha(W_\rho \| W_{\tau})$.
We will instead work with the exponential forms of the divergences defined in \cref{eq:Jdef}.
Then, the assumption in \cref{Dkappa_assumption} becomes $J_{\alpha}(W_\rho \| q_\kappa) < \infty$ for some $\kappa > 0$.
Letting $c \in (0,1)$, we obtain the bound
\begin{align}
    J_{\alpha}(W_{\tilde{\rho}} \| W_\tau) - J_{\alpha}(W_\rho \| W_\tau)
    &= \int_{\mathbb{R}^{2N}}   \Big( \abs{W_{\tilde{\rho}}(\bm{r})}^\alpha - \abs{W_\rho(\bm{r})}^\alpha \Big) \; W_{\tau}(\bm{r})^{1-\alpha} {\rm d}^{2N}\bm{r}
    \nonumber\\
    &\overset{(1)}{\ge} -\int_{\mathbb{R}^{2N}}  \min\left\{ \abs{W_\rho(\bm{r})}^\alpha \,,\ 2\left(\frac{2}{\pi}\right)^N \alpha |W_\rho(\bm{r})|^{\alpha - 1} \; \epsilon \right\} \; W_{\tau}(\bm{r})^{1-\alpha} {\rm d}^{2N}\bm{r}
    \nonumber\\
    &\overset{(2)}{\ge} -\left(\frac{2}{\pi}\right)^{Nc} (2\alpha\epsilon)^c \int_{\mathbb{R}^{2N}}  \abs{W_\rho(\bm{r})}^{\alpha-c} \; W_{\tau}(\bm{r})^{1-\alpha} {\rm d}^{2N}\bm{r}
    \nonumber\\
    &\overset{(3)}{\ge} -\left(\frac{2}{\pi}\right)^{Nc} (2\alpha\epsilon)^c J_{\alpha}(W_\rho \| q_\kappa)^{1-\frac{c}{\alpha}} \left( \int_{\mathbb{R}^{2N}}  e^{2\kappa (1-\alpha)\left(\alpha c^{-1} - 1\right) \|\bm{r} - \bmd\|_2^2} \; W_{\tau}(\bm{r})^{1-\alpha} {\rm d}^{2N}\bm{r} \right)^{\frac{c}{\alpha}} 
    \,, \label{eq:Jbound_temp}
\end{align}
where $(1)$ follows from \cref{lem:Wbounds}, $(2)$ from $\min\{u,v\} \le u^{1-c}v^c$, which holds for all $u,v\ge0$ and $c \in (0,1)$, and $(3)$ is H\"older's inequality for the conjugate exponents $\alpha/(\alpha-c)$ and $\alpha/c$.

To bound the remaining integral, we recall that the Wigner function of Gaussian state $\tau$ takes the form
\begin{equation}
    W_\tau(\bm{r}) = \left(\frac{2}{\pi}\right)^N (\det V)^{-\frac{1}{2}} \exp(-2(\bm{r} - \bmd)^{\rm T} V^{-1} (\bm{r} - \bmd))
    \,.
\end{equation}
Then, using our choice for $c$ in \cref{eq:c}, we obtain

\begin{align}
    \int_{\mathbb{R}^{2N}}  e^{2\kappa (1-\alpha)\left(\alpha c^{-1} - 1\right) \|\bm{r} - \bmd\|_2^2} \; W_{\tau}(\bm{r})^{1-\alpha} {\rm d}^{2N}\bm{r}
    &\overset{(1)}{\le}
    \left(\frac{\pi}{2}\right)^{N(\alpha-1)} (\det V)^{\frac{\alpha-1}{2}} \int_{\mathbb{R}^{2N}}  e^{-2(\alpha-1)^2 \kappa \|\bm{r} - \bmd\|_2^2} {\rm d}^{2N}\bm{r} 
    \nonumber\\
    &\overset{(2)}{=}
    \left( \frac{\pi}{2} \right)^{N\alpha} \frac{(\det V)^{\frac{\alpha-1}{2}}}{(\alpha-1)^{2N}\kappa^N}
    \,,
\end{align}
where $(1)$ follows from the bound $\bm{x}^{\rm T} V^{-1} \bm{x} \le \|V^{-1}\|_{\infty} \|\bm{x}\|_2^2$ for all $\bm{x} \in \mathbb{R}^{2N}$, and in $(2)$ we explicitly evaluated the Gaussian integral.
Substituting this into \cref{eq:Jbound_temp} and using our choice for $\Delta$ in \cref{eq:Delta_const} leads to
\begin{equation}
    J_{\alpha}(W_{\tilde{\rho}} \| W_\tau) \ge J_{\alpha}(W_\rho \| W_\tau) \left( 1 - \Delta \epsilon^c \right)
    \,.
\end{equation}
When $\epsilon < \Delta^{-\frac{1}{c}}$, we can take logarithms to arrive at \cref{eq:semicontinuity_app}.
\cref{eq:semicontinuity2_app} follows from the secant inequality
\begin{equation}\label{eq:secant}
    \log(1-x) \ge \frac{\log(1-E)}{E} x
    \,, \qquad 
    E \in (0,1)
    \,,
\end{equation}
which holds for all $x \in [0, E]$.

Finally, in the limit of $\kappa \to 0^+$, it is clear from \cref{eq:c} that $c \to 0^+$.
Then, in \cref{eq:Delta_const} we get
\begin{equation}
    \kappa^{-Nc/\alpha} 
    = \exp(-\frac{N \kappa \log(\kappa)}{\kappa \alpha + \|V^{-1}\|_{\infty}}) 
    \to 1
    \,,
\end{equation}
and $J_{\alpha}(W_\rho \| q_\kappa)^{1-\frac{c}{\alpha}} \to J_{\alpha}(W_\rho \| W_\tau)$, resulting in $\Delta \to 1$.
The limit $\Delta^{-\frac{1}{c}}$ is of indeterminate form, so we need to use the explicit forms from \cref{eq:c,eq:Delta_const} to evaluate it, giving
\begin{equation}
    \Delta^{-\frac{1}{c}}
    = \frac{(\alpha-1)^{\frac{2N}{\alpha}}}{2\alpha(\det V)^{\frac{\alpha-1}{2\alpha}}} \frac{J_{\alpha}(W_\rho \| W_\tau)^{\frac{1}{c}}}{J_{\alpha}(W_\rho \| q_\kappa)^{\frac{1}{c}-\frac{1}{\alpha}}}  \kappa^{\frac{N}{\alpha}}
    \to 0^+
    \,,
\end{equation}
To see this, we consider the first-order expansion
\begin{equation}
    J_{\alpha}(W_\rho \| q_\kappa) 
    = J_{\alpha}(W_\rho \| W_\tau) + 2(\alpha-1) J_0 \kappa + {\cal O}(\kappa^2)
    \,,
\end{equation}
where we have defined
\begin{equation}
    J_0 \coloneqq \int_{\mathbb{R}^{2N}}  |W_\rho(\bm{r})|^\alpha W_\tau(\bm{r})^{1-\alpha} \|\bm{r} - \bmd\|_2^2 {\rm d}^{2N}\bm{r} < \infty
    \,.
\end{equation}
Then, using $1/c = 1 + \frac{\|V^{-1}\|_\infty}{\kappa\alpha}$ and the Euler limit, the middle factor tends to a positive constant,
\begin{align}
    \lim_{\kappa \to 0^+} \frac{J_{\alpha}(W_\rho \| W_\tau)^{\frac{1}{c}}}{J_{\alpha}(W_\rho \| q_\kappa)^{\frac{1}{c}-\frac{1}{\alpha}}}
    &= \lim_{\kappa \to 0^+} J_{\alpha}(W_\rho \| q_\kappa)^{\frac{1}{\alpha}} \times \lim_{\kappa \to 0^+} \left( \frac{J_{\alpha}(W_\rho \| W_\tau)}{J_{\alpha}(W_\rho \| W_\tau) + 2(\alpha-1) J_0 \kappa} + {\cal O}(\kappa^2) \right)^{\frac{1}{c}}
    \nonumber\\
    &= J_{\alpha}(W_\rho \| W_\tau)^{\frac{1}{\alpha}} \times \lim_{\kappa \to 0^+} \left( 1 - 2(\alpha-1) \frac{J_0}{J_{\alpha}(W_\rho \| W_\tau)} \kappa + {\cal O}(\kappa^2) \right)^{1 + \frac{\|V^{-1}\|_\infty}{\kappa\alpha}}
    \nonumber\\
    &= J_{\alpha}(W_\rho \| W_\tau)^{\frac{1}{\alpha}} \exp(-\frac{2(\alpha-1) \|V^{-1}\|_\infty J_0}{\alpha J_{\alpha}(W_\rho \| W_\tau)})
    > 0
    \,.
\end{align}
Hence, $\Delta^{-\frac{1}{c}} \to 0^+$ because of the vanishing factor $\kappa^{\frac{N}{\alpha}}$.
\end{proof}

\subsection{Counterexample to continuity}
\label{app:wigner4}

Here, we establish that lower semicontinuity cannot be relaxed to a continuity statement, even if we restrict to sequences of states with bounded moments.
\begin{lemma}\label[lemma]{thm:discontinuity}
    Consider a 1-mode state $\rho \in \S(\H_1)$ with bounded moments, $\Tr\left( \rho (a^\dag a)^M \right) < C_M$ for all $M \in \mathbb{N}$. Suppose that for some $\alpha \in (1, \infty)$ it has finite Wigner $\alpha$-R\'enyi divergence $D_\alpha(W_{\rho} \| W_{\tau}) < \infty$ relative to a thermal state $\tau \in \S(\H_1)$.
    Then, there exists a sequence of states $(\rho_n)_{n \in \mathbb{N}} \subset \S(\H_1)$ that satisfy the following:
    \begin{enumerate}
        \item All statistical moments are bounded, $\sup_{n \in \mathbb{N}} \Tr\left(\rho_n (a^\dag a)^M\right) < \infty$ for all $M \in \mathbb{N}$;
        \item The trace distance with $\rho$ vanishes, $\lim_{n \to \infty} \frac{1}{2} \| \rho_n - \rho \|_1 = 0$;
        \item The Wigner $\alpha$-R\'enyi divergence relative to the thermal state $\tau$ diverges,
        \begin{equation}\label{eq:divdiv}
            \lim_{n \to \infty} \left| D_\alpha(W_{\rho_n} \| W_{\tau}) - D_\alpha(W_{\rho} \| W_{\tau}) \right| = \infty
            \,.
        \end{equation}
    \end{enumerate}
\end{lemma}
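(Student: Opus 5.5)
The plan is to perturb $\rho$ by a vanishing admixture of a fixed Gaussian state whose Wigner $\alpha$-R\'enyi divergence relative to $\tau$ is already infinite. Concretely, I will take
\begin{equation*}
    \rho_n \coloneqq \left(1-\tfrac{1}{n}\right)\rho + \tfrac{1}{n}\,\sigma ,
\end{equation*}
where $\sigma$ is a single-mode squeezed vacuum $\dyad{r}$ with squeezing large enough that $D_\alpha(W_\sigma\|W_\tau)=+\infty$. Since $\sigma$ is fixed, the perturbation cannot inflate moments, yet the heavy weight $W_\tau^{1-\alpha}$ will make every $J_\alpha(W_{\rho_n}\|W_\tau)$ diverge.

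\textbf{Step 1 (choice of $\sigma$).} Write the thermal reference as
\begin{equation*}
    W_\tau(\bm r) = \frac{2}{\pi(1+2\bar n)}\, e^{-2\|\bm r\|^2/(1+2\bar n)} ,
\end{equation*}
so that $W_\tau^{1-\alpha}$ grows like $e^{2(\alpha-1)\|\bm r\|^2/(1+2\bar n)}$. Along the anti-squeezed quadrature, $W_\sigma$ decays only like $e^{-2e^{-2r}x^2}$. Hence the integrand $W_\sigma^\alpha W_\tau^{1-\alpha}$ has, in that direction, the exponent
\begin{equation*}
    \left(-2\alpha e^{-2r} + \frac{2(\alpha-1)}{1+2\bar n}\right)x^2 ,
\end{equation*}
which is positive once $e^{2r} > \alpha(1+2\bar n)/(\alpha-1)$. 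Fixing any such $r$ gives $J_\alpha(W_\sigma\|W_\tau)=\infty$. For $\bar n=0$ this agrees with the threshold in \cref{eq:squ_renyi}.

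\textbf{Step 2 (properties 1 and 2).} For each $M$, moments are affine in the state, so
\begin{equation*}
    \Tr\big(\rho_n(a^\dag a)^M\big)\le \max\big\{C_M,\ \Tr\big(\sigma(a^\dag a)^M\big)\big\} .
\end{equation*}
The right-hand side is finite because squeezed vacuum has finite moments of every order. The trace distance satisfies $\tfrac12\|\rho_n-\rho\|_1=\tfrac1{2n}\|\sigma-\rho\|_1\le \tfrac1n\to0$.

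\textbf{Step 3 (property 3).} Convexity of $t\mapsto|t|^\alpha$ gives, for real $a,b$,
\begin{equation*}
    |b|^\alpha \le 2^{\alpha-1}\big(|a+b|^\alpha+|a|^\alpha\big), \qquad\text{i.e.}\qquad |a+b|^\alpha\ge 2^{1-\alpha}|b|^\alpha-|a|^\alpha .
\end{equation*}
Apply this pointwise with $a=(1-\tfrac1n)W_\rho$ and $b=\tfrac1n W_\sigma$, multiply by $W_\tau^{1-\alpha}$, and integrate. This yields
\begin{equation*}
    J_\alpha(W_{\rho_n}\|W_\tau)\ \ge\ 2^{1-\alpha}n^{-\alpha}J_\alpha(W_\sigma\|W_\tau)-\left(1-\tfrac1n\right)^{\alpha}J_\alpha(W_\rho\|W_\tau).
\end{equation*}
The subtracted term is finite by the hypothesis $D_\alpha(W_\rho\|W_\tau)<\infty$. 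Rigorously, one should integrate the inequality $|a+b|^\alpha+|a|^\alpha\ge 2^{1-\alpha}|b|^\alpha$ directly to avoid an $\infty-\infty$ form. The conclusion is $D_\alpha(W_{\rho_n}\|W_\tau)=\infty$ for every $n$, while $D_\alpha(W_\rho\|W_\tau)<\infty$, so \cref{eq:divdiv} holds.

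\textbf{Main obstacle.} The only delicate point is Step 1: one must check that the anti-squeezed direction of $\sigma$ is not cancelled by faster decay elsewhere. Because the integrand factorizes over the two rotated quadratures, divergence of the one-dimensional Gaussian integral along the anti-squeezed direction suffices. If one prefers each $D_\alpha(W_{\rho_n}\|W_\tau)$ to be finite but unbounded, the alternative is to let the squeezing $r_n$ approach the threshold from below while shrinking the mixing weight. That requires balancing the weight against the blow-up rate of \cref{eq:squ_renyi} while keeping the moments bounded, and the fixed-$\sigma$ construction above avoids this balancing altogether.
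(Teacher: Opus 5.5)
Your proof is correct, but it takes a different route from the paper's. The paper does not mix $\rho$ with a fixed state beyond the divergence threshold. It mixes with thermal states $\tau_n$ whose mean photon number $\bar n_n=\bar n+\frac{1-n^{-1/2}}{2\lambda(\alpha-1)}$, with $\lambda=(2\bar n+1)^{-1}$, approaches the threshold $\bar n+\frac{1}{2\lambda(\alpha-1)}$ from below. As a result $J_\alpha(W_{\tau_n}\|W_\tau)\sim\sqrt n$ is finite but growing. It then chooses the weight $w_n=n^{-c}$ with $c\in(0,\frac{1}{2\alpha})$. Finally, it applies the reverse triangle inequality for the weighted norm $\|W\|_{\alpha,\mu}=J_\alpha(W\|W_\tau)^{1/\alpha}$, which gives $\|W_{\rho_n}\|_{\alpha,\mu}\ge w_n\|W_{\tau_n}\|_{\alpha,\mu}-(1-w_n)\|W_\rho\|_{\alpha,\mu}$, where the first term scales as $n^{\frac{1}{2\alpha}-c}\to\infty$. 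This is exactly the balancing alternative you sketch at the end, with isotropic thermal states in place of squeezed vacua.

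Your fixed-$\sigma$ construction is shorter and needs no rate tuning. Your threshold computation, the Tonelli factorization, and the convexity inequality $|a+b|^\alpha+|a|^\alpha\ge 2^{1-\alpha}|b|^\alpha$ are all sound. The construction also shows that every trace-norm neighbourhood of $\rho$ contains states with uniformly bounded moments and infinite divergence.

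What it gives up is that item 3 holds only in the extended-real sense, as the constant sequence $|{+\infty}-D_\alpha(W_\rho\|W_\tau)|=+\infty$. The paper's $\rho_n$ all have finite divergence, since by Minkowski $\|W_{\rho_n}\|_{\alpha,\mu}\le(1-w_n)\|W_\rho\|_{\alpha,\mu}+w_n\|W_{\tau_n}\|_{\alpha,\mu}<\infty$. It therefore proves the stronger fact, described in the main text as divergences growing arbitrarily large, that the functional is discontinuous even on the set of states where it is finite. Because the paper's conventions explicitly allow $D_\alpha=+\infty$, your argument does establish the lemma as stated.
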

\begin{proof}
    Let the mean photon number of state $\tau$ be $\bar{n}$, so that its Wigner function is
    \begin{equation}
         W_{\tau}(x,p) 
         = \frac{2\lambda}{\pi} \exp(-2\lambda(x^2 + p^2)) 
         \,,
    \end{equation}
    where $\lambda \coloneqq (2\bar{n}+1)^{-1}$.
    Consider a sequence of new thermal states $(\tau_n)_{n \in \mathbb{N}} \subset \S(\H_1)$ with
    \begin{equation}
        \tau_n = \sum_{k=0}^\infty \frac{\bar{n}_n^k}{(\bar{n}_n + 1)^{k+1}} \ketbra{k} 
        \,, \quad 
        \bar{n}_n = \bar{n} + \frac{1}{2\lambda(\alpha-1)} \left(1 - \frac{1}{\sqrt{n}}\right) \ge 0
        \,.
    \end{equation}
    Setting $\lambda_n \coloneqq (2\bar{n}_n+1)^{-1}$, its Wigner function is
    \begin{equation}
         W_{\tau_n}(x,p) 
         = \frac{2\lambda_n}{\pi} \exp(-2\lambda_n(x^2 + p^2)) 
         \,.
    \end{equation}

    We now construct the sequence $(\rho_n)_{n \in \mathbb{N}} \subset \S(\H_1)$ by defining
    \begin{equation}
        \rho_n = (1-w_n) \rho + w_n \tau_n
        \,, \quad
        w_n = n^{-c}
        \,, \quad
        c \in \left(0, \frac{1}{2\alpha}\right)
        \,.
    \end{equation}
    Its moments are strictly bounded as, for all $n, M \in \mathbb{N}$ and $\alpha \in (1,\infty)$, %
    \begin{equation}
        \Tr\left(\rho_n (a^\dag a)^M\right) 
        = (1-w_n) \Tr\left(\rho (a^\dag a)^M \right) + w_n \Tr\left(\tau_n (a^\dag a)^M\right)
        < C_M + M! (\bar{n}_\star + 1)^M
        < \infty
        \,,
    \end{equation}
    where $\bar{n}_\star \coloneqq \bar{n} + 1/(2\lambda(\alpha-1))$.
    The trace distance with $\rho$ vanishes,
    \begin{equation}
        \frac{1}{2} \| \rho_n - \rho \|_1 = w_n \frac{1}{2} \| \tau_n - \rho \|_1 \le w_n = n^{-c}
        \,.
    \end{equation}
    To prove the statement regarding the Wigner $\alpha$-R\'enyi divergence, we relate it to the weighted norm on the space of Wigner functions with measure ${\rm d}\mu \coloneqq W_{\tau}(x,p)^{1-\alpha} \; {\rm d}x \; {\rm d}p$ defined as follows,
    \begin{equation}
        \|W\|_{\alpha, \mu} \coloneqq \left( \int_{\mathbb{R}^2} |W(x,p)|^\alpha \; W_{\tau}(x,p)^{1-\alpha} \; {\rm d} x \; {\rm d} p \right)^{1/\alpha}
        = \exp(\frac{\alpha-1}{\alpha} D_\alpha( W \| W_{\tau}))
        \,.
    \end{equation}
    Using the triangle inequality on this space, we obtain
    \begin{equation}\label{eq:wsigman}
        \|W_{\rho_n}\|_{\alpha, \mu} = \|(1-w_n)W_\rho + w_n W_{\tau_n}\|_{\alpha, \mu} \ge w_n \|W_{\tau_n}\|_{\alpha, \mu} - (1-w_n) \|W_\rho\|_{\alpha, \mu}\
        \,.
    \end{equation}
    Next, we evaluate
    \begin{align}
        \|W_{\tau_n}\|_{\alpha, \mu}^\alpha 
        &= \int_{\mathbb{R}^2} |W_{\tau_n}(x,p)|^\alpha \; W_{\tau}(x,p)^{1-\alpha} \; {\rm d} x \; {\rm d} p
        \nonumber\\
        &= \frac{2\lambda}{\pi} \left(\frac{\lambda_n}{\lambda}\right)^\alpha \int_{\mathbb{R}^2} \exp(-2(\alpha\lambda_n + (1 - \alpha)\lambda)(x^2 + p^2)) {\rm d} x {\rm d} p
        \nonumber\\
        &= \frac{\lambda}{\alpha \lambda_n + (1 - \alpha)\lambda} \left(\frac{\lambda_n}{\lambda}\right)^\alpha
        \nonumber\\
        &= \left( \frac{\alpha-1}{\alpha - n^{-1/2}} \right)^{\alpha-1} n^{1/2}
        \,, \label{eq:wtaun}
    \end{align}
    where the last two equalities follow from the Gaussian integral $\frac{2}{\pi} \int_{\mathbb{R}^2} e^{-2A (x^2 + p^2)} \; {\rm d}x \; {\rm d}p = \frac{1}{A}$ for all $A > 0$, where here
    \begin{equation}
        A = \alpha\lambda_n + (1 - \alpha)\lambda = \frac{\alpha - 1}{\alpha\sqrt{n} - 1}\lambda > 0
        \,,
    \end{equation}
    Then, combining \cref{eq:wsigman,eq:wtaun}, we obtain
    \begin{equation}
        \|W_{\rho_n}\|_{\alpha, \mu} 
        \ge w_n \|W_{\tau_n}\|_{\alpha, \mu} - (1 - w_n) \|W_\rho\|_{\alpha, \mu}
        = \left( \frac{\alpha-1}{\alpha - n^{-1/2}} \right)^{\frac{\alpha-1}{\alpha}} n^{\frac{1}{2\alpha} - c} - (1 - n^{-c}) \|W_\rho\|_{\alpha, \mu}
        \,,
    \end{equation}
    which diverges as $n \to \infty$ for all $\alpha \in (1, \infty)$ because of the assumptions $c \in \big(0, 1/(2\alpha)\big)$ and $D_\alpha(W_{\rho} \| W_{\tau}) < \infty$ which implies $\|W_\rho\|_{\alpha, \mu} < \infty$.
    Consequently, $D_\alpha(W_{\rho_n} \| W_{\tau})$ diverges, and, since $D_\alpha(W_{\rho} \| W_{\tau}) < \infty$, we obtain \cref{eq:divdiv}, concluding the proof.
\end{proof}

\section{Derivation of certified Lorenz envelopes}
\label{app:robust}

In this appendix we detail various sources of numerical error and explain how the envelopes account for them. To compute robust curves, we require additional regularity information beyond just the values of the Wigner function. Our procedure is agnostic to how that additional information is computed. We provide some generic ways to compute this information using higher-order derivatives, but different approaches may give better results for certain Wigner functions. Any method is compatible with our construction (see also, consistency bounds below). A trick we use repeatedly is pointwise and cellwise dominance: Given two functions $f,g$ such that ${\rm sgn}(f) = {\rm sgn}(g)$ and $\abs{f}\geq\abs{g}$, the area enclosed by the positive and negative Lorenz curves of $f$ contains the Lorenz curves of $g$. %
And if a piecewise constant function majorizes or is majorized by $f$ on each cell, then its piecewise linear Lorenz curves similarly bound those of $f$. We first derive the error region for a single-mode function, then show how to extend this to the multimode case. Our envelopes are the Lorenz curves of piecewise constant functions. Hence, they are exactly piecewise linear.
We use $\concat$ to denote concatenation to the list on the left-hand side.

\subsection{Phase-space cutoff}
First consider the error from integrating over a finite region of phase space, discarding the function outside.
In our numerics, we integrate over a square centered at the origin. Let $f_\text{c}$ be the cutoff function that equals $f$ in the region and zero elsewhere and let the absolute values of the dropped positive and negative masses be $w^+_{f}$ and $w_{f}^-$.
The cutoff function's Lorenz curves are automatically inner bounds because signed mass is dropped from $f$.
In the worst case, the missing masses are on a region with arbitrarily small measure, so that the true Lorenz curves are, in the limit of smaller and smaller measure, shifted vertically from the cutoff ones by $w_f^{\pm}$. Note this argument holds regardless of what $q$ is. In summary the Lorenz envelopes due to cutoff are
\begin{align}
     \qdeclor{f_\text{c}} \leq \qdeclor{f}\leq \qdeclor{f_\text{c}}+w_f^+\\
     \qinclor{f_\text{c}} \geq \qinclor{f}\geq \qinclor{f_\text{c}}-w_f^-.
\end{align}
Recall our formalism is restricted to functions with finite negative volume so both masses $w_f^{\pm}$ are finite. However, they could in principle be arbitrarily large. A generic bound, even e.g. assuming Wigner functions of states with finite energy, is generally not possible. %
Instead a case-by-case analysis is needed to bound the discarded weights for a specific $f$.

\subsection{Integration error}
Now we work with the cutoff function and account for the error from a finite sampling resolution. In our numerics, we sample over a square grid.

Within each cell, we first compute an upper and lower bound on the function values and on its integral over the cell. This can be done according to any procedure, and for a specific $f$ a specialized approach may give the best results. We present here a generic method using second-order information about $f$. For the functions and numerical parameters we typically consider, using second-order information is necessary to get reasonable error bars.

Assume we have a function for $f$, for the gradient of $f$, and for the second derivatives of $f$. Each square cell $C_i$ has side length $\Delta$ and midpoint $a_i$.
We first bound how much $f$ can change. Let $E^i_{xx},E^i_{xp},E^i_{pp}$ be upper bounds on $\abs{f_{xx},f_{xp},f_{pp}}$ respectively in cell $i$. By the second-order multivariate Taylor's theorem, 
\begin{equation}
    \abs{f(x)-f(a_i)} \leq \frac{\Delta}{\sqrt{2}} \norm{\grad{f}|_{a_i}}_2 + \frac{\Delta^2}{2} \max \{ E^i_{xx},E^i_{xp},E^i_{pp} \},
\end{equation}
for all $x\in C_i$, where the notation in the first term indicates the $2$-norm of $\grad f$ evaluated at $a_i$. %
Next, we use the midpoint rule estimate to obtain a second-order estimate on the integration error
\begin{equation}
    \abs{\int_{C_i}f {\rm d}x{\rm d}p -\Delta^2 f(a_i)} \leq \Delta^2 \frac{\Delta^2}{24} (E^i_{xx}+E^i_{pp}).
\end{equation}
In addition to the bounds on $f$, bounding the integral helps us refine our envelopes. We summarize the lists of bounds for each cell,
\begin{align}
    \fmax&= f(a_i)+\frac{\Delta}{\sqrt{2}} \norm{\grad{f}|_{a_i}}_2 + \frac{\Delta^2}{2} \max \{ \abs{E^i_{xx}},\abs{E^i_{xp}},\abs{E^i_{pp} }\}\\
    \fmin&= f(a_i) - \frac{\Delta}{\sqrt{2}} \norm{\grad{f}|_{a_i}}_2 - \frac{\Delta^2}{2}\max \{ \abs{E^i_{xx}},\abs{E^i_{xp}},\abs{E^i_{pp} }\}\\
    \fu&=\min\{\Delta^2 f(a_i)+ \Delta^2 \frac{\Delta^2}{24} (\abs{E^i_{xx}}+\abs{E^i_{pp}}),\fmax \Delta^2\}\\
    \fl&=\max\{\Delta^2 f(a_i)- \Delta^2 \frac{\Delta^2}{24} (\abs{E^i_{xx}}+\abs{E^i_{pp}}),\fmin \Delta^2 \}.
\end{align}
By taking the maximum or minimum, we have imposed a consistency constraint so that the integral bounds are at least as good as those derived from the function value bounds. This is strictly unnecessary based on the specific bounds we're using, but if a different procedure were used to derive pointwise and area bounds, then this consistency condition is assumed to have been enforced in the subsequent discussion. It is also implicit that $\fmax\geq\fmin$ and  $\fu \geq \fl $, but to be clear, this needs to be enforced if using some other procedure to derive bounds.

\subsubsection{Regular majorization}

Now assuming we have consistent upper and lower bounds on the function and its integral in each cell, derived in any manner, we derive the envelopes for regular majorization. Consider nonnegative $f$ for illustration. The most ordered function respecting the function-value constraints takes the value $\fmax$ over as large an area as possible (the specific region is irrelevant for computing its Lorenz curve, only the total area is important) and then stays at $\fmin$ until the total mass limit is hit. By construction this function weakly majorizes any other function respecting the upper and lower bounds on its values and average over the cell. %
The least ordered function is one that is constant. In this case, by the constraint we can set it to take the value $\fl/\Delta^2$ over the whole cell (in this case of nonnegative $f$, $\fmin$ can be taken to be nonnegative and this forces $\fl$ to also be nonnegative). %

Now consider possibly negative $f$. If $\fmax$ is positive, then the function with the most ordered positive Lorenz curve is as described above, but the fact that $\fmin$ could be negative means the most ordered function could achieve its maximum value over a larger region with the mass canceled out by negative values elsewhere. So we need to compute what that maximum area is. Define areas $A_1$ and $A_2$ where the bounding function achieves $\fmax$ and $\fmin$ respectively, which must obey
\begin{align}\label{areasum}
    A_1+A_2 = \Delta^2.
\end{align}
The areas are also constrained by
\begin{align}\label{areaweightpos}
    A_1 \fmax + A_2 \fmin \leq \fu.
\end{align}
Substituting the first constraint into the second one, we can upper bound $A_1$ by 
\begin{align}
A_1 (\fmax - \fmin) + (A_2+A_1) \fmin  \leq \fu &\implies A_1 (\fmax - \fmin) \leq \fu - \Delta^2 \fmin\\
&\implies A_1  \leq \frac{\fu - \Delta^2 \fmin}{\fmax - \fmin} \eqqcolon \amaxp,\label{a1max}
\end{align}
where if $\fmax=\fmin$, then $\fu=\Delta^2 \fmax$ and the indeterminate fraction should be read as $\Delta^2$. Because $\Delta^2 \fmin \leq \fu \leq \Delta^2 \fmax$, the bound is always in $[0,\Delta^2]$, i.e. never worse than the total cell area.
So the list of points that leads to the outer positive bound gets two contributions from each cell,
\begin{align}
Z_\text{O,P}&=\{ \left(A_1^\text{max} \max\{\fmax ,0\} , A_1^\text{max} \right), \left( [\Delta^2-A_1^\text{max}] \max\{\fmin ,0\} , [\Delta^2-A_1^\text{max}] \right)\}.
\end{align}
The $\max$ simply ensures that $\fmax, \fmin$ don't contribute to the positive bound if they are negative. 
For the outer negative list, we repeat an analogous analysis but now upper bound $A_2$. Assume $\fmin$ is negative, if not there is no contribution to the negative list. The relevant constraint is now
\begin{equation}\label{areaweightneg}
    A_1 \fmax + A_2 \fmin \geq \fl.
\end{equation}
The bound on $A_2$ is then
\begin{align}
    A_2(\fmin - \fmax) + \Delta^2 \fmax \geq \fl &\implies A_2(\fmin - \fmax) \geq \fl- \Delta^2 \fmax\\
    &\implies A_2 \leq \frac{\fl- \Delta^2 \fmax}{\fmin - \fmax}=\frac{\Delta^2 \fmax-\fl}{\fmax-\fmin}\eqqcolon \amaxn.
\end{align}
Again let this bound be $\Delta^2$ when $\fmax=\fmin$ and denote it by $A_2^\text{max}$. The outer negative list gets two contributions for each cell,
\begin{align}
     Z_\text{O,N}&=\{ \left(A_2^\text{max} \min\{\fmin,0\},A_2^\text{max}\right), \left([\Delta^2-A_2^\text{max}] \min\{\fmax,0\},[\Delta^2-A_2^\text{max}]\right)   \}. 
\end{align}
Corresponding to a constant bounding function taking the minimum possible value, the inner positive list is
\begin{equation}
    Z_\text{I,P}= \{ \left(\max\{\fl,0\}, \Delta^2 \right) \}. 
\end{equation}
Similarly, the inner negative list is 
\begin{equation}
    Z_\text{I,N} = \{ \left(\min\{\fu,0\}, \Delta^2\right) \}. 
\end{equation}

In summary, we have defined piecewise constant functions which majorize or are majorized by the positive and negative parts of $f$. This guarantees that their Lorenz curves bound the curves of the cutoff function: 
\begin{align}
\declor{Z_\text{I,P}} \leq &\declor{f_\text{c}} \leq \declor{Z_\text{O,P}} \\
\inclor{Z_\text{I,N}} \geq &\inclor{f_\text{c}} \geq \inclor{Z_\text{O,N}}. 
\end{align}

\subsubsection{Relative majorization}

Now consider relative majorization and the associated errors from sampling $q$ too over the discrete grid. Assume in a similar manner that we have consistent nonzero bounds $\qmax,\qmin,\qu,\ql>0$. For example, these bounds can be derived using the first and second-derivative procedure as described for $f$. We will use that relative majorization reduces to regular majorization of $f/q$ with the measure ${\rm d}\nu = q {\rm d}\mu$. Pointwise upper and lower bounds on $f/q$ are $\fmax/q$ and $\fmin/q$. For relative majorization, the outer bounds are formed by concatenating the lists generated in three cases.

\vspace{10pt}

\textbf{Case 1.} $\fmin\geq 0$.

By the same logic as above, the most ordered function takes the maximum value over some portion of the cell and the minimum value elsewhere. These values are $\fmax/\qmin$ and $\fmin/\qmax$. The most-ordered function accumulates its mass over the smallest measure, so we use the lower bound $\ql$. We then solve for the maximum area over which the bounding function can be at its maximum value without the total mass exceeding $\fu$. Define this value as
\begin{equation}
Q^* = \frac{\fu - \frac{\fmin}{\qmax} \ql}{\frac{\fmax}{\qmin} - \frac{\fmin}{\qmax}}   
\end{equation}
Then, the outer bound list is
\begin{align}
Z_\text{O,P}&\concat\left\{ \left(\frac{\fmax}{\qmin}Q^*,Q^*\right) , \left( \frac{\fmin}{\qmax} (\ql - Q^*),\ql-Q^*  \right)\right\}.
\end{align}
Nothing is added to the $Z_\text{O,N}$ list.

\vspace{10pt}

\textbf{Case 2.} $\fmax>0, \fmin< 0$.

We find a bounding function in two steps. Again, $f_c^+$ is weakly majorized (relative to $q$) by the function which equals $\fmax/q$ over as large a region as possible and $\fmin/q$ elsewhere. Let these two regions be $R_1,R_2$. $R_1$ is constrained by
\begin{equation}
    \fu \geq \int_{R_1} \frac{\fmax}{q} q {\rm d}x{\rm d}p + \int_{R_2} \frac{\fmin}{q} q {\rm d}x{\rm d}p = A_1 \fmax + A_2 \fmin .
\end{equation}
And we have that $A_1+A_2=\Delta^2$ so \cref{a1max} is still an upper bound on $A_1$. Next, this function is in turn relatively majorized by one where $q$ is replaced by $\qmin$. %
So by transitivity of relative majorization, the piecewise constant function taking values $\fmax/\qmin$ over $\amaxp$ and $\fmin/\qmin$ elsewhere weakly majorizes $f^+$ relative to $q$ in the cell. 

Thus, the contribution to the outer positive list is
\begin{align}
Z_\text{O,P}&\concat\{ \left(A_1^\text{max} \fmax ,  A_1^\text{max} \qmin \right)\}.
\end{align}
We also get a contribution to the outer negative list by similar reasoning as for the positive case,
\begin{align}
Z_\text{O,N}&\concat \{ \left(A_2^\text{max} \fmin, A_2^\text{max}\qmin \right) \}. 
\end{align}

\vspace{10pt}

\textbf{Case 3.} $\fmax \le 0$.

This is just a reversal of the first case. The steeper slope is $\fmin/\qmin$ and the shallower one is $\fmax/\qmax$. So the outer bound intersection point is at 
\begin{equation}
Q^* = \frac{\fl - \frac{\fmax}{\qmax} \ql}{\frac{\fmin}{\qmin} - \frac{\fmax}{\qmax}}.   
\end{equation}
The contribution to the negative list is then 
\begin{align}
Z_\text{O,N}&\concat\left\{ \left(\frac{\fmin}{\qmin}Q^*,Q^*\right) , \left( \frac{\fmax}{\qmax} (\ql - Q^*),\ql-Q^*  \right)\right\}.
\end{align}

In all three cases, we have that the positive mass accumulated is at least $\fl$, and that the shallowest Lorenz curve segment is if this is accumulated uniformly over the largest area $\qu$. With similar logic for the negative mass, the inner bounds are given by
\begin{equation}
    Z_\text{I,P}= \{ (\max\{\fl,0\}, \qu) \},
\end{equation}
and
\begin{equation}
    Z_\text{I,N} = \{ (\min\{\fu,0\}, \qu) \}. 
\end{equation}

With these choices of lists, we are guaranteed that
\begin{align}
    \declor{Z_\text{I,P}} \leq &\qdeclor{f_\text{c}} \leq \declor{Z_\text{O,P}} \\
    \inclor{Z_\text{I,N}} \geq &\qinclor{f_\text{c}} \geq \inclor{Z_\text{O,N}}. 
\end{align}

We assumed that the lower bounds on $q$ are always strictly positive. In cells where this is not the case, the corresponding function masses can be added to separate `asymptote' lists for positive and negative sign. Then, they contribute to a vertical shift of the Lorenz curve at the origin for the outer bounds (see below) while their contribution to the inner bounds is as above. The shift models the function accumulating those values over a region of arbitrarily small measure $\nu$. %

\subsection{Coarse graining}
Our second-order method works well in practice for single-mode states. However, the length of the discrete lists grows inverse polynomially in $\Delta$, which can be large for practical purposes. In particular, the multimode case becomes prohibitively slow in the step of tensoring the lists together. To remedy this, we observe that the curves can have many consecutive segments which only differ slightly in slope; our solution is to carefully combine them, while still maintaining the rigorous envelope guarantees. This leaves the curves essentially unchanged but the number of points greatly reduced. We use that the majorization preorder is preserved under direct sums. In practice, we use a histogram log-binning to choose which consecutive slopes get combined into a single segment. The following procedure we describe is agnostic to how the binning is performed.

So assume that some binning of the lists has been performed. For the rest of this section, let the index run only over the points in one group that are to be binned together. The process described is then repeated for each group. We assume the lists are ordered as they would be for computing the Lorenz curves and that the groups are contiguous.

Let a single such group be the set $\{(z_{k}^f,z_{k}^q) \}_{k=1}^m$ and let $\mathcal{I}$ be an index over all the groups. For the positive inner bound, the straight line segment connecting the endpoints of the portion of the curve arising from this group is an inner bound by concavity. For the negative inner bound, the same holds by convexity. Mathematically, each group is replaced by a single point,
\begin{align}
Z_\text{I,P}^\text{coarse} &= (\sum_k z_{k,\text{I,P}}^f, \sum_k z_{k,\text{I,P}}^q)\\
Z_\text{I,N}^\text{coarse}&= (\sum_k z_{k,\text{I,N}}^f, \sum_k z_{k,\text{I,N}}^q)
\end{align}

For the outer curves, extending the steepest slope within the group over the whole interval is guaranteed to remain an outer bound again by concavity or convexity. This gives the following coarse-grained lists,
\begin{align}
Z_\text{O,P}^\text{coarse}&=\left( \frac{z_{1,\text{O,P}}^f}{z_{1,\text{O,P}}^q} \sum_k z_{k,\text{O,P}}^q  , \sum_k z_{k,\text{O,P}}^q  \right)    \\
Z_\text{O,N}^\text{coarse}&=\left( \frac{z_{1,\text{O,N}}^f}{z_{1,\text{O,N}}^q} \sum_k z_{k,\text{O,N}}^q  , \sum_k z_{k,\text{O,N}}^q  \right) 
\end{align}

\subsection{Final processing}
To model the cutoff weights, which shift the outer curves vertically, we add an entry to each outer list,
\begin{align}
Z_\text{O,P}^\text{coarse}&\concat\{(w_f^+,0)\} \\
Z_\text{O,N}^\text{coarse}&\concat\{(-w_f^-,0)\}.
\end{align}
These entries are guaranteed to sort to the beginning of the list and shift the curves as required. If certain cell values have been added to an `asymptote' list, their corresponding positive or negative weights are also added at this stage to the shift.
Finally, we concatenate the outer and inner bound lists. 
\begin{align}
    Z_\text{O} &\coloneqq \{ Z_\text{O,P}^\text{coarse},Z_\text{O,N}^\text{coarse} \}\\
    Z_\text{I} &\coloneqq \{ Z_\text{I,P}^\text{coarse},Z_\text{I,N}^\text{coarse} \}
\end{align}
Interpreting these lists as the Lorenz curves of two piecewise constant functions, $f_\text{O}$ relative to $q_\text{O}$ and $f_\text{I}$ relative to $q_\text{I}$, by construction we are guaranteed that $(f_\text{O},q_\text{O}) \weak (f,q) \weak (f_\text{I},q_\text{I})$.

\subsection{Multimode}
For a tensor-product state, we simply tensor the outer lists $\bigotimes_i Z^i_\text{O}$ and the inner lists $\bigotimes_i Z^i_\text{I}$. Here, the multiplication used in the tensor product is
\begin{equation}
    (z_1^f,z_1^q) \times (z_2^f,z_2^q) = (z_1^f z_2^f,z_1^q z_2^q).
\end{equation}
The positive and negative Lorenz curves obtained from the lists are then rigorous outer and inner bounds because the majorization preorder is preserved under tensor products. Recall that the positive and negative Lorenz curves do not depend on terms of the opposite sign.

For a specific implementation, one could also account for errors in function evaluation or even errors from floating point arithmetic. For our implementation, we expect these to be negligible in comparison to the errors we have accounted for. We focus on Wigner functions built from standard functions such as polynomials and Gaussians. So, for each single mode term, we add (for upper bounds) or subtract (for lower bounds) a small tolerance from all the lists. In various parts of our derivation, we have specialized to functions over $\mathbbm{R}^2$ and square grids, but our approach can be applied to any measure space and arbitrary sampling grids. %

\section{Wigner--R\'enyi divergences of squeezed and attenuated cat states}
\label{app:wigner_cat}

In this section, we derive the Wigner--R\'enyi divergences of squeezed and cat states that are used in the main text.
We recall the Gaussian integral~\cite{zinn-justin_2010_PathInteg}, which we will use repeatedly in the derivations,
\begin{equation}\label{eq:gauss_integral}
    \int_{\mathbb{R}} {\rm d}z \; e^{-az^2+bz+c} = \sqrt{\frac{\pi}{a}}\exp(\frac{b^2}{4a}+c) \,, \quad a,b,c \in \mathbb{C} \,,\quad {\rm Re}(a) > 0 \,.
\end{equation}

\subsection{Wigner--R\'enyi divergences for squeezed states relative to vacuum}

Recall the Wigner function of the vacuum state $\ket{0}$ is
\begin{equation}\label{eq:Wvac}
    W_{\ket{0}}(x,p) = \frac{2}{\pi} e^{-2(x^2+p^2)} \,.
\end{equation}
The Wigner function of the squeezed state $\ket{r}$ is obtained from \cref{eq:Wvac} by a symplectic coordinate transformation,
\begin{equation}\label{eq:Wsqu}
    W_{\ket{r}}(x,p) = \frac{2}{\pi} \exp( -2\left( e^{-2r} x^2 + e^{2r} p^2 \right) ) \,.
\end{equation}
The Wigner $\alpha$-R\'enyi divergence of $\ket{r}$ relative to $\ket{0}$ is thus
\begin{align}
    D_\alpha(W_{\ket{r}}\|W_{\ket{0}})
    &= \frac{1}{\alpha-1} \log \frac{2}{\pi} \int_{\mathbb{R}^2} \exp(-2\alpha\left( e^{-2r} x^2 + e^{2r} p^2 \right) -2(1-\alpha)(x^2+p^2)) \;  {\rm d}x{\rm d}p \,, \nonumber\\
    &= -\frac{1}{2(\alpha-1)}\log\left( 1 - 4\alpha(\alpha-1) \sinh^2(r) \right) \,, \label{eq:Dr0}
\end{align}
for $\alpha \in (1, \infty)$.
In the first line, we substituted the Wigner functions from \cref{eq:Wvac,eq:Wsqu}, and in the second line, we integrated using \cref{eq:gauss_integral} which is well-defined precisely when the argument of the logarithm is greater than 0, or, equivalently, when
\begin{equation}
    |r| < \sinh^{-1}\left(\frac{1}{2\sqrt{\alpha(\alpha-1)}}\right) = \coth^{-1}(2\alpha-1) \,.
\end{equation}
Otherwise, $D_\alpha(W_{\ket{r}}\|W_{\ket{0}})$ diverges to $+\infty$.
As the R\'enyi order decreases, $\alpha \rightarrow 1^+$, the bound on admissible input squeezing diverges, $\coth^{-1}(2\alpha-1) \rightarrow +\infty$.
Moreover, \cref{eq:Dr0} tends to the KL divergence
\begin{equation}
    D_1(W_{\ket{r}}\|W_{\ket{0}}) 
    = 2\sinh^2(r) \,,
\end{equation}
which follows for example by applying L'H\^opital's rule in \cref{eq:Dr0}.

\subsection{Reference transformation for noisy LONs}

We consider depth-1 noisy LONs on $N$ modes described by one layer of a unitary LON followed by one layer of attenuating noise $(\N^{{\rm att}}_{\eta, \Sigma})^{\otimes N}$ as described in \cref{eq:noisyLON} of the main text.
We show that this set of operations maps copies of vacuum to possibly subnormalized copies of thermal states with mean photon number $\Sigma/2$.

The Gaussian noise channel $\N^{\rm att}_{\eta, \Sigma} = \N^{\rm GRN}_{\Sigma} \circ \N^{\rm loss}_\eta$ acts in phase space according to
\begin{equation}\label{eq:attenuator}
    \bmd \mapsto \sqrt{\eta} \bmd \,, \qquad V \mapsto \eta V + \left(1-\eta + \Sigma\right) \id \,.
\end{equation}
Hence, the Wigner kernel of this channel is
\begin{equation}\label{eq:attkernel}
    G_{\eta, \Sigma}(x,p \;|\; x',p') = \frac{2}{\pi(1-\eta + \Sigma)}\exp(-\frac{2}{1-\eta + \Sigma}\, \Big((x - \sqrt{\eta}x')^2 + (p - \sqrt{\eta}p')^2\Big) ) \,.
\end{equation}
It is readily verified that this is a stochastic integral operator as it is positive for all $(x,p) \in \mathbb{R}^2$ and normalized, i.e. $\int_{\mathbb{R}^2} {\rm d}x{\rm d}p\ G_{\eta, \Sigma}(x,p \;|\; x',p') = 1$ for all $(x',p') \in \mathbb{R}^2$, which follows directly from the Gaussian integral in \cref{eq:gauss_integral}.

Consider the thermal state $\rho_{\rm th}(\bar{n}) = e^{-\beta\omega \hat a^\dag \hat a}/\Tr(e^{-\beta\omega \hat a^\dag \hat a})$ with mean photon number
\begin{equation}
    \bar{n} \coloneqq \langle \hat a^\dag \hat a\rangle = (e^{\beta\omega} - 1)^{-1} \,.
\end{equation}
This state can be modeled as an attenuated vacuum state with $\Sigma = 2\bar{n}$ and arbitrary $\eta \in [0,1]$,
\begin{equation}
    \rho_{\rm th}(\bar{n}) 
    = \sum_{n=0}^\infty \frac{\bar{n}^n}{(1+\bar{n})^{n+1}} \ketbra{n}
    = \N^{\rm GRN}_{2\bar{n}} (\ketbra{0})
    \,.
\end{equation}
Therefore, its Wigner function is obtained by applying the kernel from \cref{eq:attkernel} on the vacuum Wigner function,
\begin{align}
    W_{\rho_{\rm th}(\bar{n})}(x,p) 
    &= \int_{\mathbb{R}^2} \ G_{\eta, \Sigma}(x,p \;|\; x',p')\ W_{\ketbra{0}}(x',p') {\rm d}x' {\rm d}p' \nonumber\\
    &= \frac{4}{\pi^2(1-\eta + \Sigma)} \int_{\mathbb{R}} \ \exp(-2x'^2-\frac{2}{1-\eta + \Sigma}(x - \sqrt{\eta}x')^2) {\rm d}x' \int_{\mathbb{R}} \ \exp(-2p'^2-\frac{2}{1-\eta + \Sigma}(p - \sqrt{\eta}p')^2) {\rm d}p' \nonumber\\
    &= \frac{2}{\pi(2\bar{n}+1)} \exp(-\frac{2}{2\bar{n}+1}(x^2 + p^2)) \,, \label{eq:Wthermal}
\end{align}
where in the second line we have factored out the integrals over $x'$ and $p'$ into a product, and the third line follows from two applications of the Gaussian integral in \cref{eq:gauss_integral}.
We also note that isotropic Gaussian-random noise is additive in the sense that applying $\N^{\rm GRN}_{\Sigma}$ on a thermal state $\rho_{\rm th}(\bar{n})$ results in a new thermal state $\rho_{\rm th}(\bar{n}')$ with mean photon number $\bar{n}' = \bar{n} + \Sigma/2$.

In the following, we will often replace $\Sigma$ with
\begin{equation}
    \lambda \coloneqq \frac{1}{1+\Sigma} = \frac{1}{2\bar{n}+1} = \frac{e^{\beta\omega}-1}{e^{\beta\omega}+1} \,,
\end{equation}
for readability.
Thus, $2\bar{n}+1 = \lambda^{-1}$, and we can rewrite \cref{eq:Wthermal} as
\begin{equation}\label{eq:Wthermal2}
     W_{\rho_{\rm th}\left(\bar{n}\right)}(x,p) 
     = \frac{2\lambda}{\pi} \exp(-2\lambda(x^2 + p^2)) \,.
\end{equation}

\subsection{Wigner function of attenuated cat states}

Recall the (even) cat state
\begin{equation}\label{eq:cat}
    \ket{\alpha}_{\rm cat} \coloneqq A (\ket{\alpha} + \ket{-\alpha}) \,, \quad A \coloneqq \frac{1}{\sqrt{2(1+e^{-2|\alpha|^2})}} \,,
\end{equation}
for some displacement $\alpha = X + iP$ where $X,P \in \mathbb{R}$.
The literature often distinguishes between even cat states $\ket{\alpha}_{\rm cat} \propto \ket{\alpha} + \ket{-\alpha}$ and odd cat states $\ket{\alpha}_{\rm cat} \propto \ket{\alpha} - \ket{-\alpha}$.
Here, we focus on even cat states, but we note that our analysis is directly applicable to odd cat state conversions as well.

Following our conventions, the Wigner function of the cat state in \cref{eq:cat} is
\begin{equation}\label{eq:Wcat}
    W_{\ketbra{\alpha}_{\rm cat}}(x,p) 
    = \frac{1}{\pi\left(1+e^{-2|\alpha|^2}\right)}\Big( e^{-2(x - X)^2 -2(p - P)^2} + e^{-2(x + X)^2 -2(p + P)^2} + 2e^{-2(x^2+p^2)}\cos(4(pX-xP)) \Big) \,.
\end{equation}
We provide a standard derivation (see also~\cite{kim1992}) for the Wigner function of an attenuated cat state via convolution of the Wigner function by the attenuating kernel in \cref{eq:attkernel}.
\begin{lemma}\label[lemma]{attcatW}
    Consider the attenuated cat state
    \begin{equation}
        \rho(\alpha; \eta, \Sigma) \coloneqq \N^{\rm att}_{\eta, \Sigma}(\ketbra{\alpha}_{\rm cat}) \,.
    \end{equation}
    Then, the Wigner function of $\rho(\alpha; \eta, \Sigma)$ is
    \begin{align}
        W_{\rho(\alpha; \eta, \Sigma)}(x,p) &= \frac{\lambda}{\pi\left(1+e^{-2|\alpha|^2}\right)}\Big( \exp(-2\lambda(x - \sqrt{\eta}X)^2 -2\lambda(p - \sqrt{\eta}P)^2) + \exp(-2\lambda(x + \sqrt{\eta}X)^2 -2\lambda(p + \sqrt{\eta}P)^2) \nonumber\\
        &\hspace{76pt}+ 2\exp(-2\lambda(x^2+p^2) -2(1-\lambda\eta)|\alpha|^2)\ \cos(4\lambda\sqrt{\eta}(pX-xP)) \Big) 
        \,, \label{eq:Wcatnoisy}
    \end{align}
    where $\lambda = 1/(1+\Sigma)$.
\end{lemma}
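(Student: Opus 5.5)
The plan is to compute the output Wigner function directly as an integral transform. We write $W_{\rho(\alpha;\eta,\Sigma)} = \int G_{\eta,\Sigma}(x,p\,|\,x',p')\, W_{\ketbra{\alpha}_{\rm cat}}(x',p')\,{\rm d}x'{\rm d}p'$, using the attenuator kernel in \cref{eq:attkernel} and the cat Wigner function in \cref{eq:Wcat}. By linearity, we split \cref{eq:Wcat} into its three terms: the two displaced Gaussians centred at $\pm(X,P)$ and the interference term $2e^{-2(x'^2+p'^2)}\cos(4(p'X-x'P))$. Each term factorises in $x'$ and $p'$, so every integral reduces to products of one-dimensional instances of \cref{eq:gauss_integral}. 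The overall prefactor $1/(\pi(1+e^{-2|\alpha|^2}))$ is carried through unchanged.

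For the two Gaussian terms, the computation is identical to the derivation of \cref{eq:Wthermal}, except that the input vacuum is displaced by $\pm(X,P)$. Equivalently, we use \cref{eq:attenuator}: the displacement maps to $\pm\sqrt{\eta}(X,P)$, and the covariance $\id$ maps to $(\eta+1-\eta+\Sigma)\id=(1+\Sigma)\id=\lambda^{-1}\id$. Each term therefore becomes $\lambda\,\exp(-2\lambda[(x\mp\sqrt\eta X)^2+(p\mp\sqrt\eta P)^2])$, once we account for the fact that a normalized Gaussian $\frac{2}{\pi}e^{-2|\cdot|^2}$ becomes $\frac{2\lambda}{\pi}e^{-2\lambda|\cdot|^2}$. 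Dividing by the factor $2$ that is absorbed into the prefactor, this gives the first two terms of \cref{eq:Wcatnoisy}.

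The interference term is the main step, and I would treat it by writing $\cos(4(p'X-x'P))=\mathrm{Re}\, e^{4i(p'X-x'P)}$ and applying \cref{eq:gauss_integral} with complex linear coefficient $b$. For the $x'$ integral we take $a=2+\frac{2\eta}{1-\eta+\Sigma}$, $b=\frac{4\sqrt\eta x}{1-\eta+\Sigma}-4iP$, and $c=-\frac{2x^2}{1-\eta+\Sigma}$; the $p'$ integral is analogous with $b=\frac{4\sqrt\eta p}{1-\eta+\Sigma}+4iX$. Using $a=\frac{2(1+\Sigma)}{1-\eta+\Sigma}$, we then simplify $b^2/4a+c$. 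The real quadratic part combines to $-2\lambda x^2$, since $-\frac{2}{1-\eta+\Sigma}\bigl(1-\frac{\eta}{1+\Sigma}\bigr)=-2\lambda$. The cross term gives the phase $-4i\lambda\sqrt\eta\,xP$. The imaginary-squared part gives $-2P^2\frac{1-\eta+\Sigma}{1+\Sigma}=-2(1-\lambda\eta)P^2$.

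Adding the $p'$ contribution and taking the real part, we obtain $\exp(-2\lambda(x^2+p^2)-2(1-\lambda\eta)|\alpha|^2)\cos(4\lambda\sqrt\eta(pX-xP))$. The prefactors $\sqrt{\pi/a}$ together with the kernel normalization combine to $\lambda$, exactly as in the Gaussian terms. The only real obstacle is bookkeeping these algebraic simplifications, and one sanity check is useful there: at $\eta=1$ and $\Sigma=0$ we have $\lambda=1$, so the result must reduce to \cref{eq:Wcat}.
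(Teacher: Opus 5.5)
Your proposal is correct and follows essentially the paper's route: convolve \cref{eq:Wcat} with the attenuator kernel, factorize each term in $x'$ and $p'$, and apply the complex Gaussian integral. The only cosmetic difference is that you write the cosine as the real part of a single exponential, where the paper uses the $\mathrm{Re}\cdot\mathrm{Re}+\mathrm{Im}\cdot\mathrm{Im}$ split; the two are equivalent because all other factors are real, and your $a$, $b$, $c$ and the resulting $-2\lambda x^2$, phase, and $-2(1-\lambda\eta)P^2$ terms all check out. One bookkeeping slip: the ``dividing by the factor $2$'' in your Gaussian-term paragraph is spurious, since each term $e^{-2(x'\mp X)^2-2(p'\mp P)^2}$ maps directly to $\lambda e^{-2\lambda[(x\mp\sqrt{\eta}X)^2+(p\mp\sqrt{\eta}P)^2]}$, and multiplying by the unchanged prefactor $1/(\pi(1+e^{-2|\alpha|^2}))$ already gives the first two terms of \cref{eq:Wcatnoisy}.
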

\noindent
As sanity checks, we can evaluate the Wigner function in \cref{eq:Wcatnoisy} for certain parameter values. For $\eta = 1$ and $\Sigma = 0$ ($\lambda = 1$), we obtain the Wigner function of a pure cat state from \cref{eq:Wcat}.
For $\eta = 0$ or $\alpha = 0$, we obtain the Wigner function of a thermal state from \cref{eq:Wthermal2}. 
\begin{proof}
    The interference term in \cref{eq:Wcat} can be written using the identity
    \begin{equation}\label{eq:cosdoubleangle}
        \cos\left(4(pX - xP)\right) = \operatorname{Re}\left(e^{i 4 P x}\right)\operatorname{Re}\left(e^{i 4 X p}\right) + \operatorname{Im}\left(e^{i 4 P x}\right)\operatorname{Im}\left(e^{i 4 X p}\right) \,.
    \end{equation}
    Substituting $1 - \eta + \Sigma = \frac{1-\lambda\eta}{\lambda}$ in the attenuating kernel from \cref{eq:attkernel}, and using \cref{eq:cosdoubleangle} allows us to write the Wigner function as
    \begin{align}
    W_{\rho(\alpha; \eta, \Sigma)}(x,p)
    &= \int_{\mathbb{R}^2} {\rm d}x' {\rm d}p'\ G_{\eta, \Sigma}(x,p \;|\; x',p')\ W_{\ketbra{\alpha}_{\rm cat}}(x',p') \nonumber\\
    &= \frac{2\lambda}{\pi^2(1 - \lambda\eta)\left(1+e^{-2|\alpha|^2}\right)} \int_{\mathbb{R}^2} {\rm d}x' {\rm d}p'\ \Bigg( \nonumber\\
    &\hspace{25pt} \exp(-\frac{2\lambda}{1 - \lambda\eta}(x - \sqrt{\eta}x')^2 -2(x' - X)^2)\ \exp(-\frac{2\lambda}{1 - \lambda\eta}(p - \sqrt{\eta}p')^2 -2(p' - P)^2) \nonumber\\
    &\hspace{15pt}+ \exp(-\frac{2\lambda}{1 - \lambda\eta}(x - \sqrt{\eta}x')^2 -2(x' + X)^2)\ \exp(-\frac{2\lambda}{1 - \lambda\eta}(p - \sqrt{\eta}p')^2 -2(p' + P)^2) \nonumber\\
    &\hspace{15pt}+ 2\operatorname{Re}\left(\exp(-\frac{2\lambda}{1 - \lambda\eta}(x - \sqrt{\eta}x')^2 -2x'^2 +i4Px')\right)\ \operatorname{Re}\left(\exp(-\frac{2\lambda}{1 - \lambda\eta}(p - \sqrt{\eta}p')^2 -2p'^2 +i4Xp')\right) \nonumber\\
    &\hspace{15pt}+ 2\operatorname{Im}\left(\exp(-\frac{2\lambda}{1 - \lambda\eta}(x - \sqrt{\eta}x')^2 -2x'^2 +i4Px')\right)\ \operatorname{Im}\left(\exp(-\frac{2\lambda}{1 - \lambda\eta}(p - \sqrt{\eta}p')^2 -2p'^2 +i4Xp')\right) \nonumber\\
    &\Bigg) \,.
    \end{align}
    Then, we evaluate each term in the expression with the help of \cref{eq:gauss_integral} to obtain
    \begin{align}
    W_{\rho(\alpha; \eta, \Sigma)}(x,p)
    &= \frac{2\lambda}{\pi^2(1 - \lambda\eta)\left(1+e^{-2|\alpha|^2}\right)} \Bigg( \nonumber\\
    &\hspace{25pt} \frac{\pi(1 - \lambda\eta)}{2}\ \exp(-2\lambda(x - \sqrt{\eta}X)^2)\ \exp(-2\lambda(p - \sqrt{\eta}P)^2) \nonumber\\
    &\hspace{15pt}+ \frac{\pi(1 - \lambda\eta)}{2}\ \exp(-2\lambda(x + \sqrt{\eta}X)^2)\ \exp(-2\lambda(p + \sqrt{\eta}P)^2) \nonumber\\
    &\hspace{15pt}+ \pi(1 - \lambda\eta)\ \exp(-2\lambda x^2 -2 (1-\lambda\eta) P^2)\ \cos(4\lambda\sqrt{\eta}xP)\ \exp(-2\lambda p^2 -2 (1-\lambda\eta) X^2)\ \cos(4\lambda\sqrt{\eta}pX) \nonumber\\
    &\hspace{15pt}+ \pi(1 - \lambda\eta)\ \exp(-2\lambda x^2 -2 (1-\lambda\eta) P^2)\ \sin(4\lambda\sqrt{\eta}xP)\ \exp(-2\lambda p^2 -2 (1-\lambda\eta) X^2)\ \sin(4\lambda\sqrt{\eta}pX) \nonumber\\
    &\Bigg) \,.
    \end{align}
    Collecting terms, we obtain \cref{eq:Wcatnoisy}.
\end{proof}

The Wigner function in \cref{eq:Wcatnoisy} can also be expressed as
\begin{equation}\label{eq:Wcatnoisy2}
    W_{\rho(\alpha; \eta, \Sigma)}(x, p) 
    = 
    \frac{2\lambda e^{-2\lambda(x^2+p^2)}}{\pi\left(1+e^{-2|\alpha|^2}\right)} \left( e^{-2\lambda\eta|\alpha|^2} \cosh\Big(4\lambda\sqrt{\eta}(xX+pP)\Big) + e^{-2(1-\lambda\eta)|\alpha|^2} \cos\Big(4\lambda\sqrt{\eta}(pX-xP)\Big) \right)
    \,.
\end{equation}
This expression makes apparent when the Wigner function becomes non-negative in the non-trivial cases when parameters $\alpha, \lambda, \eta$ are non-zero.
The prefactor is strictly positive, so we need to check the minimum of the function
\begin{equation}
    f(u,v) \coloneqq e^{-2\lambda\eta|\alpha|^2} \cosh\Big(4\lambda\sqrt{\eta}u\Big) + e^{-2(1-\lambda\eta)|\alpha|^2} \cos\Big(4\lambda\sqrt{\eta}v\Big)
    \,,
\end{equation}
where the lines $u(x,p) = xX + pP$ and $v(x,p) = pX - xP$ are perpendicular in phase space.
Hence, to find the minimum of $f(u,v)$, we can find the minimum of $\cosh(4\lambda\sqrt{\eta} u)$ and $\cos(4\lambda\sqrt{\eta}v)$ independently.
Choosing
\begin{equation}\label{eq:uvmin}
    u = 0 
    \,, \qquad
    v = \frac{\pi}{4\lambda\sqrt{\eta}}
    \,,
\end{equation}
we obtain
\begin{equation}
    \min_{(x,p) \in \mathbb{R}^2} f(u(x,p),v(x,p))
    = e^{-2\lambda\eta|\alpha|^2} - e^{-2(1-\lambda\eta)|\alpha|^2}
    \,.
\end{equation}
Therefore, the Wigner function is non-negative everywhere if and only if $e^{-2\lambda\eta|\alpha|^2} \ge e^{-2(1-\lambda\eta)|\alpha|^2}$, or, equivalently
\begin{equation}
    \frac{\eta}{1+\Sigma} \le \frac{1}{2}
    \,.
\end{equation}

\subsection{Wigner 2-R\'enyi divergence for attenuated cat states relative to a thermal state}

We prove \cref{cat2renyi} which derives the Wigner 2-R\'enyi divergence of an attenuated cat state $\N^{\rm att}_{\eta, \Sigma}(\ketbra{\alpha}_{\rm cat})$ relative to a thermal state $\rho_{\rm th}(\bar{n})$.
    
\begingroup
\allowdisplaybreaks[1]
\begin{proof}
    Let $\lambda = 1/(1+\Sigma)$ and $\lambda_{\rm env} = 1/(1+\Sigma_{\rm env})$, where $\Sigma_{\rm env} = 2\bar{n}$.
    Plugging the Wigner functions that we calculated in \cref{eq:Wthermal2} and \cref{eq:Wcatnoisy} into the 2-R\'enyi divergence, and using the cosine identity from \cref{eq:cosdoubleangle}, we obtain
    \begin{align}
        D_2(W_{\rho(\alpha; \eta, \Sigma)} \| W_{\rho_{\rm th}(\bar{n})})
        &= \log\int  \frac{W^2_{\rho(\alpha; \eta, \Sigma)}(x,p)}{W_{\rho_{\rm th}(\bar{n})}(x,p)} {\rm d}x{\rm d}p \nonumber\\
        &= \log \frac{\lambda^2\lambda_{\rm env}^{-1}}{2\pi\left(1+e^{-2|\alpha|^2}\right)^2} + \log\int_{\mathbb{R}^2}  e^{2\lambda_{\rm env}(x^2 + p^2)} \Bigg( e^{-2\lambda(x - \sqrt{\eta}X)^2 -2\lambda(p - \sqrt{\eta}P)^2} \nonumber\\
        &\hspace{20pt}+ e^{-2\lambda(x + \sqrt{\eta}X)^2 -2\lambda(p + \sqrt{\eta}P)^2} + 2e^{-2\lambda(x^2+p^2) -2(1-\lambda\eta)|\alpha|^2} \cos(4\lambda\sqrt{\eta}(pX-xP)) \Bigg)^2 {\rm d}x{\rm d}p 
        \,.
    \end{align}
    Expanding the square, and using the identity $2\cos^2(x) = 1+\cos(2x)$, we get
    \begin{align}
        D_2(W_{\rho(\alpha; \eta, \Sigma)} \| W_{\rho_{\rm th}(\bar{n})})
        &= \log\frac{\lambda^2\lambda_{\rm env}^{-1}}{2\pi\left(1+e^{-2|\alpha|^2}\right)^2} + \log \int_{\mathbb{R}^2} {\rm d}x{\rm d}p\ \Bigg( \nonumber\\
        &\hspace{30pt} e^{2\lambda_{\rm env} x^2 - 4\lambda(x - \sqrt{\eta}X)^2}\ e^{2\lambda_{\rm env} p^2 - 4\lambda(p - \sqrt{\eta}P)^2} \nonumber\\
        &\hspace{20pt}+ e^{2\lambda_{\rm env} x^2 - 4\lambda(x + \sqrt{\eta}X)^2}\ e^{2\lambda_{\rm env} p^2 - 4\lambda(p + \sqrt{\eta}P)^2} \nonumber\\
        &\hspace{20pt}+ 2e^{2\lambda_{\rm env} x^2 - 2\lambda(x - \sqrt{\eta}X)^2 - 2\lambda(x + \sqrt{\eta}X)^2}\ e^{2\lambda_{\rm env} p^2 - 2\lambda(p - \sqrt{\eta}P)^2 - 2\lambda(p + \sqrt{\eta}P)^2} \nonumber\\
        &\hspace{20pt}+ 4e^{-2(1-\lambda\eta)|\alpha|^2}\ e^{2(\lambda_{\rm env}-\lambda) x^2 - 2\lambda(x - \sqrt{\eta}X)^2}\ e^{2(\lambda_{\rm env}-\lambda) p^2 - 2\lambda(p - \sqrt{\eta}P)^2}\cos(4\lambda\sqrt{\eta}(pX-xP)) \nonumber\\
        &\hspace{20pt}+ 4e^{-2(1-\lambda\eta)|\alpha|^2}\ e^{2(\lambda_{\rm env}-\lambda) x^2 - 2\lambda(x + \sqrt{\eta}X)^2}\ e^{2(\lambda_{\rm env}-\lambda) p^2 - 2\lambda(p + \sqrt{\eta}P)^2} \cos(4\lambda\sqrt{\eta}(pX-xP)) \nonumber\\
        &\hspace{20pt}+ 2e^{-4(1-\lambda\eta)|\alpha|^2}\ e^{2(\lambda_{\rm env}-2\lambda) x^2}\ e^{2(\lambda_{\rm env}-2\lambda) p^2} \nonumber\\
        &\hspace{20pt}+ 2e^{-4(1-\lambda\eta)|\alpha|^2}\ e^{2(\lambda_{\rm env}-2\lambda) x^2}\ e^{2(\lambda_{\rm env}-2\lambda) p^2} \cos(8\lambda\sqrt{\eta}(pX-xP)) \nonumber\\
        &\Bigg)
        \,.
    \end{align}
    
    Then, we use the cosine identity from \cref{eq:cosdoubleangle} and bring the exponents to standard quadratic form. 
    We further denote $\bar{\lambda} \coloneqq 2\lambda - \lambda_{\rm env}$, leading to
    \begin{align}
        D_2(W_{\rho(\alpha; \eta, \Sigma)} \| W_{\rho_{\rm th}(\bar{n})}) 
        &= \log\frac{\lambda^2(2\lambda-\bar{\lambda})^{-1}}{2\pi\left(1+e^{-2|\alpha|^2}\right)^2} + \log \int_{\mathbb{R}^2}  \Bigg( \nonumber\\
        & e^{-2\bar{\lambda} x^2 + 8\lambda\sqrt{\eta}X x - 4\lambda\eta X^2}\ e^{-2\bar{\lambda} p^2 + 8\lambda\sqrt{\eta}P p - 4\lambda\eta P^2} \nonumber\\
        &\hspace{-10pt}+ e^{-2\bar{\lambda} x^2 - 8\lambda\sqrt{\eta}X x - 4\lambda\eta X^2}\ e^{-2\bar{\lambda} p^2 - 8\lambda\sqrt{\eta}P p - 4\lambda\eta P^2} \nonumber\\
        &\hspace{-10pt}+ 2e^{-2\bar{\lambda} x^2 - 4\lambda\eta X^2}\ e^{-2\bar{\lambda} p^2 - 4\lambda\eta P^2} \nonumber\\
        &\hspace{-10pt}+ 4e^{-2(1-\lambda\eta)|\alpha|^2}\operatorname{Re}\left(e^{-2\bar{\lambda} x^2 + 4\lambda\sqrt{\eta}X x + i4\lambda\sqrt{\eta}P x - 2\lambda\eta X^2}\right)\ \operatorname{Re}\left(e^{-2\bar{\lambda} p^2 + 4\lambda\sqrt{\eta}P p + i4\lambda\sqrt{\eta}X p - 2\lambda\eta P^2}\right) \nonumber\\
        &\hspace{-10pt}+ 4e^{-2(1-\lambda\eta)|\alpha|^2}\operatorname{Im}\left(e^{-2\bar{\lambda} x^2 + 4\lambda\sqrt{\eta}X x + i4\lambda\sqrt{\eta}P x - 2\lambda\eta X^2}\right)\ \operatorname{Im}\left(e^{-2\bar{\lambda} p^2 + 4\lambda\sqrt{\eta}P p + i4\lambda\sqrt{\eta}X p - 2\lambda\eta P^2}\right) \nonumber\\
        &\hspace{-10pt}+ 4e^{-2(1-\lambda\eta)|\alpha|^2}\operatorname{Re}\left(e^{-2\bar{\lambda} x^2 - 4\lambda\sqrt{\eta}X x + i4\lambda\sqrt{\eta}P x - 2\lambda\eta X^2}\right)\ \operatorname{Re}\left(e^{-2\bar{\lambda} p^2 - 4\lambda\sqrt{\eta}P p + i4\lambda\sqrt{\eta}X p - 2\lambda\eta P^2}\right) \nonumber\\
        &\hspace{-10pt}+ 4e^{-2(1-\lambda\eta)|\alpha|^2}\operatorname{Im}\left(e^{-2\bar{\lambda} x^2 - 4\lambda\sqrt{\eta}X x + i4\lambda\sqrt{\eta}P x - 2\lambda\eta X^2}\right)\ \operatorname{Im}\left(e^{-2\bar{\lambda} p^2 - 4\lambda\sqrt{\eta}P p + i4\lambda\sqrt{\eta}X p - 2\lambda\eta P^2}\right) \nonumber\\
        &\hspace{-10pt}+ 2e^{-4(1-\lambda\eta)|\alpha|^2}e^{-2\bar{\lambda} x^2}\ e^{-2\bar{\lambda} p^2} \nonumber\\
        &\hspace{-10pt}+ 2e^{-4(1-\lambda\eta)|\alpha|^2}\operatorname{Re}\left(e^{-2\bar{\lambda} x^2 + i8\lambda\sqrt{\eta}P x}\right)\ \operatorname{Re}\left(e^{-2\bar{\lambda} p^2 + i8\lambda\sqrt{\eta}X p}\right) \nonumber\\
        &\hspace{-10pt}+ 2e^{-4(1-\lambda\eta)|\alpha|^2}\operatorname{Im}\left(e^{-2\bar{\lambda} x^2 + i8\lambda\sqrt{\eta}P x}\right)\ \operatorname{Im}\left(e^{-2\bar{\lambda} p^2 + i8\lambda\sqrt{\eta}X p}\right) \nonumber\\
        &\Bigg){\rm d}x{\rm d}p
        \,.
    \end{align}
    The next step is to evaluate the expression term by term with the help of the Gaussian integral from \cref{eq:gauss_integral}.
    This integral blows up to $+\infty$ unless $\bar{\lambda} > 0$, or equivalently $\Sigma < 1+2\Sigma_{\rm env}$, in which case it instead results in
    \begin{align}
        D_2(W_{\rho(\alpha; \eta, \Sigma)} \| W_{\rho_{\rm th}(\bar{n})}) 
        &= \log\frac{\lambda^2(2\lambda - \bar{\lambda})^{-1}}{2\pi\left(1+e^{-2|\alpha|^2}\right)^2} + \log \Bigg( \nonumber\\
        &\hspace{30pt} \frac{\pi \bar{\lambda}^{-1}}{2}\ \exp\left(\frac{4\lambda(2\lambda-\bar{\lambda})\eta |\alpha|^2}{\bar{\lambda}}\right) \nonumber\\
        &\hspace{20pt}+ \frac{\pi \bar{\lambda}^{-1}}{2}\ \exp\left(\frac{4\lambda(2\lambda-\bar{\lambda})\eta |\alpha|^2}{\bar{\lambda}}\right) \nonumber\\
        &\hspace{20pt}+ \pi \bar{\lambda}^{-1}\ e^{-4\lambda\eta |\alpha|^2} \nonumber\\
        &\hspace{20pt}+ 2\pi \bar{\lambda}^{-1} e^{-2|\alpha|^2}\cos^2\left(\frac{4\lambda^2\eta XP}{\bar{\lambda}}\right) \nonumber\\
        &\hspace{20pt}+ 2\pi \bar{\lambda}^{-1} e^{-2|\alpha|^2}\sin^2\left(\frac{4\lambda^2\eta XP}{\bar{\lambda}}\right) \nonumber\\
        &\hspace{20pt}+ 2\pi \bar{\lambda}^{-1} e^{-2|\alpha|^2}\cos^2\left(\frac{4\lambda^2\eta XP}{\bar{\lambda}}\right) \nonumber\\
        &\hspace{20pt}+ 2\pi \bar{\lambda}^{-1} e^{-2|\alpha|^2}\sin^2\left(\frac{4\lambda^2\eta XP}{\bar{\lambda}}\right) \nonumber\\
        &\hspace{20pt}+ \pi \bar{\lambda}^{-1} e^{-4(1-\lambda\eta)|\alpha|^2} \nonumber\\
        &\hspace{20pt}+ \pi \bar{\lambda}^{-1} e^{-4(1-\lambda\eta)|\alpha|^2}\ \exp\left(-\frac{8\lambda^2\eta |\alpha|^2}{\bar{\lambda}}\right) \nonumber\\
        &\hspace{20pt}+ 0 \nonumber\\
        &\Bigg) \nonumber\\
        &\hspace{-4cm}= \log\left( \frac{\zeta^2(2\zeta-1)^{-1}}{2(1+e^{-2|\alpha|^2})^2} \left( e^{4\lambda(2\zeta-1)\eta |\alpha|^2} + e^{-4\lambda\eta |\alpha|^2} + 4e^{-2|\alpha|^2} + e^{-4(1-\lambda\eta) |\alpha|^2}  + e^{-4(1-\lambda\eta) |\alpha|^2} e^{-8\lambda\zeta\eta |\alpha|^2}\right) \right) \nonumber\\
        &= \log\left( \frac{\zeta^2(2\zeta-1)^{-1}\left( \cosh\left(2|\alpha|^2 + 4\lambda(\zeta-1)\eta |\alpha|^2\right) \cosh\left(4\lambda\zeta\eta |\alpha|^2\right) + 1 \right)}{\cosh(2|\alpha|^2) + 1} \right)
        \,,
    \end{align}
    where we introduced $\zeta \coloneqq \lambda / \bar{\lambda}$.
    Substituting $\zeta = \lambda^{-1} (1+2\bar{n}) K$, this is equivalent to \cref{eq:D2catnoisy}.
\end{proof}
\endgroup

\subsection{Bound on cat state conversions}

Here, we prove \cref{thm:catstates}.
\begin{proof}
The thermal state $\rho_{\rm th}(\bar{n})$ has displacement vector $\bmd = \bmo$ and covariance matrix $V_{\rm th} = (2\bar{n} + 1)\id_2$.
Let
\begin{equation}
    q_{\kappa, M}(\bm{r}) 
    = \prod_{i=1}^M e^{-2\kappa \|\bm{r}_i\|_2^2} \; W_{\rho_{\rm th}(\bar{n})}(\bm{r}_i)
    \,,
\end{equation}
and $\rho' \coloneqq \rho(\alpha'; \eta', \Sigma')$.
Then, the finiteness condition becomes
\begin{equation}
    J_2(W_{\rho'^{\otimes M}} \| q_{\kappa, M}) = J_2(W_{\rho'} \| q_{\kappa, 1})^M
    =
    \left( \int_{\mathbb{R}^{2}} W_{\rho'}(\bm{r})^2 \; e^{2\kappa \|\bm{r}\|_2^2} \; W_{\rho_{\rm th}(\bar{n})}(\bm{r})^{-1} \; {\rm d}^2\bm{r} \right)^M
    <
    \infty
    \,.
\end{equation}
The Wigner functions of the thermal state and of the attenuated cat state are given in \cref{eq:Wthermal2} and in \cref{attcatW}, respectively.
Examining their forms, we see that each term in the integrand of $J_2(W_{\rho'} \| q_\kappa)$ is a complex exponential in $\bm{r}$ with the quadratic prefactor equal to $-4(1 + \Sigma')^{-1} + 2\kappa + 2(1 + 2\bar{n})^{-1}$.
Hence, the finiteness condition is satisfied if and only if this prefactor is negative, namely when
\begin{equation}
    \kappa < \frac{2}{1 + \Sigma'} - \frac{1}{1 + 2\bar{n}}
    \,.
\end{equation}
We set the tail margin to half this maximum value,
\begin{equation}
    \kappa = \kappa_\star \coloneqq \frac{1 + 4\bar{n} - \Sigma'}{2(1+\Sigma')(1+2\bar{n})}
    \,,
\end{equation}
and calculate the exponent at $\kappa = \kappa_\star$ using \cref{thm:semicontinuity},
\begin{equation}
    c 
    = \frac{2\kappa_\star}{2\kappa_\star + \|V^{-1}\|_{\infty}}
    = \frac{1 + 4\bar{n} - \Sigma'}{2(2\bar{n}+1)}
    \,.
\end{equation}
\cref{thm:semicontinuity} also gives us the form of the prefactor at $\kappa = \kappa_\star$,
\begin{align}
    \Delta 
    &= \left( \frac{4\det\left(V_{\rm th}^{\oplus M}\right)^{1/4}}{\kappa_\star^{M/2}} \right)^c \frac{J_2\left(W_{\rho'^{\otimes M}} \| q_{\kappa_\star, M}\right)^{1-\frac{c}{2}}}{J_2\left(W_{\rho'^{\otimes M}} \| W_{\rho_{\rm th}(\bar{n})^{\otimes M}}\right)}
    \nonumber\\
    &= 4^c \left( \left(\frac{2\bar{n}+1}{\kappa_\star}\right)^{\frac{c}{2}} \frac{J_2(W_{\rho'} \parallel q_{\kappa_\star, 1})^{1 - \frac{c}{2}}}{J_2(W_{\rho'} \parallel W_{\rho_{\rm th}(\bar{n})})} \right)^M
    > 0
    \,.\label{eq:Delta_intermediate}
\end{align}
Finally, the bound on $p^2 (1-\Delta\epsilon^c)$ follows from \cref{corollaryprobbound_extended} with $t=1$ and the additivity of the Wigner 2-R\'enyi divergence.
\end{proof}
We note that $\Delta$ can be further calculated explicitly in terms of the physical parameters by evaluating the integrals $J_2(W_{\rho'} \parallel q_{\kappa_\star, 1}) \,, J_2(W_{\rho'} \parallel W_{\rho_{\rm th}(\bar{n})})$ via \cref{cat2renyi}.

\section{Fock Lorenz curves}\label{app:fock}

Here we derive the small-$s$ expressions for the Fock Wigner Lorenz curves presented in the main text.
\subsection{Multiphoton single-mode curves}
Recall the distribution function is $D^q_f(t)=\nu\{f>qt\}$. In polar coordinates, the Fock Wigner functions are $W_{\ket{n}}=\frac{2}{\pi} e^{-2r^2} (-1)^n L_n(4r^2)$ where $L_n$ is the $n^\text{th}$ Laguerre polynomial. The vacuum measure is $\nu=4r e^{-2r^2}dr$. The distribution function is then
\begin{equation}\label{fockdistr}
    D^{W_{\ket0}}_{W_{\ket n}}(t)=\nu \{ (-1)^n L_n(4r^2)>t \}.
\end{equation}
For large $t$, we approximate the Laguerre polynomial with its leading order term, $(-1)^n L_n(x)\approx x^n/n!$. Then, the inequality in \cref{fockdistr} simplifies,
\begin{align}\label{fockrlimit}
    (4r^2)^n/n!>t \iff r>\frac{1}{2} (tn!)^{1/2n}.
\end{align}
It should be understood that all subsequent statements assume the large-$t$ limit. Substituting \cref{fockrlimit} into \cref{fockdistr} and integrating, we obtain an explicit expression for the distribution function,
\begin{align}
 D^{W_{\ket0}}_{W_{\ket n}}(t)&\approx\nu \left\{ r>\frac{1}{2} (tn!)^{1/2n} \right\}\\
&=\int_{\frac{1}{2} (tn!)^{1/2n}}^\infty 4r e^{-2r^2} \mathrm{d}r\\
&=\exp{-\frac{1}{2} (tn!)^{1/n} }.
\end{align}
Setting this equal to $u$ and solving for $t$ gives the decreasing rearrangement at small $u$,
\begin{align}
    (W_{\ket n}^+)^{\downarrow,W_{\ket 0}}(u) \approx \frac{1}{n!} (-2 \log u)^n.
\end{align}
The positive Lorenz curve at small $s$ is then
\begin{align}
    \qqdeclor{W_{\ket n}}{W_{\ket 0}}&= \int_0^s (W_{\ket n}^+)^{\downarrow,W_{\ket 0}}(u) {\rm d}u\\
    &\approx\int_0^s \frac{1}{n!} (- 2 \log u)^n {\rm d}u\\
    &\approx 2^n  \frac{s(-\log s)^n}{n!}.
\end{align}

\subsection{Multimode single-photon curves}

Letting $r_k$ be the radial coordinate for the $k^{th}$ copy of $\ket{1}$, change variables to 
$y_k=t^{-1/n}  (4r_k^2-1)$.  
Then, 
\begin{align}
    D^{W_{\ket 0}^{\otimes n}}_{W_{\ket 1}^{\otimes n}}(t) &= \nu \{ y_1\ldots y_n  \geq 1\}\\
    &= t e^{-n/2}  \int_{-t^{-1/n}}^\infty \frac{1}{2} e^{- t^{1/n} y_1/2} \ldots \int_{-t^{-1/n}}^\infty \frac{1}{2} e^{- t^{1/n} y_n/2} \   \mathcal{I}_{y_1\ldots y_n \geq 1} {\rm d}y_1 \ldots {\rm d} y_n\label{indicatorfnfock}
\end{align}

A lower bound on the distribution function follows by observing that the indicator constraint is satisfied if all $y_i\geq 1$,
\begin{align}
    D^{W_{\ket 0}^{\otimes n}}_{W_{\ket 1}^{\otimes n}}(t) &\geq t e^{-n/2}  \int_{1}^\infty \frac{1}{2} e^{- t^{1/n} y_1/2} \ldots \int_{1}^\infty \frac{1}{2} e^{- t^{1/n} y_n/2}  {\rm d} y_1 \ldots {\rm d} y_n\\
    &= t e^{-n/2} \left(\int_{1}^\infty \frac{1}{2} e^{- t^{1/n} y/2}{\rm d} y\right)^n\\
    &= e^{-n({t^{1/n}}+1)/2}\\
    &\approx e^{-n{t^{1/n}}/2}.
\end{align}
In fact, this expression is tight at large $t$ as we now show. To do so, we need to show the small negative values of the integration variables do not contribute. Observe that the indicator function in \cref{indicatorfnfock} is only satisfied when an even number of variables are negative, WLOG the first $2m$. Their magnitude is bounded by $\abs{y_k}<1/t^{1/n}$, so for the constraint the other variables must satisfy $\prod_{i>2m} y_i>t^{2m/n}$. By the arithmetic-geometric-mean inequality, $\sum_{i>2m} y_i > (n-2m)  t^{2m/(n(n-2m))} $. But this means the exponent of the integrand in \cref{indicatorfnfock} is $t^{1/(n-2m)}$ so this contribution is not leading order compared to the $m=0$ term. Thus, for our large-$t$ calculation, we can restrict to nonnegative variables and simplify the integral,
\begin{align}
    D^{W_{\ket 0}^{\otimes n}}_{W_{\ket 1}^{\otimes n}}(t) &\approx t e^{-n/2}  \int_{0}^\infty \frac{1}{2} e^{- t^{1/n} y_1/2} \ldots \int_{0}^\infty \frac{1}{2} e^{- t^{1/n} y_n/2} \   \mathcal{I}_{y_1\ldots y_n \geq 1} \; {\rm d}y_1 \ldots {\rm d}y_n\\
    &= t e^{-n/2}  \int_{0}^\infty \frac{1}{2} e^{- t^{1/n} y_1/2} ... \int_{ \frac{1}{\prod_1^{n-1} y_i } }^\infty \frac{1}{2} e^{- t^{1/n} y_n/2} \; {\rm d} y_1  \ldots {\rm d} y_n \\
    &= t^{1-1/n} e^{-n/2}  \int_{0}^\infty \frac{1}{2} e^{- t^{1/n} y_1/2} ... \int_{0}^\infty \frac{1}{2} e^{-t^{1/n} y_{n-1}/2} \; \exp( -\frac{t^{1/n}}{2\prod_{i=1}^{n-1} y_i } ) \; {\rm d}y_1 \ldots {\rm d}y_{n-1} \\
    &=  t^{1-1/n} e^{-n/2} \frac{1}{2^{n-1}}  \int_{0}^\infty  ... \int_{0}^\infty  \exp( -\frac{t^{1/n}}{2} \left(\sum_{i=1}^{n-1} y_i+ \frac{1}{\prod_{i=1}^{n-1} y_i }\right) ) \; {\rm d}y_1 \ldots {\rm d}y_{n-1}
\end{align}
where in the second line we implement the constraint by changing the lower limit of integration of $y_n$ and in the third line we integrate over $y_n$. We can now approximate this integral by Laplace's method. At large $t$, the integral is dominated where the parenthetical term in the integrand's exponent is the smallest. This occurs when all $y_i=1$, as can be seen from the arithmetic-geometric mean inequality or by taking partial derivatives. The Hessian at this point is the $(n-1)\cross(n-1)$ matrix
\begin{equation}
    \begin{pmatrix}
        2 & 1 & 1 &\ldots\\
        1 & 2 & 1 &\ldots \\
        1 & 1 & 2 &\ldots\\
        \vdots & \vdots & \vdots &\ddots
    \end{pmatrix}=\begin{pmatrix}
        1 & 1 & 1 &\ldots\\
        1 & 1 & 1 &\ldots \\
        1 & 1 & 1 &\ldots\\
        \vdots & \vdots & \vdots &\ddots
    \end{pmatrix}+
    \begin{pmatrix}
        1 & 0 & 0 &\ldots\\
        0 & 1 & 0 &\ldots \\
        0 & 0 & 1 &\ldots\\
        \vdots & \vdots & \vdots &\ddots
    \end{pmatrix}.
\end{equation}
The decomposition makes it clear that the Hessian has one eigenvalue $n$ and the rest unity, so its determinant is $n$. By Laplace's formula, this then gives 
\begin{align}
    D^{W_{\ket 0}^{\otimes n}}_{W_{\ket 1}^{\otimes n}}(t) &\approx t^{1-1/n} e^{-n/2} \frac{1}{2^{n-1}} \left(\frac{4\pi}{t^{1/n}}\right)^\frac{n-1}{2}  \frac{e^{-n \frac{t^{1/n}}{2} } }{\sqrt{n}}\\
    &= t^\frac{1-1/n}{2} \frac{e^{-n/2}} {\sqrt{n}} \pi^\frac{n-1}{2}  e^{-n \frac{t^{1/n}}{2} }.
\end{align}
Therefore, to leading order, $\log D^{W_{\ket 0}^{\otimes n}}_{W_{\ket 1}^{\otimes n}}(t) \approx -\frac{n}{2} t^{1/n}$.

To find the decreasing rearrangement, we set the distribution function equal to $u=e^{-n{t^{1/n}}/2}$ and invert,
\begin{align}
    \left(\frac{-2}{n}\log u\right)^n&=t,
\end{align}
so 
\begin{equation}
    ({W_{\ket 1}^{\otimes n}}^+)^{\downarrow,W_{\ket0}^{\otimes n}}(u) \approx \left(\frac{-2}{n}\log u\right)^n
\end{equation}
at small $u$.
Then, keeping only the leading order term as $s\rightarrow 0$,
\begin{align}
    \qqdeclor{W_{\ket 1}^{\otimes n}}{W_{\ket0}^{\otimes n}}&\approx \int_0^s  \left(\frac{-2}{n}\log u\right)^n {\rm d}u \\
    &\approx s\left(\frac{-2}{n}\log s\right)^n \\
    &=\frac{2^n}{n^n} s (-\log s)^n.
\end{align}

\bibliography{bib}

\end{document}